\def\VersionFinal{}

\ifdefined\VersionLong%
	\newcommand{\LongVersion}[1]{#1}
	\newcommand{\ShortVersion}[1]{}
\else
	\newcommand{\LongVersion}[1]{}
	\newcommand{\ShortVersion}[1]{#1}
\fi

\LongVersion{%
\documentclass[onecolumn,journal]{IEEEtran}
}
\ShortVersion{%
\documentclass[lettersize,journal]{IEEEtran}
}
\usepackage{array}
\usepackage[caption=false]{subfig}
\usepackage{textcomp}
\usepackage{stfloats}
\usepackage{url}
\usepackage{verbatim}
\usepackage{graphicx}
\usepackage{cite}
\usepackage{siunitx}

\ifdefined\VersionWithComments%
	\usepackage[colorinlistoftodos,textsize=footnotesize]{todonotes}
	\setuptodonotes{inline}
\else
	\usepackage[disable]{todonotes}
\fi
\newcommand{\wantcite}[1][]{\todo{cite\IfBlankTF{#1}{}{: #1}}}

\newcommand{\gennote}[3]{\todo[linecolor=#2,backgroundcolor=#2!25,bordercolor=#2]{#3: #1}}

\newcommand{\mw}[1]{\gennote{#1}{orange}{MW}}

\newcommand{\instructions}[1]{{\gennote{\bfseries #1}{red}{Instructions}}}
\newcommand{\reviewer}[2]{{\gennote{``#2''}{purple}{Reviewer #1}}}
\newcommand{\revision}[1]{{#1}}
\newcommand{\finalversion}[1]{{#1}}

\usepackage[utf8]{inputenc}
\usepackage[english]{babel} %
\usepackage{csquotes}
\usepackage[normalem]{ulem}

\usepackage{setspace}

\newcommand{\Continue}{\textbf{continue}}

\usepackage[ruled,lined,linesnumbered,noend]{algorithm2e}
\SetKwInOut{Input}{input}
\SetKwInOut{Output}{output}
\SetKw{KwPush}{push}
\SetKw{KwPop}{pop}
\SetKw{KwLet}{let}
\SetKw{KwBe}{be}
\SetKw{KwWith}{with}
\SetKw{KwFrom}{from}
\SetKw{KwCompute}{compute}
\SetKw{KwOr}{or}
\SetKw{KwAnd}{and}
\SetKwProg{Fn}{Function}{:}{}

\makeatletter
\renewcommand{\nllabel}[1]
 {{\let\@currentlabel\algocf@currentlabel
  \let\@currentcounter\algocf@currentcounter
  \label{#1}}}%

\renewcommand{\algocf@nl@sethref}[1]{%
  \renewcommand{\theHAlgoLine}{\thealgocfproc.#1}%
  \hyper@refstepcounter{AlgoLine}%
  \gdef\algocf@currentlabel{#1}%
  \gdef\algocf@currentcounter{AlgoLine}%
 }%
\makeatother

\usepackage{paralist} %
\usepackage{xspace}

\newenvironment{ienumeration}
	{\ifdefined\VersionLong\begin{enumerate}\else\begin{inparaenum}[\itshape i\upshape)]\fi}
	{\ifdefined\VersionLong\end{enumerate}\else\end{inparaenum}\fi}

\usepackage{amsthm} %
\usepackage{thmtools} %
\usepackage{amsmath} %
\usepackage{amssymb} %
\usepackage{mathtools} %
\usepackage{multirow}
\usepackage{amsfonts}

\usepackage{listings}
\usepackage{color}

\usepackage{mathpartir}
\usepackage{tikz}
\usetikzlibrary{automata, positioning, arrows.meta}
\usetikzlibrary{decorations.pathmorphing}

\usepackage[framemethod=tikz]{mdframed}

	\definecolor{mygreen}{rgb}{0,0.6,0}
	\definecolor{mygray}{rgb}{0.5,0.5,0.5}
	\definecolor{mymauve}{rgb}{0.58,0,0.82}
	\definecolor{weborange}{RGB}{255,165,0}

\lstdefinestyle{log}{
	backgroundcolor=\color{white},   %
	basicstyle=\scriptsize,        %
	breakatwhitespace=false,         %
	breaklines=true,                 %
	captionpos=b,                    %
	commentstyle=\color{mygreen},    %
	deletekeywords={...},            %
	escapeinside={\%*}{*)},          %
	extendedchars=true,              %
	frame=single,	                   %
	keepspaces=true,                 %
	keywordstyle=\color{red!70!black}\bfseries,       %
	morekeywords={@, open, close, update},            %
	numbers=left,                    %
	numbersep=5pt,                   %
	numberstyle=\tiny\color{mygray}, %
	rulecolor=\color{black},         %
	showspaces=false,                %
	showstringspaces=false,          %
	showtabs=false,                  %
	stepnumber=1,                    %
	stringstyle=\color{mymauve},     %
	tabsize=2,	                   %
	classoffset=1, %
	otherkeywords={@},
	morekeywords={@},
	keywordstyle=\color{weborange},
	classoffset=0,
}

\ifdefined\VersionWithComments
	\usepackage{draftwatermark}
	\SetWatermarkText{draft}
	\SetWatermarkScale{3}
	\SetWatermarkColor[gray]{0.9}
\fi

\usepackage[svgnames,table]{xcolor}
\definecolor{USPNcobalt}{HTML}{293358}
\definecolor{USPNocre}{HTML}{8b7d6d}
\definecolor{USPNblanc}{HTML}{ffffff}
\definecolor{USPNceruleen}{HTML}{354878}
\definecolor{USPNsable}{HTML}{ad947e}

\usepackage[
\ifdefined\VersionFinal%
		pdfauthor={Tsubasa~Matsumoto, Kazuki~Watanabe, and Masaki~Waga},
\else%
		pdfauthor={Anonymous Authors},%
\fi
		pdftitle={Specification-Guided Path Shortcutting for Efficient Probabilistic Model Checking},
		breaklinks  = true,
		colorlinks  = true,
	\ifdefined \VersionWithComments
	\fi
		citecolor   = USPNsable,
		linkcolor   = USPNocre,
		urlcolor    = USPNceruleen,
	]{hyperref}

\usepackage[capitalise,english,nameinlink]{cleveref} %
\crefname{line}{\text{line}}{\text{lines}} %
\crefname{assumption}{\text{Assumption}}{\text{Assumptions}} %

\usepackage{wrapfig}
\usepackage{tikz}
\usetikzlibrary{arrows,automata,positioning,math}
\tikzstyle{every node}=[initial text=]
\tikzstyle{final}=[double]
\tikzstyle{accepting}=[final]

\usepackage{booktabs}

\newcommand{\setN}{{\mathbb N}}

\newcommand{\setQ}{{\mathbb Q}}

\newcommand{\init}{_{\mathit{init}}}

\newcommand{\powerset}[1]{2^{#1}}
\newcommand{\emptyword}{\varepsilon}

\newcommand{\MC}{\mathcal{M}}
\newcommand{\state}{\ensuremath{s}}
\newcommand{\initialState}{\ensuremath{s\init}}
\newcommand{\States}{S}
\newcommand{\edge}{e}
\newcommand{\Edges}{E}
\newcommand{\InEdges}[1]{\mathit{IE}_{#1}}
\newcommand{\OutEdges}[1]{\mathit{OE}_{#1}}
\newcommand{\MCTransition}{\delta}

\newcommand{\pathvar}{p}
\newcommand{\infpathvar}{p^{\omega}}
\newcommand{\trace}{t}
\newcommand{\run}{r}

\newcommand{\automaton}{\mathcal{A}}

\newcommand{\position}{q}
\newcommand{\initialPosition}{q\init}
\newcommand{\Positions}{Q}
\newcommand{\DRATransition}{\Delta}
\newcommand{\AccCondition}{\mathit{Acc}}
\newcommand{\reclang}[1]{\mathcal{L}(#1)}

\newcommand{\MCCS}{T}

\newcommand{\MCCSCandidate}{\tilde{\MCCS}}

\newcommand{\reducedMC}{\MC'}
\newcommand{\reducedMCFinal}[1][\automaton]{{\MC/{\sim_{#1}}}}
\newcommand{\MCCSBound}{K}

\newcommand{\formula}{\varphi}
\newcommand{\fml}{\varphi}
\newcommand{\Eventually}{\ensuremath{\Diamond}}
\newcommand{\Globally}{\ensuremath{\square}}
\newcommand{\Next}{\ensuremath{\mathcal{X}}}
\newcommand{\Until}{\ensuremath{\mathbin{\mathcal{U}}}}

\usepackage{pifont}%

\definecolor{vertfonce}{rgb}{0.0, 0.5, 0.0}
\definecolor{rougefonce}{rgb}{1, 0.0, 0.0}

\theoremstyle{plain}
\newtheorem{lemma}{Lemma}
\newtheorem{proposition}[lemma]{Proposition}
\newtheorem{theorem}[lemma]{Theorem}

\theoremstyle{definition}
\newtheorem{definition}[lemma]{Definition}
\newtheorem{example}[lemma]{Example}
\theoremstyle{remark}

\usepackage{verbatim} %

\ifdefined\VersionFinal
\else
\usepackage[switch]{lineno} %
\linenumbers
\fi

\newcommand{\ourTool}{\textsf{Storm-SGPS}} %
\newcommand{\storm}{\textsf{Storm}}

\newcommand{\brp}{\textsf{BRP}}
\newcommand{\crowds}{\textsf{CROWDS}}
\newcommand{\egl}{\textsf{EGL}}
\newcommand{\haddad}{\textsf{HM}}

\newcommand{\leader}{\textsf{Leader}}
\newcommand{\nand}{\textsf{NAND}}

\usepackage{colortbl}
\newcommand{\tbcolor}{\cellcolor{green!25}\bf}

\newcommand{\eg}{e.g.,\xspace}

\newcommand{\ie}{i.e.,\xspace}

\newcommand{\wrt}{w.r.t.\xspace}

\newcommand{\AP}{\mathit{AP}}
\newcommand{\ap}{\mathit{c}}
\newcommand{\transition}{\delta}
\newcommand{\etrace}[1]{\mathrm{P}_{#1}}
\newcommand{\eotrace}[1]{\mathrm{P}^{\omega}_{#1}}

\newcommand{\oalpha}[1]{\mathrm{O}(#1)}

\newcommand{\replacedMC}[2]{\mathbb{R}(#1, #2)}

\newcommand{\coverSuffixes}[2]{\mathit{MCCS}(#1, #2)}
\newcommand{\replaceC}[2]{#1(#2)}
\newcommand{\initpath}{\mathrm{head}}
\newcommand{\last}{\mathrm{last}}
\newcommand{\prob}{\mathbb{P}}
\newcommand{\erase}[2]{\mathbb{E}(#1, #2)}
\newcommand{\cyl}{\mathrm{Cyl}}
\newcommand{\sigmaAlg}{\mathcal{F}}
\newcommand{\rpr}{\mathrm{RPr}}

\tikzstyle{rqanswer} = [
 draw=black,
 fill=gray!30,
 text=black,
 line width=0.5pt,
  text width = \linewidth - 1.6 ex - 1pt,
  inner sep = 0.8 ex,
  rounded corners=4pt]
\newcommand{\rqanswer}[2]{%
	\smallskip

	\noindent%
	\begin{tikzpicture}%
	\draw node[rqanswer]{\textbf{Answer to {#1}}:{ #2}};%
	\end{tikzpicture}%
}

\newcommand{\defProblem}[3]
{%
	\smallskip

	\noindent%
	\begin{tikzpicture}%
	\draw node[rqanswer]{
		\small%
		\textbf{#1 Problem:}\\
		\textsc{Input}: #2\\
		\textsc{Problem}: #3
	};%
	\end{tikzpicture}%
	\smallskip
}

\usepackage{placeins}
\usepackage{comment}
\ifdefined\VersionLong%
        \includecomment{LongVersionBlock}
        \excludecomment{ShortVersionBlock}
\else
        \excludecomment{LongVersionBlock}
        \includecomment{ShortVersionBlock}
\fi
\begin{document}

\title{Specification-Guided \revision{Path Shortcutting}\\ for Efficient Probabilistic Model Checking}

\ifdefined\VersionFinal%
\author{Tsubasa~Matsumoto,
Kazuki~Watanabe, and
Masaki~Waga
\thanks{This work is \LongVersion{partially }supported by JST BOOST Grant No.\ JPMJBY24H8, JST PRESTO Grant No.\ JPMJPR22CA \& JPMJPR25KD, JST CREST Grant No.\ JPMJCR26X4 \& JPMJCR2012, and JSPS KAKENHI Grant No.\ 25H01113.}%
\thanks{T.\ Matsumoto and M.\ Waga are with Kyoto University, Kyoto, Japan. K.\ Watanabe and M.\ Waga are with National Institute of Informatics, Tokyo, Japan.}%
\LongVersion{%
\thanks{This is the author (and extended) version of the manuscript of the same name published in IEEE Transactions on Computer-Aided Design of Integrated Circuits and Systems (TCAD).
The final version is available at \url{https://ieeexplore.ieee.org/}.}}
}
\else
\author{Anonymous Authors}
\fi

\LongVersion{%
  \date{}
}

\ShortVersion{%
\markboth{Journal of \LaTeX\ Class Files,~Vol.~14, No.~8, August~2021}%
{Shell \MakeLowercase{\textit{et al.}}: A Sample Article Using IEEEtran.cls for IEEE Journals}
}

\maketitle

\LongVersion{
	\thispagestyle{plain}
}

\FloatBarrier
\begin{abstract}
 Given the safety-critical nature of many embedded systems, their safety assurance is essential.
 Because such systems are typically stochastic, \emph{probabilistic model checking} is a particularly important technique.
 However, there is a well-known scalability issue due to state-space explosion, especially when verifying complex properties.
 To mitigate this issue, we propose \emph{specification-guided \revision{path shortcutting}} for probabilistic systems, focusing on \emph{Markov chains (MCs)} and \emph{$\omega$-regular properties}.
 The key idea is that, when the verified property is fixed, certain sequences of transitions in an MC can be replaced with a single transition without changing the satisfaction probability, and thus, we can reduce the state space of the MC.\@
 We implement the proposed \revision{path shortcutting} and evaluate its contribution to the performance of probabilistic model checking, using \storm{} as the baseline model checker.
 The results suggest that our approach often outperforms the baseline, particularly on benchmark instances with complex specifications.
\end{abstract}
\begin{IEEEkeywords}
 probabilistic model checking, Markov chains, $\omega$-regular properties, Rabin automata, linear temporal logic
\end{IEEEkeywords}
\section{Introduction}\label{section:introduction}

\subsection{Embedded Systems and Stochastic Systems}

\IEEEPARstart{E}{nsuring} the correctness of embedded systems is essential due to their irreplaceable role, yet it remains notoriously challenging.
A major difficulty arises from their stochastic behavior, which is commonly observed in network protocols and IoT devices.
For instance, communication between servers may fail for unpredictable reasons related to physical conditions on devices; such stochastic behavior can be modeled using stochastic systems.
\emph{Markov chains (MCs)} are among the simplest and most widely used probabilistic models for representing such systems, with applications ranging from network protocols to reliable computing~\cite{KwiatkowskaNP12,BaierKatoen}.

\begin{example}%
 \label{example:running_example}
	\cref{fig:motivatingMC} illustrates an example of an MC. Each state has outgoing transitions labeled with atomic propositions and associated probabilities.
	For instance, from state $\state_1$, there is a transition to $\state_3$ labeled with $\{a\}$ and probability $0.1$.
\end{example}

\begin{figure}[t]
	\center
	\scalebox{\ShortVersion{0.75}\LongVersion{0.9}}{
\begin{tikzpicture}[
  ->,                    %
  >=Stealth,             %
  shorten >=1pt,         %
  auto,                  %
  node distance=2.5cm,   %
  semithick,              %
  every state/.style={
    font=\small,
    minimum size=6mm
  },scale=0.95,every node/.style={initial text={},transform shape}
]
  \node[state, initial] (q0) at (0, 0) {$\state_0$};
  \node[state] (q1) at (2, 0) {$\state_1$};
  \node[state] (q2) at (4, 1.5) {$\state_2$};
  \node[state] (q3) at (5, 0) {$\state_3$};
  \node[state] (q4) at (3, -1.5) {$\state_4$};
  \node[state] (q5) at (1, -1.5) {$\state_5$};

  \path
    (q0) edge node {$\{a\}$} (q1)
	(q1) edge node[xshift=2mm] {$\{b\}, 0.5$} (q2)
	(q2) edge[loop right] node {$\{a\}$} ()
	(q1) edge node[yshift=-0.5mm] {$\{a\}, 0.1$} (q3)
	(q3) edge[loop right] node {$\{b\}$} ()
	(q1) edge node[yshift=-2mm] {$\{a\},0.4$} (q4)
	(q4) edge node {$\{b\}$} (q5)
	(q5) edge node[yshift=-2mm] {$\{a\}, 0.3$} (q1)
	(q5) edge[loop left] node {$\{a\}, 0.7$} ();

\end{tikzpicture}
}
\caption{
	An MC. We omit the transition probability if it is $1$.
}
	\label{fig:motivatingMC}
\end{figure}

\subsection{Probabilistic Model Checking}

A powerful methodology for verifying embedded systems is \emph{probabilistic model checking}, which has been actively developed and applied to the verification of embedded systems for decades~\cite{BaierKatoen,NormanPKSG05,KatoenW16,KwiatkowskaNP10}.
Verifying linear-time properties of MCs is a common verification problem in probabilistic model checking and is supported by existing tools such as PRISM~\cite{KwiatkowskaNP11} and \storm{}~\cite{HenselJKQV22}.
Formally, given an MC $\MC$ and a \emph{linear-time temporal logic (LTL)} formula $\formula$, probabilistic model checking concerns the probability $\prob(\MC \models \formula)$ that infinite paths of $\MC$ satisfy $\formula$.

\subsection{Existing Approach and Challenge}

A long-standing bottleneck in probabilistic model checking stems from the size of the state space.
When the verified MC has a large state space, a model-checking algorithm requires more iterations to compute the probability that $\fml$ is satisfied, and thus, the verification process takes more time.
Moreover, recent studies have shown that the performance of modern probabilistic model checkers degrades significantly when the representation of an MC does not fit into memory due to the explosion of the state space (\eg{}~\cite{Katoen16,Hensel18,JungesS22,WatanabeEAH23}).

To mitigate this issue, \emph{bisimulation minimization} has been employed as an effective abstraction technique in state-of-the-art model checkers such as \storm{}~\cite{HenselJKQV22,Hensel18}.
Specifically, it constructs a \emph{quotient MC} $\MC/_{\simeq}$ by identifying \emph{bisimilar} states $\state_1 \simeq \state_2$ and merging states that are bisimilar to each other into a single state.
This preprocessing step can reduce the number of states before model checking and is sound in the following sense:
given an MC $\MC$, we have
\begin{equation}
	\label{eq:bisimulation}
	\prob(\MC\models \varphi) = \prob(\MC/_{\simeq}\models \varphi) \quad \text{for any LTL formula } \varphi.
\end{equation}
See~\cite{BaierKatoen} for details\footnote{This equivalence holds for PCTL* formulas, which subsume LTL. In fact, PCTL* characterizes probabilistic bisimulation.}.
By~\cref{eq:bisimulation}, it suffices to verify the quotient MC $\MC/_{\simeq}$ instead of the original MC $\MC$, while potentially reducing the state space.

However, bisimulation is often too strong a requirement to hold in realistic MCs.
In fact, in~\cref{fig:motivatingMC}, no two distinct states are bisimilar.

\subsection{Our Approach}
In this paper, we propose a novel abstraction method, called \emph{specification-guided \revision{path shortcutting}}, for MCs with respect to a given $\omega$-regular property $\varphi$, including properties expressible in LTL.\@
More specifically, given an MC $\MC$ and an $\omega$-regular property $\varphi$, our abstraction constructs an MC $\MC'$ that satisfies the following equivalence:
\begin{equation}
	\label{eq:ourabstraction}
	\prob(\MC\models \varphi) = \prob(\MC'\models \varphi).
\end{equation}
Importantly, in~\cref{eq:ourabstraction}, the equivalence is guaranteed to hold for the given property $\varphi$, in contrast to~\cref{eq:bisimulation}, which holds for any LTL formula.
This dependence on the specification enables us to eliminate states that are irrelevant to model checking against $\varphi$.
For instance, our method can reduce the MC in \cref{fig:motivatingMC} to an MC $\MC'$ with four states by eliminating two states; we demonstrate this procedure in~\cref{sec:overview} in detail.
Our abstraction can be used as a preprocessing step before model checking, similar to bisimulation minimization.
We provide an overview of the workflow in~\cref{fig:outline}.
\begin{figure}
	\center
  \scalebox{\ShortVersion{0.8}\LongVersion{0.9}}{
	\begin{tikzpicture}
    \node[draw] at (0.45, -3) (product) {an MC $\MC$ and an LTL formula $\formula$};
    \node[draw] at (0.45, -4.5) (simpleproduct) {an MC $\MC'$ and an LTL formula $\formula$};
	\node[draw] at (0.45, -6.0) (modelcheck) {Compute the probability $\prob(\MC'\models \formula)$};
    \draw[->, very thick, align=left] (product) -- node[midway, right] {Our novel specification-guided \revision{path shortcutting}  } (simpleproduct);
	\draw[->, very thick, align=left] (simpleproduct) -- node[midway, right] {Running a probabilistic model checking algorithm\\ with existing solvers (including \storm{})} (modelcheck);
	\end{tikzpicture}
  }
	\caption{Workflow of our proposed probabilistic model checking procedure.}
	\label{fig:outline}
\end{figure}

We implement \revision{\ourTool{}, a prototype tool for} specification-guided \revision{path shortcutting}, and evaluate its performance using the workflow shown in~\cref{fig:outline}, comparing against the state-of-the-art model checker \storm{}~\cite{HenselJKQV22}.
Our experimental results demonstrate that our \revision{workflow} generally \revision{improves the end-to-end performance} on many benchmark instances consisting of MCs from the QComp benchmark suite~\cite{HartmannsKPQR19}, achieving up to an \revision{18}$\times$ speedup.%

\subsection{Contributions and Outline}
We make the following contributions.
\begin{itemize}
	\item We introduce a notion of \emph{edge replaceability} for MCs with respect to a given $\omega$-regular property (\cref{section:wordlevelSGA}).
	\item We provide a sufficient condition for \emph{state elimination} based on edge replaceability (\cref{sec:eraseStates}).
	\item We present a novel abstraction method, called \emph{specification-guided \revision{path shortcutting}} (\cref{section:algorithm}).
	\item We empirically evaluate the effectiveness of our approach by comparing it with the model checker \storm{}~\cite{HenselJKQV22} (\cref{section:experiments}).
\end{itemize}
Before presenting these contributions, we illustrate our specification-guided \revision{path shortcutting} using the example MC in~\cref{fig:motivatingMC} (\cref{sec:overview}), and recall preliminaries on probabilistic model checking (\cref{section:preliminaries}).

\section{Overview}
\label{sec:overview}

We illustrate our specification-guided \revision{path shortcutting} using the MC $\MC$ in~\cref{fig:motivatingMC}.
We consider the LTL formula $\formula \coloneqq \Globally(a \implies \Eventually b)$.
First, we construct a deterministic Rabin automaton $\automaton_{\formula}$ recognizing the set of $\omega$-words satisfying $\formula$;~\cref{fig:motivatingDRA} depicts $\automaton_{\formula}$.

\begin{figure}[t]
	\center
	\scalebox{\ShortVersion{0.75}\LongVersion{0.9}}{
	\begin{tikzpicture}[
  ->,                    %
  >=Stealth,             %
  shorten >=1pt,         %
  auto,                  %
  node distance=2.5cm,   %
  semithick,              %
  every state/.style={
    font=\small,
    minimum size=6mm
  },scale=0.95,every node/.style={initial text={},transform shape}
]
  \node[state, initial,accepting] (q0) at (0, 0) {$\position_0$};
  \node[state] (q1) at (4, 0) {$\position_1$};

  \path
	(q0) edge[bend left] node {$\{a\}$} (q1)
	(q1) edge[bend left] node {$\{b\}, \{a, b\}$} (q0)
    (q0)[loop above] edge node {$\emptyset, \{b\}, \{a, b\}$} ()
	(q1)[loop above] edge node {$\emptyset, \{a\}$} ();
\end{tikzpicture}
}
\caption{
	Deterministic Rabin automaton (DRA) equivalent to the LTL formula $\Globally(a \implies \Eventually b)$. An $\omega$-word is accepted by this DRA if the run visits $\position_0$ infinitely often.
}
	\label{fig:motivatingDRA}
\end{figure}

\begin{figure}[t]
	\center
\scalebox{\ShortVersion{0.8}\LongVersion{0.9}}{
\begin{tikzpicture}[
  ->,                    %
  >=Stealth,             %
  shorten >=1pt,         %
  auto,                  %
  node distance=2.5cm,   %
  semithick,              %
  every state/.style={
    font=\small,
    minimum size=6mm
  },
  unreachable/.style={
    state,
    draw=gray,
    text=gray,
    dashed,
    fill=gray!15
  },scale=0.95,every node/.style={initial text={},transform shape}
]
  \node[state, initial] (q0) at (0, 0) {$\state_0$};
  \node[state] (q1) at (2, 0) {$\state_1$};
  \node[state] (q2) at (4, 1.5) {$\state_2$};
  \node[state] (q3) at (5, 0) {$\state_3$};
  \node[unreachable] (q4) at (3, -1.5) {$\state_4$};
  \node[state] (q5) at (1, -1.5) {$\state_5$};

  \path
    (q0) edge node {$\{a\}$} (q1)
	(q1) edge node[xshift=2mm] {$\{b\}, 0.5$} (q2)
	(q2) edge[loop right] node {$\{a\}$} ()
	(q1) edge node[yshift=-0.5mm] {$\{a\}, 0.1$} (q3)
	(q3) edge[loop right] node {$\{b\}$} ()
	(q1) edge[color=blue,bend left] node[yshift=5mm] {\textcolor{blue}{\uline{$\{b\},0.4$}}} (q5)
	(q4) edge node {$\{b\}$} (q5)
	(q5) edge[bend left] node[yshift=-4mm] {\textcolor{red}{\uline{$\{a\},0.3$}}} (q1)
	(q5) edge[loop left] node {$\{a\}, 0.7$} ();

\end{tikzpicture}
}
\caption{
	The MC obtained by replacing the edge from $\state_1$ to $\state_4$ with the new edge from $\state_1$ to $\state_5$.
}
\label{fig:step2MC}
\end{figure}

\begin{figure}[t]
	\center
\scalebox{0.8}{
\begin{tikzpicture}[
  ->,                    %
  >=Stealth,             %
  shorten >=1pt,         %
  auto,                  %
  node distance=2.5cm,   %
  semithick,              %
  every state/.style={
    font=\small,
    minimum size=6mm
  },
  unreachable/.style={
    state,
    draw=gray,
    text=gray,
    dashed,
    fill=gray!15
  },scale=0.90,every node/.style={initial text={},transform shape}
]
  \node[state, initial] (q0) at (0, 0) {$\state_0$};
  \node[state] (q1) at (2, 0) {$\state_1$};
  \node[state] (q2) at (4, 1.5) {$\state_2$};
  \node[state] (q3) at (5, 0) {$\state_3$};
  \node[state] (q5) at (1, -1.5) {$\state_5$};

  \path
    (q0) edge node {\textcolor{red}{\uline{$\{a\}$}}} (q1)
	(q1) edge node[xshift=2mm] {$\{b\}, 0.5$} (q2)
	(q2) edge[loop right] node {$\{a\}$} ()
	(q1) edge node[yshift=-0.5mm] {$\{a\}, 0.1$} (q3)
	(q3) edge[loop right] node {$\{b\}$} ()
	(q1) edge[bend right] node[left,yshift=-1mm] {$\{b\},0.4$} (q5)
	(q5) edge[loop left] node {$\{a\}, 0.7$} ()
	(q5) edge[color=blue,loop below] node {\textcolor{blue}{\uline{$\{b\},0.12$}}} ()
	(q5) edge[color=blue,bend right=80] node[below,yshift=-2mm] {\textcolor{blue}{\uline{$\{a\},0.03$}}} (q3)
	(q5) edge[color=blue,bend right=80] node[left,yshift=1mm] {\textcolor{blue}{\uline{$\{b\},0.15$}}} (q2);
\end{tikzpicture}
}
\caption{
	The MC obtained by removing the edge from $\state_5$ to $\state_1$.
}
\label{fig:step3MC}
\end{figure}

\begin{figure}[t]
	\center
\scalebox{0.8}{
\begin{tikzpicture}[
  ->,                    %
  >=Stealth,             %
  shorten >=1pt,         %
  auto,                  %
  node distance=2.5cm,   %
  semithick,              %
  every state/.style={
    font=\small,
    minimum size=6mm
  },
  unreachable/.style={
    state,
    draw=gray,
    text=gray,
    dashed,
    fill=gray!15
  },scale=0.90,every node/.style={initial text={},transform shape}
]
  \node[state, initial] (q0) at (0, 0) {$\state_0$};
  \node[state] (q2) at (4, 1.5) {$\state_2$};
  \node[state] (q3) at (5, 0) {$\state_3$};
  \node[state] (q5) at (1, -1.5) {$\state_5$};

  \path
    (q0) edge[color=blue] node {\textcolor{blue}{\uline{$\{b\}, 0.5$}}} (q2)
	(q0) edge[color=blue] node {\textcolor{blue}{\uline{$\{a\}, 0.1$}}} (q3)
	(q2) edge[loop right] node {$\{a\}$} ()
	(q3) edge[loop right] node {$\{b\}$} ()
	(q0) edge[color=blue] node {\textcolor{blue}{\uline{$\{b\}, 0.4$}}} (q5)
	(q5) edge[loop left] node {$\{a\}, 0.7$} ()
	(q5) edge[loop below] node {$\{b\},0.12$} ()
	(q5) edge[bend right=80] node[below,yshift=-2mm] {$\{a\},0.03$} (q3)
	(q5) edge[bend right=80] node[left,yshift=1mm] {$\{b\},0.15$} (q2);
\end{tikzpicture}
}
\caption{
	The MC obtained by removing the edge from $\state_0$ to $\state_1$, and erasing the state $\state_1$.
}
	\label{fig:step4MC}
\end{figure}

By examining $\automaton_{\formula}$, we observe that reading the word $\{a\}\{b\}$ is equivalent to reading the character $\{b\}$.
This is because, from any position in $\automaton_{\formula}$,
reading both $\{a\}\{b\}$ and $\{b\}$ leads to $\position_0$, and
the position $\position_0$, which is relevant for acceptance, is visited only in the last step.
Therefore, we can \emph{replace} the two-step transition from $\state_1$ to $\state_5$ via $\state_4$ labeled with $\{a\}\{b\}$ with a single-step transition from $\state_1$ to $\state_5$ labeled $\{b\}$ with probability $0.4$, which is the probability of the original two-step transition.
This replacement can be realized by removing the edge from $\state_1$ to $\state_4$ and adding a new edge from $\state_1$ to $\state_5$ labeled $\{b\}$ with probability $0.4$.
The resulting MC is shown in~\cref{fig:step2MC}, where $\state_4$ is \emph{eliminated} as it is no longer reachable from the initial state $\state_0$.

Next, consider the edge from $\state_5$ to $\state_1$.
By a similar argument, the word $\{a\}\{a\}$ is equivalent to the character $\{a\}$ with respect to $\automaton_{\formula}$.
The two equivalences allow us to replace the edge from $\state_5$ to $\state_1$ with the three edges highlighted in blue in~\cref{fig:step3MC}.
Again, the probabilities of the new edges are obtained by multiplication; for example, the edge from $\state_5$ to $\state_2$ has probability $0.3 \times 0.5$.
Notably, we shrink the cycle $\state_1 \cdot \state_4 \cdot \state_5$ in $\MC$ into the single state $\state_5$ by creating the new self-loop in~\cref{fig:step3MC}.
This is surprising because the cycle is not a bottom strongly connected component, which is the class of components that can typically be collapsed into a single state~\cite{BaierKatoen}.

Finally, we obtain the MC shown in~\cref{fig:step4MC} by replacing the edge from $\state_0$ to $\state_1$ with three new edges from $\state_0$, and then eliminating the state $\state_1$ that is no longer reachable from the initial state $\state_0$.
Overall, this results in an equivalent MC with four states, reduced from the original MC $\MC$ with six states, with respect to the given specification $\formula$.
This highlights that our abstraction goes beyond classical bisimulation-based reductions, which do not eliminate any states in $\MC$.
In this paper, we formally present this abstraction in detail and demonstrate its effectiveness compared to a state-of-the-art model checker \storm{}~\cite{HenselJKQV22}.

\section{Preliminaries}\label{section:preliminaries}

We recall the preliminaries of probabilistic model checking and formally introduce our problem of interest, namely probabilistic model checking of Markov chains for quantitative $\omega$-regular properties.

We denote the sets of rational and natural numbers by $\setQ$ and $\setN$, respectively.
For a set $X$, we denote its power set by $\powerset{X}$.
For a set $X$, a \emph{word} over $X$ is a finite sequence of elements in $X$, and we write $X^{\ast}$ for the set of words over $X$.
We use $\emptyword$ to denote the empty word.
We let $X^{+} \coloneqq X^{\ast} \setminus \{\emptyword\}$.
An \emph{$\omega$-word} over $X$ is an infinite sequence of elements in $X$, and we write $X^{\omega}$ for the set of $\omega$-words over $X$.
\revision{Throughout this paper, for a finite set $\AP$ of atomic propositions, we write $\Sigma$ for $\powerset{\AP}$. We call $\ap \in \Sigma$ a \emph{label}.}

\subsection{Markov chain}
\begin{definition}[Markov chain]
  A \emph{(transition-labeled) Markov chain} (MC) is a 4-tuple $(\States, \AP, \MCTransition, \initialState)$, where $\States$ is a finite set of \emph{states},
  $\AP$ is a finite set of \emph{atomic propositions}, $\MCTransition\colon \States \times \powerset{\AP} \times \States\rightarrow [0, 1]\cap \setQ$ is a \emph{transition probability}, that is, $\sum_{(\ap, \state_2)\in \powerset{\AP}\times \States}\MCTransition(\state_1, \ap, \state_2) = 1$ for each $\state_1\in \States$, and $\initialState$ is a fixed initial state.
\end{definition}
Note that we assign a label $\ap \in \Sigma$ to each \emph{edge} in an MC.
Formally, an \emph{edge} $\edge$ of an MC $\MC = (\States, \AP, \MCTransition, \initialState)$ is defined as a triple $\edge = (\state_1, \ap, \state_2) \in \States \times \Sigma \times \States$ such that $\MCTransition(\state_1,\ap, \state_2) > 0$.
We denote the set of all edges of $\MC$ by $\Edges_{\MC}$.

A (finite) \emph{path} $\pathvar$ is a finite sequence of edges $\pathvar = (\state_{i}, \ap_i, \state'_i)_{i\in I} \in \Edges_{\MC}^{\ast}$ such that $\state'_i = \state_{i+1}$.
We write $\initpath(\pathvar)$ and $\last(\pathvar)$ for the first and last state in $\pathvar$.
An \emph{infinite path} $\infpathvar$ is an infinite sequence of edges $\infpathvar \in \Edges_{\MC}^{\omega}$.
We write $\mathrm{P}_{\MC}$ and  $\mathrm{P}^{\omega}_{\MC}$ for the set of finite and infinite paths in $\MC$, respectively.
Given a state $\state$, we also write $\mathrm{P}_{\MC}(\state)$ and  $\mathrm{P}^{\omega}_{\MC}(\state)$ for the set of finite and infinite paths starting from $\state$, respectively.

For a path $\pathvar$, we define its \emph{trace} $\oalpha{\pathvar} \in \Sigma^{\ast}$ as the sequence of labels along $\pathvar$.
The trace $\oalpha{\infpathvar}$ of an infinite path $\infpathvar$ is defined analogously.

Given an edge $\edge = (\state_1, \ap, \state_2)$, we write $\etrace{\edge}$ for the set of paths $\pathvar$ such that $e$ is a prefix of $\pathvar$.
We write $\eotrace{e}$ for the set of infinite paths $\infpathvar$ in $\MC$ such that $\edge$ is a prefix of $\infpathvar$.

For each state $\state_1$, we denote the set of all incoming edges of $\state_1$ by $\InEdges{\state_1}$ and the set of all outgoing edges of $\state_1$ by $\OutEdges{\state_1}$,
\ie $\InEdges{\state_1} = \{(\state_2, \ap, \state_1)\in \Edges_{\MC}\mid \MCTransition(\state_2, \ap, \state_1) > 0\}$ and $\OutEdges{\state_1} = \{(\state_1, \ap, \state_2)\in \Edges_{\MC} \mid \MCTransition(\state_1, \ap, \state_2) > 0\}$.

We recall the probability measure $\prob$ over the set of infinite paths $\Edges_{\MC}^{\omega}$, following~\cite{BaierKatoen}.
Given a finite path $\pathvar$, the \emph{cylinder set} $\cyl(\pathvar)$ is the set of infinite paths $\infpathvar$ such that $\pathvar$ is a prefix of $\infpathvar$.
For a set $P$ of finite paths, we write $\cyl(P)$ for the set of infinite paths that have a prefix in $P$.

The \emph{$\sigma$-algebra} \revision{$\sigmaAlg$} of $\MC$ is the smallest $\sigma$-algebra generated by the cylinder sets of all finite paths.
The \emph{probability measure} $\prob$ of $\MC$ is the unique probability measure \revision{on $\sigmaAlg$} such that for all cylinder sets $\cyl(\pathvar)$,
\[
	\prob\big(\cyl(\pathvar)\big) = \prod_{i=0}^{n} \MCTransition(\state_i,\ap_{i+1}, \state_{i+1}),
\]
where $\pathvar = (\state_0, \ap_1, \state_{1}) \cdot (\state_1, \ap_2, \state_{2}) \cdots (\state_n, \ap_{n+1}, \state_{n+1})$.
By a slight abuse of notation, we often write $\prob(\pathvar)$ for $\prob\big(\cyl(\pathvar)\big)$.

Given a path $\pathvar$ and a set $X \subseteq \States$, we say that $\pathvar$ is a \emph{path to $X$} if it ends in a state in $X$ and does not visit $X$ before its last state.
The \emph{reachability probability} $\rpr_{\MC}(X)$ is the total probability of all such paths from the initial state $\initialState$.

\subsection{Deterministic Rabin Automaton}

For specifications, we use \emph{deterministic Rabin automata (DRAs)}\footnote{More precisely, we employ \emph{generalized} Rabin automata, which have the same expressive power as the (plain) Rabin automata.} that recognize $\omega$-regular languages.
See, \eg{}~\cite{KleinB06,EsparzaKS16} for the details of the construction of DRAs from LTL formulas.

\begin{definition}[deterministic Rabin automata]
	A \emph{deterministic Rabin automaton} (DRA) is a 5-tuple $(\Positions, \AP, \DRATransition, \initialPosition, \AccCondition)$, where
	$\Positions$ is a finite set of \emph{positions}, $\AP$ is a finite set of atomic propositions, $\DRATransition\colon \Positions \times \Sigma \to \Positions$ is a \emph{transition function}, $\initialPosition \in \Positions$ is an initial position, and $\AccCondition \subseteq \powerset{\Positions} \times \powerset{\Positions}$ is a (generalized) Rabin acceptance condition.
\end{definition}

Given a position $\position\in Q$ and a sequence $\trace \in \Sigma^{\ast}$ of labels, we define the transition $\DRATransition(\position, \trace)\in \Positions$ recursively: $\DRATransition(\position, \emptyword) \coloneqq \position$ and $\DRATransition(\position, \ap\cdot \trace) \coloneqq \DRATransition\big(\DRATransition(\position, \ap), \trace \big)$.
For an infinite sequence $\trace\in \Sigma^{\omega}$, the \emph{run} $\run(\trace) \in \Positions^{\omega}$ over $\trace$ is the sequence of positions from $\initialPosition$ by applying $\DRATransition$ to each prefix of $\trace$, \ie{} for $\trace = \ap_1\ap_2\cdots$, $\run(\trace) = \position_0\position_1\cdots$ such that $\position_0 = \initialPosition$ and $\position_{i+1} = \DRATransition(\position_i, \ap_{i+1})$ for each $i\in \setN$.
A run $\run\in \Positions^{\omega}$ is \emph{accepting} if there is a pair $(L, U)\in \AccCondition $ such that there is a position $\position\in L$ that occurs infinitely often in $r$, and for any $\position' \in U$, $\position'$ appears in $r$ only finitely many times.
A DRA accepts an infinite sequence $\trace \in \Sigma^{\omega}$ if the run $\run(\trace)$ over $\trace$ is accepting.
The \emph{recognized language} $\reclang{\automaton} \subseteq \Sigma^{\omega}$ of $\automaton$ is the set of infinite sequences $\trace$ accepted by $\automaton$.

\subsection{Probabilistic Model Checking}
We formally state the target problem as follows:

\defProblem{Probabilistic Model Checking}{an MC $\MC$ and a DRA $\automaton$.}{compute the probability of accepting (infinite) traces, that is, compute
\[
\prob\Big(\big\{\infpathvar\in \mathrm{P}^{\omega}_{\MC}(\initialState) \bigm| \oalpha{\infpathvar}\in \reclang{\automaton} \big \}\Big)\in [0, 1].
\]
We write $\prob(\MC\vDash \automaton)$ for the above probability.
Note that the set of accepting paths (\ie{} infinite paths $\infpathvar$ of $\MC$ whose trace $\oalpha{\infpathvar}$ is accepted by $\automaton$) is measurable, therefore its probability is well-defined~\cite{BaierKatoen}. }

We remark that in our experiments shown in~\cref{section:experiments}, we employ the $\epsilon$-approximation problem for the evaluation, that is, computing a lower bound $l$ and an upper bound $u$ of the probability $\prob(\MC\vDash \automaton)$ such that $|l - u|\leq \epsilon$.

\subsection{The Product Construction}
Finally, we recall the de facto standard algorithm for probabilistic model checking for linear-time temporal properties: the product construction.
By constructing the product MC $\MC \otimes \automaton$ of $\MC$ and $\automaton$, we reduce the original problem to a reachability probability problem on $\MC \otimes \automaton$.

	\begin{definition}[product Markov chain]
		For an MC $\MC = (\States, \AP, \MCTransition_{\MC}, \initialState)$ and a DRA $\automaton = (\Positions, \AP, \DRATransition_{\automaton}, \initialPosition, \AccCondition)$, the \emph{product}
		$\MC\otimes \automaton$ of
		$\MC$ and $\automaton$ is an MC $(\States \times \Positions, \AP, \MCTransition_{\MC\otimes\automaton }, (\initialState, \initialPosition))$, where
		$\MCTransition_{\MC\otimes \automaton}\colon (\States\times \Positions)\times \Sigma\times (\States\times \Positions)\rightarrow [0, 1]\cap \setQ$ is the \emph{transition probability} defined by
		\begin{align*}
			\MCTransition_{\MC\otimes\automaton}\big((\state_1, \position_1), \ap, (\state_2, \position_2)\big)
			\coloneqq \MCTransition_{\MC}\big(\state_1, \ap,\state_2\big)
		\end{align*}
		if $\DRATransition_{\automaton}(\position_1, \ap) = \position_2$, and it is $0$ otherwise.
	\end{definition}

	Unlike the product (unlabeled) MC for state-labeled MCs (e.g.~\cite{BaierKatoen}), the product $\MC \otimes \automaton$ is equipped with labels to ensure a bijective correspondence between paths on $\MC$ and those on $\MC \otimes \automaton$ starting from a state $q$. This design choice simplifies the correctness proof of our approach.

In this reduction, \emph{bottom strongly connected components (BSCCs)} play a key role: the model checking problem is reduced to the reachability probability problem of reaching \emph{accepting BSCCs}.
\revision{Roughly speaking, BSCCs are strongly connected components with no outgoing transitions. } 
	\begin{definition}[BSCC]
		Given a product MC $\MC\otimes \automaton$, a non-empty set $X\subseteq \States\times \Positions$ is a \emph{bottom strongly connected component (BSCC)} if the following conditions hold:
		\begin{itemize}
			\item (SCC): for any $(\state_1, \position_1)$ and $(\state_2, \position_2)$ in $X$, there is a path from $(\state_1, \position_1)$ to $(\state_2, \position_2)$.
			\item (bottom): for any $(\state_1, \position_1)\in X$, $\ap \in \Sigma$, and $(\state_2, \position_2)\in \States\times \Positions$, if $\MCTransition_{\MC\otimes \automaton}\big((\state_1, \position_1), \ap, (\state_2, \position_2)\big) > 0$, then $(\state_2, \position_2)\in X$.
		\end{itemize}
		We say that a BSCC $X$ is \emph{accepting} if there exists $(L, U) \in \AccCondition$ such that (i) there is a state $(\state, \position) \in X$ with $\position \in L$, and (ii) for all $(\state, \position) \in X$, we have $\position \notin U$.
\end{definition}

It is known that the limit behavior of MCs is precisely captured with BSCCs~\cite{BaierKatoen}:
\revision{In MCs, we almost surely end up in a BSCC}. 
\begin{lemma}
	\label{lem:asBSCC}
	Let $\MC$ be an MC and $U$ be the union of all BSCCs in $\MC$.
	We have $\rpr_{\MC}(U) = 1$.
\end{lemma}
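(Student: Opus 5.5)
The plan is to prove the standard fact that a finite Markov chain almost surely enters a BSCC. I will do this by bounding, uniformly over starting states, the probability of avoiding $U$ for a fixed number of steps, and then iterating that bound.

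\textbf{Step 1: a short path to $U$ from every state.} Let $n = |\States|$. I first claim that from every state $\state$ there is a finite path of length at most $n$ ending in $U$.

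Consider the graph of $\MC$ restricted to states reachable from $\state$, and its condensation into SCCs. Since the condensation is a finite DAG, it has a sink SCC $C$ reachable from $\state$. I will check that $C$ satisfies the definition of a BSCC:
\begin{itemize}
\item It is strongly connected by construction.
\item It is bottom: every edge leaving a state of $C$ goes to a state reachable from $\state$, and hence to an SCC below $C$ in the DAG. Since $C$ is a sink, that SCC is $C$ itself.
\end{itemize}
The BSCC definition is stated for product MCs, but it applies verbatim to any MC. Taking a shortest path from $\state$ to $C$ gives length at most $n$, because a shortest path repeats no state.

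\textbf{Step 2: a uniform lower bound on reaching $U$.} Let $\delta_{\min} > 0$ be the minimal positive transition probability; this is well defined since there are finitely many edges. From any state $\state \notin U$, the path from Step 1 has probability at least $\eta \coloneqq \delta_{\min}^{n} > 0$. Therefore the probability of reaching $U$ within $n$ steps from $\state$ is at least $\eta$.

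\textbf{Step 3: iterating the bound.} Let $A_k$ be the event that the first $kn$ steps from $\initialState$ avoid $U$. Each $A_k$ is a finite union of cylinder sets. Conditioning on the state after $(k-1)n$ steps gives
\[
\prob(A_k) \le (1-\eta)\,\prob(A_{k-1}),
\]
so $\prob(A_k) \le (1-\eta)^k$. Formally, this conditioning step is the Markov property for cylinders: the probability of a cylinder factors as a product of transition probabilities along the path, so one can sum over prefixes ending in each state. The event of never reaching $U$ is $\bigcap_k A_k$, a decreasing intersection, so its probability is at most $\lim_k (1-\eta)^k = 0$. Hence almost every infinite path visits $U$.

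\textbf{Step 4: conclusion.} Every infinite path that visits $U$ has a unique shortest prefix that is a path to $U$, and distinct such prefixes give disjoint cylinders. By countable additivity, the probability of visiting $U$ equals $\rpr_{\MC}(U)$, which is therefore $1$.

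I expect the main obstacle to be purely measure-theoretic bookkeeping, namely justifying the conditioning inequality in Step 3 cleanly from the cylinder definition of $\prob$. The graph argument in Step 1 and the limit argument are routine. Alternatively, the whole statement could simply be cited from~\cite{BaierKatoen}.
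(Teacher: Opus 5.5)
Your argument is correct and complete. One small point could go in a parenthetical in Step~1. The paper's SCC clause also quantifies over pairs with $s_1 = s_2$, and a sink SCC satisfies this even when it is a singleton: every state of an MC has an outgoing edge, and in a sink SCC that edge must stay inside the component.

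The paper itself gives no proof of this lemma; it quotes it as standard background from~\cite{BaierKatoen}. There it follows from the long-run theorem for finite Markov chains: almost surely, the set of states visited infinitely often is a BSCC. That theorem rests on showing that every finite path fragment leaving an infinitely-often-visited state is almost surely traversed infinitely often, so the set is closed under successors and strongly connected.

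Your route is more direct. It combines a purely graph-theoretic step (a short path into $U$ from every state), a uniform bound $\eta = \delta_{\min}^{n}$, and geometric decay, using nothing beyond the product formula for cylinder probabilities. It buys elementarity and an explicit tail estimate: the probability of not having entered a BSCC within $kn$ steps is at most $(1-\eta)^k$. The textbook route buys the stronger limit statement about which states are visited infinitely often, which underlies the product-correctness proposition. The paper uses this lemma in the proof of correctness of replacing, to get $\rpr(\widehat{X}) = 1 - \rpr(\widetilde{X})$, and your reachability version is exactly what that step needs.
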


\begin{proposition}[correctness of products~\cite{BaierKatoen}]
	Given an MC $\MC$ and a DRA $\automaton$, we have the following equality:
	\[
	\prob(\MC\vDash \automaton) = \rpr_{\MC\otimes \automaton}(X),
	\]
	where $X$ is the union of accepting BSCCs.
\end{proposition}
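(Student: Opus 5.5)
We first transfer the problem to the product, then use the almost-sure absorption into BSCCs stated in \cref{lem:asBSCC}, and finally check acceptance BSCC by BSCC.

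\textbf{Step 1: transfer to the product.} Infinite paths of $\MC$ from $\initialState$ are in bijection with infinite paths of $\MC\otimes\automaton$ from $(\initialState,\initialPosition)$. Given $\infpathvar$ in $\MC$, we pair it with the DRA run on its trace; determinism of $\automaton$ makes this pairing unique. The bijection preserves traces. By the definition of $\MCTransition_{\MC\otimes\automaton}$, it maps cylinder sets to cylinder sets of equal probability, so it is measure preserving. Moreover, the projection of the product path to its second component is exactly the run $\run(\oalpha{\infpathvar})$. Hence
\[
\prob(\MC\vDash\automaton)=\prob_{\MC\otimes\automaton}(\{\pi \mid \text{the $\Positions$-projection of $\pi$ is accepting}\}).
\]

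\textbf{Step 2: absorption into a BSCC.} By \cref{lem:asBSCC} applied to $\MC\otimes\automaton$, almost every path reaches some BSCC. Once a path enters a BSCC $B$, the (bottom) condition forces it to stay in $B$ forever. The paths of $\MC\otimes\automaton$ therefore split, up to a null set, into disjoint events $E_B$ (``the first BSCC reached is $B$''), one for each BSCC $B$.

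\textbf{Step 3: behaviour inside a BSCC (main obstacle).} We must show that, conditioned on $E_B$, almost surely every state of $B$ is visited infinitely often. This is the standard fact that a finite irreducible Markov chain is recurrent. To prove it, we use finiteness: from every state of $B$ there is a path of length at most $|B|$ to a fixed target state $t$, with probability at least some $\eta>0$. So the probability of avoiding $t$ for $k|B|$ further steps is at most $(1-\eta)^k$, which tends to $0$. A countable union over the time of entry then shows that $t$ is visited infinitely often almost surely. Since $B$ is finite, this holds for all $t\in B$ simultaneously.

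\textbf{Step 4: conclusion.} On such a path, the set of positions occurring infinitely often is exactly the set of positions of $B$. Hence the path is accepting if and only if there is $(L,U)\in\AccCondition$ with some state of $B$ having position in $L$ and no state of $B$ having position in $U$; that is, if and only if $B$ is accepting. Summing $\prob(E_B)$ over the accepting BSCCs gives the probability of reaching $X$ by a path to $X$, namely $\rpr_{\MC\otimes\automaton}(X)$. Here we use that reaching the union $X$ coincides with the first BSCC reached being accepting, since BSCCs are disjoint and closed.
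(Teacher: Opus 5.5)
Your proposal is correct and follows the standard argument of the reference the paper cites for this proposition (Baier and Katoen~\cite{BaierKatoen}); the paper does not reprove it. You transfer along the label-preserving path bijection to $\MC\otimes\automaton$, then use that almost surely the set of states visited infinitely often is exactly one BSCC, whose projected positions decide Rabin acceptance; you are also right that Step~3 is needed, since \cref{lem:asBSCC} only gives absorption into BSCCs, and your finite-recurrence bound correctly supplies the missing ``every state of the BSCC is visited infinitely often'' part.
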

Reachability probabilities on MCs can be computed in polynomial time by solving linear equation systems (see \eg{}~\cite{BaierKatoen}), and the search for efficient algorithms, including various heuristics, has been actively pursued~\cite{ChatterjeeH08,HaddadM18,HartmannsK20,BrazdilCCFKKPU14}.

We adopt this approach in our setting: after applying our proposed specification-guided \revision{path shortcutting} as preprocessing, we reduce the problem to computing reachability probabilities on the product MC and apply an existing algorithm implemented in \storm{}.

\section{Replaceability of Edges}\label{section:wordlevelSGA}

Let $\MC = (\States, \AP, \transition_{\MC}, \initialState)$ be an MC and $\automaton$ be a DRA with Rabin acceptance condition $(L_i, U_i)_{i \in I}$.
To introduce our specification-guided \revision{path shortcutting}, we first define a notion of \emph{replaceability} of edges in $\MC$ with respect to $\automaton$.

\subsection{Replaceable Edges and Replaced MC}

Recall that in the overview (in~\cref{sec:overview}), the equivalence over words (such as $\{a\}\{b\}$ and $\{b\}$) leads to the replacement of edges.
We formally define this equivalence over \emph{acceptance-preserving words}.

\begin{definition}[acceptance-preserving]
	For a DRA with Rabin acceptance condition $(L_i, U_i)_{i \in I}$, a non-empty word (or trace) $\trace\in \Sigma^{+}$ is \emph{acceptance-preserving} if the following condition is satisfied:
	\begin{itemize}
		\item for any $\position\in Q$, any prefix $\trace'\in \Sigma^+$ of $\trace$, and any $X_i \in \{L_i, U_i\}$ for some $i\in I$, if $\DRATransition(\position, \trace')\in X_i$, then $\DRATransition(\position, \trace)\in X_i$.
	\end{itemize}
\end{definition}

\begin{example}
	For the DRA in \cref{fig:motivatingDRA}, words $\{a\}\{b\}$ and $\{a\}\{a\}$ are acceptance-preserving, whereas $\{b\}\{a\}$ is not acceptance-preserving.
\end{example}

\begin{example}
	Let $\automaton$ be a deterministic co-safety automaton, \ie{} a DRA whose Rabin acceptance condition is of the form $\{(L, \emptyset)\}$ and for any $\position \in L$ and $\ap\in \Sigma$, we have $\DRATransition(\position, \ap)\in L$.
    Then, every word $\trace\in \Sigma^{+}$ is acceptance-preserving \wrt $\automaton$.
\end{example}

\begin{definition}[compatibility relation]
We define the \emph{compatibility relation} $\trace_1\sim_{\automaton} \trace_2$ over the set of acceptance-preserving words by
$\trace_1\sim_{\automaton} \trace_2$ if $\DRATransition(\position, \trace_1) = \DRATransition(\position, \trace_2)$ for any $\position\in Q$.
\end{definition}
Clearly, the compatibility relation is an equivalence relation.
Without loss of generality, we assume that for any labels $\ap_1, \ap_2 \in \Sigma$, $\ap_1 \sim_{\automaton} \ap_2$ implies $\ap_1 = \ap_2$ (note that  all labels are acceptance-preserving).
This assumption is justified because such distinct labels $\ap_1 \sim_{\automaton} \ap_2$ do not essentially differ in $\automaton$, and thus one can simply replace $\ap_1$ with $\ap_2$ in $\MC$.

\begin{example}
	For the DRA in \cref{fig:motivatingDRA}, we have the compatibility relations  $\{a\}\{b\}\sim_{\automaton} \{b\}$ and $\{a\}\{a\}\sim_{\automaton} \{a\}$.
\end{example}

With the compatibility relation, we define the \emph{replaceability of edges}.

\begin{definition}[replaceability of edge]
 \label{definition:replaceability}
	An edge $\edge = (\state, \ap, \state')\in \States \times \Sigma\times \States$ in $\MC$ is \emph{replaceable} if
	$\state\not = \state'$, and
	there is a set $T\subseteq \etrace{\edge}$ such that
	\begin{itemize}
		\item for any $\infpathvar\in \eotrace{\edge}$, there is $\pathvar\in T$ such that $\pathvar$ is a prefix of $\infpathvar$,
		\item for any $\pathvar\in T$, the trace $\oalpha{\pathvar}$ is acceptance-preserving  and there is $\ap\in \Sigma$ such that $\oalpha{\pathvar} \sim_{\automaton} \ap$,
		\item for any $\pathvar \in T$, any strict prefix $\pathvar'$ of $\pathvar$ such that $\pathvar' \neq \edge$, and any $\ap \in \Sigma$, we have $\oalpha{\pathvar'} \not\sim_{\automaton} \ap$,
		\item $\edge\not\in T$, and 
		\item for any $\pathvar\in T$, $\state'$ appears only once in $\pathvar$.
	\end{itemize}
\end{definition}
\begin{example}
	For the MC in \cref{fig:motivatingMC}, the edges from $\state_0$ to $\state_1$, $\state_1$ to $\state_4$, and $\state_5$ to $\state_1$ are all replaceable.
\end{example}

Note that if an edge $e$ is replaceable, then there exists a unique set $T$ satisfying the above condition.
To see this, suppose that both $T_1$ and $T_2$ satisfy the conditions, and let $\pathvar \in T_1$.
By the first condition, there exists $\pathvar' \in T_2$ such that either $\pathvar$ is a prefix of $\pathvar'$ or vice versa.
This implies that $\pathvar = \pathvar'$, since $\oalpha{\pathvar} \sim_{\automaton} \ap_1$ and $\oalpha{\pathvar'} \sim_{\automaton} \ap_2$ for some $\ap_1, \ap_2 \in \Sigma$.

We call this unique set $T$ \emph{minimum compatible covering suffixes (MCCS)}, and denote the MCCS of a replaceable edge $e$ by $\coverSuffixes{e}{\automaton}$.
We prepare several auxiliary lemmata.
\begin{lemma}
	\label{lem:prefix}
	For any $\pathvar_1$ and $\pathvar_2$ in $\coverSuffixes{\edge}{\automaton}$, if $\pathvar_1$ is a prefix of $\pathvar_2$, then $\pathvar_1 = \pathvar_2$.
	\qed{}
\end{lemma}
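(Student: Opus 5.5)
The claim follows directly from the third condition in \cref{definition:replaceability}, which forbids strict prefixes of a path in the MCCS from being compatible with a single label. Let $\pathvar_1, \pathvar_2 \in \coverSuffixes{\edge}{\automaton}$ with $\pathvar_1$ a prefix of $\pathvar_2$. The plan is to assume $\pathvar_1 \neq \pathvar_2$, so that $\pathvar_1$ is a strict prefix of $\pathvar_2$, and derive a contradiction.

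First I check that the third condition applies to the pair $(\pathvar_2, \pathvar_1)$. We need $\pathvar_1 \neq \edge$. Since $\pathvar_1 \in \coverSuffixes{\edge}{\automaton} \subseteq \etrace{\edge}$, and the fourth condition states $\edge \notin \coverSuffixes{\edge}{\automaton}$, we indeed have $\pathvar_1 \neq \edge$. The third condition, instantiated with $\pathvar = \pathvar_2$ and $\pathvar' = \pathvar_1$, then gives $\oalpha{\pathvar_1} \not\sim_{\automaton} \ap$ for every $\ap \in \Sigma$.

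On the other hand, the second condition applied to $\pathvar_1$ itself gives some $\ap_1 \in \Sigma$ with $\oalpha{\pathvar_1} \sim_{\automaton} \ap_1$. This contradicts the previous step, so $\pathvar_1 = \pathvar_2$.

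There is no real obstacle. The one point needing care is the side condition $\pathvar' \neq \edge$ in the third clause, and the fourth clause ($\edge \notin T$) exists precisely to discharge it. This is the same argument as the uniqueness remark preceding the lemma, specialised to a single set $T$.
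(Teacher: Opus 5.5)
Your proof is correct and is essentially the argument the paper intends. The paper gives no proof and treats the lemma as immediate from the second and third clauses of \cref{definition:replaceability}, exactly as in the uniqueness remark preceding it; your use of the fourth clause ($\edge \notin T$) to discharge the side condition $\pathvar_1 \neq \edge$ supplies the one detail the paper leaves implicit.
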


\begin{lemma}
	\label{lem:characMCCS}
	We have
	\[
		 \transition_{\MC}(\state_{\edge}, \ap_{\edge}, \state'_{\edge}) =\sum_{\pathvar\in  \coverSuffixes{\edge}{\automaton}}\prob\big(\cyl(\pathvar)\big),
	\]
	where $e = (\state_{\edge}, \ap_{\edge}, \state'_{\edge})$.
\end{lemma}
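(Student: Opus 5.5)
The plan is to read the left-hand side as the measure of $\eotrace{\edge}$ and then show that $\coverSuffixes{\edge}{\automaton}$ partitions this set, up to measure, into disjoint cylinders. By definition of $\prob$, the single-edge path $\edge$ satisfies $\prob(\cyl(\edge)) = \transition_{\MC}(\state_{\edge}, \ap_{\edge}, \state'_{\edge})$, and $\cyl(\edge) = \eotrace{\edge}$. So it suffices to prove
\[
\eotrace{\edge} = \bigcup_{\pathvar\in\coverSuffixes{\edge}{\automaton}} \cyl(\pathvar), \quad \text{with the union disjoint,}
\]
and then apply countable additivity of $\prob$.

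The key steps, in order, are as follows.

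First, countability: $\coverSuffixes{\edge}{\automaton}$ is a set of finite paths over the finite edge set $\Edges_{\MC}$, so it is countable. The union is therefore a countable union of measurable sets, and $\sigma$-additivity applies.

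Second, the inclusion $\supseteq$: each $\pathvar\in\coverSuffixes{\edge}{\automaton}\subseteq\etrace{\edge}$ has $\edge$ as a prefix, so every infinite extension of $\pathvar$ lies in $\eotrace{\edge}$.

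Third, the inclusion $\subseteq$: this is exactly the first condition of \cref{definition:replaceability}. Every $\infpathvar\in\eotrace{\edge}$ has some prefix $\pathvar\in\coverSuffixes{\edge}{\automaton}$, hence $\infpathvar\in\cyl(\pathvar)$.

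Fourth, disjointness: suppose $\infpathvar\in\cyl(\pathvar_1)\cap\cyl(\pathvar_2)$. Then $\pathvar_1$ and $\pathvar_2$ are both prefixes of the same infinite path, so one of them is a prefix of the other. By \cref{lem:prefix}, $\pathvar_1=\pathvar_2$.

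Combining these steps gives
\[
\transition_{\MC}(\state_{\edge},\ap_{\edge},\state'_{\edge}) = \prob(\eotrace{\edge}) = \sum_{\pathvar\in\coverSuffixes{\edge}{\automaton}}\prob(\cyl(\pathvar)).
\]

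The main obstacle is only a matter of care. The measure $\prob$ is defined on infinite paths from the initial state, whereas $\edge$ may start at a state $\state_{\edge}$ other than $\initialState$. I would handle this by working with the path measure of $\MC$ started at $\state_{\edge}$; equivalently, I would read $\prob(\cyl(\pathvar))$ as the product of the transition probabilities along $\pathvar$, which is exactly what the cylinder formula gives in that measure. With this reading, all the sets above live in the same probability space and the additivity argument goes through unchanged.
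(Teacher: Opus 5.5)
Your proposal is correct and follows the same route as the paper: express $\transition_{\MC}(\state_{\edge},\ap_{\edge},\state'_{\edge})$ as the measure of $\cyl(\edge)$, cover that set by the cylinders of $\coverSuffixes{\edge}{\automaton}$ using the first condition of \cref{definition:replaceability}, get disjointness from \cref{lem:prefix}, and sum. The countability remark and your reading of $\prob$ as the path measure started at $\state_{\edge}$ are extra care that the paper leaves implicit.
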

\begin{proof}
	The cylinder set $\cyl(\state_{\edge} \cdot \ap_{\edge} \cdot \state'_{\edge})$ is equal to the union of the cylinder sets of $\pathvar \in \coverSuffixes{\edge}{\automaton}$,
	\ie $\cyl(\state_{\edge} \cdot \ap_{\edge} \cdot \state'_{\edge}) = \bigcup_{\pathvar \in \coverSuffixes{\edge}{\automaton}}\cyl(\pathvar)$, since for any $\infpathvar\in \eotrace{e}$, there is $\pathvar\in \coverSuffixes{\edge}{\automaton}$ such that $\pathvar$ is a prefix of $\infpathvar$.
	Moreover, for any $\pathvar_1\not = \pathvar_2\in\coverSuffixes{\edge}{\automaton} $, the sets $\cyl(\pathvar_1)$ and   $\cyl(\pathvar_2)$ are disjoint since $\pathvar_1$ cannot be a prefix of $\pathvar_2$ (and vice versa) by~\cref{lem:prefix}.
	Therefore, we have
	\begin{align*}
		\transition_{\MC}(\state_{\edge}, \ap_{\edge}, \state'_{\edge}) &= \prob(\cyl(\state_{\edge} \cdot \ap_{\edge} \cdot \state'_{\edge}))\\
		&=  \sum_{\pathvar\in \coverSuffixes{\edge}{\automaton}}\prob(\cyl(\pathvar)).
	\end{align*}
\end{proof}

Given a replaceable edge $e = (\state_{\edge}, \ap_{\edge}, \state'_{\edge})$, a label $\ap\in \Sigma$, and a state $\state\in S$, we write
$\replaceC{T}{\ap, \state}\subseteq \coverSuffixes{\edge}{\automaton}$ for the set of paths in $\coverSuffixes{\edge}{\automaton}$ such that
$\oalpha{\pathvar} \sim_{\automaton} \ap$ and $\last(\pathvar) = \state$.
Notice that the sets $\replaceC{T}{\ap, \state}$ form a partition of $\coverSuffixes{\edge}{\automaton}$ since we assume that $\ap_1\sim_{\automaton} \ap_2$ implies $\ap_1 = \ap_2$ w.l.o.g.
We also remark that $\replaceC{T}{\ap, \state'_{\edge}} = \emptyset$ for any $\ap\in \Sigma$.

We then define the \emph{replaced MC} by replacing a chosen replaceable edge on $\MC$.
\begin{definition}[replaced MC]
 \label{definition:replacedMC}
	Given a replaceable edge $e = (\state_{\edge}, \ap_{\edge}, \state'_{\edge})$ of $\MC$ \wrt $\automaton$,
	the \emph{replaced MC} $\replacedMC{\MC}{e}$ is defined by   $ \replacedMC{\MC}{e}\coloneqq (\States, \AP, \transition, s_{\mathrm{init}})$, where $\delta$ is defined as follows:
	\begin{align*}
		&\transition(\state, \ap, \state') \coloneqq \begin{cases}
			0 \qquad \qquad \qquad \qquad \qquad \quad \text{ if } (\state,\ap,\state') = e,\\
			\transition_{\MC}(\state, \ap,\state') \qquad \qquad \qquad \quad\text{ if } \state \not =  \state_{\edge},\\
			\transition_{\MC}(\state_{\edge}, \ap, \state') + \sum \limits_{\pathvar\in \replaceC{T}{\ap,\state'}}\prob(\pathvar) \text{ otherwise}.\\
		\end{cases}
	\end{align*}

\end{definition}

\begin{lemma}
	The construction of $\replacedMC{\MC}{e}$ is well-defined, that is, the function $\transition$ forms a transition probability.
\end{lemma}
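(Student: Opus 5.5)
We have to show two things: $\transition(\state,\ap,\state')$ lies in $[0,1]\cap\setQ$ for every triple, and for every state $\state$ the outgoing probabilities sum to $1$. We handle states $\state\neq\state_{\edge}$ first. For these, $\transition$ coincides with $\transition_{\MC}$ on every outgoing triple. The first case of the definition cannot apply to them, since $e$ starts at $\state_{\edge}$. Hence both conditions are inherited directly from $\MC$.

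The real work is at $\state_{\edge}$. Nonnegativity is immediate, because every term is a sum of nonnegative quantities. For the sum, we compute
\[
\sum_{(\ap,\state')}\transition(\state_{\edge},\ap,\state')
= \sum_{(\ap,\state')\neq(\ap_{\edge},\state'_{\edge})}\transition_{\MC}(\state_{\edge},\ap,\state')
+ \sum_{(\ap,\state')}\ \sum_{\pathvar\in\replaceC{T}{\ap,\state'}}\prob(\pathvar).
\]
This rests on the earlier remark that $\replaceC{T}{\ap,\state'_{\edge}}=\emptyset$ for every $\ap$. That remark holds because $\state'_{\edge}$ occurs only once in each $\pathvar\in T$, namely as the target of the first edge $e$, and $e\notin T$ forces $\last(\pathvar)\neq\state'_{\edge}$. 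Consequently, for the triple $(\ap_{\edge},\state'_{\edge})=e$, setting the value to $0$ discards nothing from the added sums.

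Next, the sets $\replaceC{T}{\ap,\state'}$ partition $\coverSuffixes{\edge}{\automaton}$. This uses the assumption that $\ap_1\sim_{\automaton}\ap_2$ implies $\ap_1=\ap_2$, together with the fact that each $\pathvar\in T$ has a unique last state and is compatible with some label. Therefore the double sum equals $\sum_{\pathvar\in\coverSuffixes{\edge}{\automaton}}\prob(\pathvar)$. By \cref{lem:characMCCS}, this equals $\transition_{\MC}(\state_{\edge},\ap_{\edge},\state'_{\edge})$. Adding it back to the first sum gives $\sum_{(\ap,\state')}\transition_{\MC}(\state_{\edge},\ap,\state')=1$. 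Since all summands are nonnegative and the total is $1$, each value is also at most $1$.

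The main obstacle is rationality. $T$ may be infinite because of cycles, so $\sum_{\pathvar\in\replaceC{T}{\ap,\state'}}\prob(\pathvar)$ is a priori only a real number. To settle it, we would express this sum as a reachability-type probability in a finite MC. Concretely, we take the product of $\MC$ with the finite deterministic automaton that tracks the $\sim_{\automaton}$-class of the trace read so far, which is determined by the function $\position\mapsto\DRATransition(\position,\cdot)$, along with acceptance-preservation information. We then make absorbing the configurations where a path in $T$ ends, meaning the first moment the trace becomes compatible with some label. Each sum is then the probability of reaching a designated absorbing state from the state following $e$, scaled by $\transition_{\MC}(e)$. Reachability probabilities in finite MCs with rational transitions are solutions of rational linear systems and are therefore rational. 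This establishes that $\transition$ maps into $[0,1]\cap\setQ$.
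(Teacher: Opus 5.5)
Your computation of $\sum_{(\ap,\state')}\transition(\state_{\edge},\ap,\state')$ is the paper's proof. The paper also uses $\replaceC{T}{\ap,\state'_{\edge}}=\emptyset$, the partition of $\coverSuffixes{\edge}{\automaton}$ into the sets $\replaceC{T}{\ap,\state'}$, and \cref{lem:characMCCS}. It simply states that this one sum suffices, so your remarks on states $\state\neq\state_{\edge}$, on nonnegativity and on the bound by $1$ just make explicit what the paper leaves implicit.

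The one thing to correct is the rationality paragraph, whose premise is false: cycles cannot make $T$ infinite. The first condition of replaceability requires \emph{every} infinite path through $\edge$, not almost every one, to have a prefix in $T$. So the paths starting with $\edge$ that have no prefix in $T$ form a prefix-closed subtree of a finitely branching tree with no infinite branch. By K\"onig's lemma that subtree is finite. Every $\pathvar\in T$ extends a node of it by one edge, because its strict prefixes avoid $T$ by \cref{lem:prefix}. Hence $T$ is finite, and each new transition value is a finite sum of finite products of rationals, so it lies in $\setQ$ directly. Your product-automaton and reachability argument is sound, but it is unnecessary.
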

\begin{proof}
	It suffices to show that $\sum_{(\ap, \state')} \transition(\state_{\edge}, \ap, \state') = 1$.
	We have
	\begin{align*}
		&\sum_{(\ap, \state')}\transition(\state_{\edge}, \ap, \state')\\
		= &\, 0 +  \sum_{(\ap, \state')\not = (\ap_{\edge}, \state'_{\edge})}\transition_{\MC}(\state_{\edge}, \ap, \state') + \sum_{\pathvar\in  \replaceC{T}{\ap,\state'}} \prob(\pathvar)\\
	    = & \sum_{\pathvar\in \coverSuffixes{\edge}{\automaton}}\prob(\pathvar) + \sum_{(\ap, \state')\not = (\ap_{\edge}, \state'_{\edge})}\transition_{\MC}(\state_{\edge}, \ap, \state') \\
		= &\sum_{(\ap, \state')}\transition_{\MC}(\state_{\edge}, \ap, \state') = 1,
	\end{align*}
	where we use~\cref{lem:characMCCS} to derive the last equality.
\end{proof}

\subsection{Correctness of Replaced MCs}
The following theorem ensures the correctness of the replaced MC.
\begin{theorem}[correctness of replacing]\label{theorem:correctness-ofreplacing}
	Given a replaceable edge $\edge$ of $\MC$ \wrt $\automaton$,
	we have
	\[
	\prob(\MC\vDash \automaton) = \prob(\replacedMC{\MC}{\edge}\vDash \automaton).\]
\end{theorem}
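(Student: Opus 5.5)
We plan to build a measure-preserving correspondence between infinite paths of $\MC$ and infinite paths of $\replacedMC{\MC}{\edge}$ under which acceptance by $\automaton$ is preserved. Write $T = \coverSuffixes{\edge}{\automaton}$ and $\edge = (\state_{\edge}, \ap_{\edge}, \state'_{\edge})$. We define a \emph{contraction} map $f$ from infinite paths of $\MC$ to infinite paths of $\replacedMC{\MC}{\edge}$. Scan the path left to right; whenever the edge $\edge$ is taken, the continuation has a unique prefix $\pathvar \in T$ by the first condition of \cref{definition:replaceability} and \cref{lem:prefix}. Replace that segment by the single edge $(\state_{\edge}, \ap, \last(\pathvar))$, where $\oalpha{\pathvar} \sim_{\automaton} \ap$, and continue scanning from $\last(\pathvar)$. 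Two properties keep this well defined. First, the paths in $T$ visit $\state'_{\edge}$ only once, so no occurrence of $\edge$ starts strictly inside a replaced segment other than at its first edge. Second, $\state_{\edge} \neq \state'_{\edge}$ and $\edge \notin T$, so every segment consumes $\edge$ plus at least one more edge. Consequently the image is always an infinite path, and it never uses $\edge$.

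The first main step is to show that $f$ pushes $\prob_{\MC}$ forward to $\prob_{\replacedMC{\MC}{\edge}}$. It suffices to check this on cylinders. Fix a finite path $\pathvar'$ of $\replacedMC{\MC}{\edge}$. Then $f^{-1}(\cyl(\pathvar'))$ is a disjoint union of cylinders of $\MC$, obtained as follows: each edge of $\pathvar'$ leaving $\state_{\edge}$ with label $\ap$ and target $\state$ is expanded either as the original edge or as some $\pathvar \in \replaceC{T}{\ap, \state}$. Summing the products of probabilities gives, for each step, exactly the value $\transition(\state_{\edge}, \ap, \state)$ of \cref{definition:replacedMC}. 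Disjointness again comes from \cref{lem:prefix}, and the step leaving $\state_{\edge}$ with the original edge has probability $\transition_{\MC}(\state_{\edge}, \ap, \state)$. One subtlety arises when the last edge of $\pathvar'$ is an original, unexpanded edge: we must check that this does not collide with an expansion beginning with $\edge$. The two are distinct because $\edge$ itself is removed in the replaced MC. By uniqueness of measures on the generated $\sigma$-algebra, $\prob_{\replacedMC{\MC}{\edge}}(A) = \prob_{\MC}(f^{-1}(A))$ for all measurable $A$.

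The second main step, which we expect to be the main obstacle, is to show that $\oalpha{\infpathvar} \in \reclang{\automaton}$ iff $\oalpha{f(\infpathvar)} \in \reclang{\automaton}$. Both traces decompose into corresponding blocks: a shared label, or $\oalpha{\pathvar}$ versus $\ap$ with $\oalpha{\pathvar} \sim_{\automaton} \ap$. Since $\sim_{\automaton}$ means equal effect from every position, the runs agree at all block boundaries. Hence the run on $\oalpha{f(\infpathvar)}$ is exactly the subsequence of the run on $\oalpha{\infpathvar}$ taken at block boundaries. For the acceptance condition $(L_i, U_i)$, acceptance preservation says that if some intermediate position inside a block lies in $X_i \in \{L_i, U_i\}$, then the position at the block's end also lies in $X_i$.

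We then argue that a position in $L_i$ occurs infinitely often in one run iff it does in the other. The run on $\oalpha{f(\infpathvar)}$ is a subsequence, so infinitely many $L_i$-visits there give infinitely many in the full run. Conversely, infinitely many $L_i$-visits in the full run produce infinitely many block ends in $L_i$, since blocks are finite. The same argument applies to $U_i$. This is where \emph{generalized} Rabin conditions need care: the condition should be read as ``some $L_i$-position infinitely often, $U_i$ visited finitely often'', which set-level preservation handles. If the condition instead requires a specific position to recur, we would argue via finiteness of $\Positions$ and pigeonhole on the block-end positions. Combining the two steps, the set of accepting paths of $\replacedMC{\MC}{\edge}$ has preimage under $f$ equal to the set of accepting paths of $\MC$, which yields the claimed equality of probabilities.
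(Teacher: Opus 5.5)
Your proof is correct, but it takes a genuinely different route from the paper's.

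\textbf{The paper's route.} The paper never defines a map on infinite paths. Instead it works in the products $\MC\otimes\automaton$ and $\replacedMC{\MC}{\edge}\otimes\automaton$:
\begin{itemize}
\item it expands each edge of $\replacedMC{\MC}{\edge}$ into a set of $\MC$-paths (\cref{lem:setPath});
\item it shows that every BSCC $X$ of the replaced product lies inside a BSCC $Y$ of the original product with the same acceptance status (\cref{lem:BSCCpreserve});
\item it bounds $\rpr_{\replacedMC{\MC}{\edge}\otimes\automaton}(X)\le\rpr_{\MC\otimes\automaton}(Y)$ by expanding finite paths to $X$;
\item it obtains equality by a squeeze over accepting and non-accepting BSCCs, using almost-sure BSCC reachability (\cref{lem:asBSCC}).
\end{itemize}

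\textbf{Your route.} You go in the opposite, contracting direction. You get an exact pushforward identity and then verify acceptance pathwise, through the subsequence of the run taken at block boundaries.

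\textbf{What each approach buys.} Your route is more elementary: it needs no product, no BSCC analysis and no inequality squeeze. It also yields the stronger fact that $f$ is measure-preserving and acceptance-invariant. Finally, it pinpoints exactly where acceptance preservation is used. In the paper, that role sits inside the ``accepting iff accepting'' clause of \cref{lem:BSCCpreserve}, whose written proof only addresses the containment $X\subseteq Y$. The paper's route, in turn, only manipulates countable sums over finite paths reaching BSCCs. It therefore needs no measurability or $\pi$-system uniqueness argument, and it mirrors the product/BSCC pipeline the tool actually uses.

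\textbf{One small correction.} You attribute the fact that the image never uses $\edge$ to $\state_{\edge}\neq\state'_{\edge}$ and $\edge\notin\coverSuffixes{\edge}{\automaton}$, but that is not the right reason. Note that $\edge$ has probability $0$ in $\replacedMC{\MC}{\edge}$. The contracted edge $(\state_{\edge},\ap,\last(\pathvar))$ differs from $\edge$ because $\last(\pathvar)\neq\state'_{\edge}$. This follows from $\state'_{\edge}$ occurring only once in $\pathvar$, together with $|\pathvar|\ge 2$, which comes from $\edge\notin\coverSuffixes{\edge}{\automaton}$.
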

For the rest of this section, we provide our proof of~\cref{theorem:correctness-ofreplacing}.

Towards this,  we prepare two important lemmata.

\begin{lemma}
	\label{lem:setPath}
	Let $\edge = (\state_{\edge}, \ap_{\edge}, \state'_{\edge})$ be a replaceable edge of $\MC$ \wrt $\automaton$.

	For each edge $(\state, \ap, \state')$ of $\replacedMC{\MC}{\edge}$, there is a set $T_{(\state, \ap, \state')}$ of paths of $\MC$ such that
	\begin{itemize}
		\item $\transition_{\replacedMC{\MC}{\edge}}(\state, \ap, \state')  = \sum_{\pathvar \in T_{(\state, \ap, \state')}}\prob_{\MC}(\pathvar)$,
		\item $\oalpha{\pathvar} \sim_{\automaton} \ap$, $\state = \initpath(\pathvar)$, and $\state' = \last(\pathvar)$ for any path $\pathvar \in T_{(\state, \ap, \state')}$.
	\end{itemize}

\end{lemma}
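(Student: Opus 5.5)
The plan is to define $T_{(\state, \ap, \state')}$ explicitly by a case split that mirrors the three clauses of \cref{definition:replacedMC}, and then check the two required properties clause by clause. Write $\edge = (\state_{\edge}, \ap_{\edge}, \state'_{\edge})$. For an edge $(\state, \ap, \state')$ of $\replacedMC{\MC}{\edge}$ we have $(\state, \ap, \state') \neq \edge$, because $\transition_{\replacedMC{\MC}{\edge}}(\edge) = 0$, so only the last two clauses can occur.

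\emph{Case $\state \neq \state_{\edge}$.} Here $\transition_{\replacedMC{\MC}{\edge}}(\state, \ap, \state') = \transition_{\MC}(\state, \ap, \state') > 0$. We take $T_{(\state, \ap, \state')} \coloneqq \{(\state, \ap, \state')\}$, the one-edge path of $\MC$. Its probability is $\prob_{\MC}(\cyl((\state, \ap, \state'))) = \transition_{\MC}(\state, \ap, \state')$. Its trace is the single label $\ap$, and $\ap \sim_{\automaton} \ap$ holds by reflexivity (every single label is acceptance-preserving). Its head is $\state$ and its last state is $\state'$.

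\emph{Case $\state = \state_{\edge}$ and $(\ap, \state') \neq (\ap_{\edge}, \state'_{\edge})$.} We take
\[
T_{(\state, \ap, \state')} \coloneqq D \cup \replaceC{T}{\ap, \state'},
\]
where $D = \{(\state_{\edge}, \ap, \state')\}$ if $\transition_{\MC}(\state_{\edge}, \ap, \state') > 0$ and $D = \emptyset$ otherwise.

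\begin{itemize}
\item \emph{The equation.} The sum over $D$ contributes exactly $\transition_{\MC}(\state_{\edge}, \ap, \state')$, which is $0$ when $D = \emptyset$. The sum over $\replaceC{T}{\ap, \state'}$ is the remaining summand in \cref{definition:replacedMC}. The two sets are disjoint: every path in $\coverSuffixes{\edge}{\automaton}$ has $\edge$ as a prefix and is not equal to $\edge$, since $\edge \notin \coverSuffixes{\edge}{\automaton}$, so it has length at least $2$. The single edge $(\state_{\edge}, \ap, \state')$ has length $1$ and so cannot belong to $\replaceC{T}{\ap, \state'}$. Hence the sum over the union splits as required.
\item \emph{The second property for $D$.} This is argued exactly as in the first case.
\item \emph{The second property for $\replaceC{T}{\ap, \state'}$.} By definition of $\replaceC{T}{\ap, \state'}$, each such path $\pathvar$ satisfies $\oalpha{\pathvar} \sim_{\automaton} \ap$ and $\last(\pathvar) = \state'$. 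Since $\pathvar \in \etrace{\edge}$, it starts with $\edge$, so $\initpath(\pathvar) = \state_{\edge} = \state$.
\end{itemize}

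\emph{Case $\state = \state_{\edge}$ and $(\ap, \state') = (\ap_{\edge}, \state'_{\edge})$.} This is the edge $\edge$ itself, which has probability $0$ in $\replacedMC{\MC}{\edge}$ and so is not an edge there; it needs no treatment.

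The main obstacle, such as it is, is checking that this case split against \cref{definition:replacedMC} is consistent and exhaustive, and that the two sets in the second case are disjoint, so that no probability is double-counted. The disjointness is handled by the length argument above, which rests on $\edge \notin \coverSuffixes{\edge}{\automaton}$. Beyond that, the proof is bookkeeping, using the equality $\prob(\pathvar) = \prob(\cyl(\pathvar))$ for paths of $\MC$.
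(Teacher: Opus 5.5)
Your proposal is correct and uses the same construction as the paper. The paper takes $\{(\state,\ap,\state')\}$ when $\state \neq \state_{\edge}$, and $\{(\state,\ap,\state')\} \cup \replaceC{T}{\ap,\state'}$ otherwise, dropping the single edge when it is not an edge of $\MC$; this is exactly your set $D$. Your length-based disjointness argument via $\edge \notin \coverSuffixes{\edge}{\automaton}$ spells out a detail the paper leaves as ``straightforward to check.''
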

\begin{proof}
    For each edge $(\state, \ap, \state')$ of $\replacedMC{\MC}{\edge}$,
	we first define $X_{(\state, \ap, \state')}$ as follows:
	\begin{align*}
		X_{(\state, \ap, \state')}\coloneqq \begin{cases}
			\big\{(\state, \ap, \state') \big\} &\text{ if $\state\not = \state_{\edge}$, }\\
			\big\{(\state, \ap, \state') \big\} \cup\replaceC{T}{\ap, \state'}  &\text{ otherwise. }
		\end{cases}
	\end{align*}
	Note that $(\state, \ap, \state')$ cannot be $\edge$.
	We then define $T_{(\state, \ap, \state')}$ as follows:
	\begin{align*}
		T_{(\state, \ap, \state')}\coloneqq \begin{cases}
			X_{(\state, \ap, \state')} \backslash \{(\state,\ap,\state')\} &\text{ if $(\state,\ap,\state')\not \in \Edges_{\MC}$, }\\
			X_{(\state, \ap, \state')} &\text{ otherwise.}
		\end{cases}
	\end{align*}
	It is straightforward to check that the family $\big(T_{(\state, \ap, \state')}\big)_{(\state, \ap, \state')}$ satisfies the above condition by the definitions of the replaceability of edges and replaced MC.
\end{proof}
Clearly, two distinct sets $T_{(\state_1, \ap_1, \state'_1)}$ and $T_{(\state_2, \ap_2, \state'_2)}$ are disjoint.

\begin{lemma}
	\label{lem:BSCCpreserve}
	Let $\edge = (\state_{\edge}, \ap_{\edge}, \state'_{\edge})$ be a replaceable edge and $X\subseteq \States\times Q$ be a BSCC in $\replacedMC{\MC}{\edge} \otimes \automaton$.
	There is a (unique) BSCC $Y$ in $\MC\otimes \automaton$ such that $X\subseteq Y$, and moreover, $X$ is accepting  \revision{if and only if} $Y$ is accepting.
\end{lemma}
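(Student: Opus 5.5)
Write $\MC' = \replacedMC{\MC}{\edge}$. The plan is to lift every edge of $\MC'\otimes\automaton$ to a nonempty path of $\MC\otimes\automaton$ with the same endpoints, using \cref{lem:setPath}. Then we show that the product states traversed by these lifted paths stay inside one BSCC of $\MC\otimes\automaton$ and carry exactly the acceptance information of $X$.

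\textbf{Step 1 (lifting edges).} Take an edge $((\state,\position),\ap,(\state',\position'))$ of $\MC'\otimes\automaton$, so $\position'=\DRATransition(\position,\ap)$. By \cref{lem:setPath}, $T_{(\state,\ap,\state')}$ is nonempty, and every $\pathvar$ in it runs from $\state$ to $\state'$ with $\oalpha{\pathvar}\sim_{\automaton}\ap$. Hence $\DRATransition(\position,\oalpha{\pathvar})=\position'$, and $\pathvar$ lifts to a path in $\MC\otimes\automaton$ from $(\state,\position)$ to $(\state',\position')$. Consequently, any path in $\MC'\otimes\automaton$ lifts to a path in $\MC\otimes\automaton$ with the same endpoints, so $X$ is strongly connected in $\MC\otimes\automaton$ and lies in a single SCC $Y$ there.

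\textbf{Step 2 ($Y$ is bottom).} I argue by contradiction. Suppose some $(\state,\position)\in X$ has an outgoing path in $\MC\otimes\automaton$ that leaves $Y$, and take its first edge leaving $X$. I want to show that every $\MC$-transition out of a state of $X$ is covered, with positive probability, by some $\MC'$-transition. If the $\MC$-edge is not $\edge$, it remains an edge of $\MC'$ (its probability only grows when the source is $\state_{\edge}$), so its target already lies in $X$ because $X$ is bottom in $\MC'\otimes\automaton$. If the edge is $\edge$, then every $\pathvar\in\coverSuffixes{\edge}{\automaton}$ corresponds to an $\MC'$-edge from $\state_{\edge}$ to $\last(\pathvar)$, so the endpoints of these paths lie in $X$. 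The intermediate states of these paths, including $\state'_{\edge}$, are not themselves in $X$, but they are reachable from $X$ and, by the covering property, always lead back into $X$. An infinite $\MC$-path from $X$ therefore decomposes into segments that each return to $X$ or stay inside some covering path.

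So every product state reachable from $X$ in $\MC\otimes\automaton$ can reach $X$ again. Because every reachable state can return to $X$, the set of states reachable from $X$ is strongly connected and closed, hence a BSCC $Y\supseteq X$; it is unique since BSCCs are disjoint. The technical point in this step is the covering property: every infinite extension of $\edge$ has a prefix in $\coverSuffixes{\edge}{\automaton}$. From this, every finite extension of $\edge$ is either a prefix of some member of $\coverSuffixes{\edge}{\automaton}$ or extends one. This lets me show that intermediate states can reach $X$.

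\textbf{Step 3 (acceptance), the main obstacle.} Here acceptance preservation is used. $Y\setminus X$ consists only of the intermediate states $(\state'',\DRATransition(\position,\oalpha{\pathvar'}))$, where $\pathvar'$ is a strict prefix of some $\pathvar\in\coverSuffixes{\edge}{\automaton}$ starting from some $(\state_{\edge},\position)\in X$. By acceptance preservation, if $\DRATransition(\position,\oalpha{\pathvar'})\in L_i$ (resp.\ $U_i$), then $\DRATransition(\position,\oalpha{\pathvar})\in L_i$ (resp.\ $U_i$), and the latter is the position of the endpoint, which lies in $X$. So for each $i$, $Y$ meets $L_i$ iff $X$ does, and $Y$ avoids $U_i$ iff $X$ does. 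Hence $X$ is accepting iff $Y$ is.

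The delicate bookkeeping is the claim that every state of $Y$ is either in $X$ or such an intermediate state. I would prove it by induction on the length of a path from $X$. At each step, a transition either is an $\MC'$-edge landing in $X$ or extends a partial covering path; once that partial path reaches a member of $\coverSuffixes{\edge}{\automaton}$, we are back in $X$. I expect handling extensions that pass through $\state_{\edge}$ again inside a covering path to require the most care.
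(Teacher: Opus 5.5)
Your proof is correct, and its BSCC part is the paper's argument in different packaging. The paper also takes $Y$ to be the set of product states reachable from $X$ in $\MC\otimes\automaton$ and shows each can return to $X$. Instead of your closure/induction over partial covering paths, it fixes a path from $X$ to an outside state that meets $X$ only at its first state. Such a path must begin with the product copy of $\edge$. Every member of $\coverSuffixes{\edge}{\automaton}$ ends, in the product, inside $X$, so the covering property forces the projection of this path to be a strict prefix of some member, which then completes back into $X$.

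This last-exit normalization directly gives the fact your Step~3 needs: every state of $Y\setminus X$ is an intermediate state of a covering path launched from some $(\state_{\edge},\position)\in X$. So the planned induction is unnecessary. The case you worry about, a covering path passing through $\state_{\edge}$ again, needs no special care, and in fact cannot occur. If a nonempty strict prefix $\pathvar'$ of a member $\pathvar$ ended at $\state_{\edge}$, the member covering an extension of $\pathvar'\cdot\edge$ would either be a prefix of $\pathvar'$, and hence a strict prefix of $\pathvar$, contradicting \cref{lem:prefix}, or it would contain $\state'_{\edge}$ twice.

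Your Step~3 goes beyond the paper. The paper's proof stops once $X\subseteq Y$ is established and never argues the ``accepting iff'' clause. Your argument supplies it soundly: by acceptance preservation, a position in $L_i$ (resp.\ $U_i$) at an intermediate state forces the position of the corresponding endpoint, which lies in $X$, into $L_i$ (resp.\ $U_i$).

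One remark should be dropped. The parenthetical claim in Step~2 that intermediate states of covering paths, including $\state'_{\edge}$, lie outside $X$ is false in general, because $\state'_{\edge}$ may stay reachable in $\replacedMC{\MC}{\edge}$ along other edges. For example, give all edges a single label and use a one-position DRA. Take $\state_0\to\state_1$ and $\state_0\to\state_2$ with probability $1/2$ each, and $\state_2\to\state_1$, $\state_1\to\state_0$ with probability $1$. Then $\state_0\to\state_1$ is replaceable, and the product of the replaced MC has a single BSCC, which contains the copy of $\state_1$. Nothing in your proof depends on this remark. You only use that states of $Y\setminus X$ are intermediate states, not the converse.
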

\begin{proof}
	We can immediately see that $X$ is an SCC in $\MC\otimes \automaton$.
	Suppose that there is a state $(\state, \position)\not \in X$ such that
	the state $(\state, \position)$ is reachable from $X$ in $\MC\otimes \automaton$;
	if there are no such states, we can conclude that $X$ itself is a BSCC in $\MC\otimes \automaton$.
	We show that
	there is a path from  such $(\state, \position)$ to $X$ in $\MC\otimes \automaton$, which implies that $X$ is included in the BSCC
	$Y\coloneqq \{(\state, \position)\mid (\state, \position)\text{ is reachable from $X$ in $\MC\otimes \automaton$}\}$.

	Take a finite path  $\pathvar$ from $(\state_0, \position_0)\in X$ to  $(\state, \position)$ in $\MC\otimes \automaton$;
	without loss of generality, we assume that $\pathvar$ does not have any cycles,
	and $(\state_0, \position_0)$ is the unique state that is contained in $X$ in $\pathvar$.
	Under this assumption, the prefix $\pathvar_0 = \big((\state_0, \position_0), \ap, (\state_{1}, \position_{1})\big)$ of $\pathvar$ is an edge that does not exist in $\replacedMC{\MC}{\edge} \otimes \automaton$.
	By the construction of $\replacedMC{\MC}{\edge} \otimes \automaton$, we have $\state_0 = \state_{\edge}$, $\ap = \ap_{\edge}$, $\state_1 = \state'_{\edge}$, and $\position_1 = \DRATransition_{\automaton}(\position_0, \ap_{\edge})$.
	That is, the prefix $\pathvar_0$ is the product edge induced by the replaceable edge $\edge$.
	Then, by the definition of the replaceability of edges, we can see that there is a state $(\state_x, \position_x)\in X$ that satisfies one of the following conditions:
	\begin{itemize}
		\item the state $(\state_x, \position_x)$ appears in the suffix $\pathvar_{1}$ starting from $(\state_{1}, \position_{1})$,
		\item the state $(\state_x, \position_x)$ is reachable from $(\state, \position)$ in $\MC\otimes \automaton$,
	\end{itemize}
	where $\pathvar_{1}$ is the suffix of $\pathvar$ that satisfies $\pathvar = \pathvar_0\cdot \pathvar_1$.
	However, the first case does not hold because it contradicts the assumption of the uniqueness of $(\state_0, \position_0)$ in $\pathvar$.
	We thus see that there is a path from $(\state, \position)$ to $X$.

\end{proof}

\begin{example}
	In~\cref{lem:BSCCpreserve}, we cannot expect $X = Y$ to hold in general.
Consider the left MC in~\cref{fig:counterEGMC}.
With respect to the DRA in~\cref{fig:motivatingDRA}, the edge from $\state_0$ to $\state_1$ is replaceable, and the right MC in~\cref{fig:counterEGMC} is the replaced MC obtained by the replacement of the edge.
\cref{fig:counterEGProd} illustrates the product MCs of these two MCs.
Clearly, the unique BSCC in the left product MC is strictly larger than that in the right product MC.
\end{example}

\begin{figure}[t]
	\subfloat[t][The original MC.]{%
	\begin{minipage}[t]{0.45\columnwidth}
	\begin{tikzpicture}[
		->,                    %
		>=Stealth,             %
		shorten >=1pt,         %
		auto,                  %
		node distance=2.5cm,   %
		semithick,              %
		every state/.style={
		font=\small,
		minimum size=6mm
		}
		]
		\node[state, initial] (q0) at (0, 0) {$\state_0$};
		\node[state] (q1) at (2, 0) {$\state_1$};

		\path
		(q0) edge[bend left] node {$\{a\}, 1$} (q1)
		(q1) edge[bend left] node {$\{b\}, 1$} (q0);
	\end{tikzpicture}
	\end{minipage}
	}
	\hfill
	\subfloat[t][The MC after replacing the edge from $\state_0$ to $\state_1$.]{%
	\begin{minipage}[t]{0.45\columnwidth}
	\begin{tikzpicture}[
	->,                    %
	>=Stealth,             %
	shorten >=1pt,         %
	auto,                  %
	node distance=2.5cm,   %
	semithick,              %
	every state/.style={
		font=\small,
		minimum size=6mm
	}
	]
	\node[state, initial] (q0) at (0, 0) {$\state_0$};
	\node[state] (q1) at (2, 0) {$\state_1$};

	\path
		(q1) edge[bend left] node {$\{b\}, 1$} (q0)
    (q0)[loop above] edge node {$\{b\}, 1$} ();
	\end{tikzpicture}
	\end{minipage}
	}
\caption{
	An MC and its replaced MC.
}
	\label{fig:counterEGMC}
\end{figure}

\begin{figure}[t]
    \subfloat[t][The product MC of the original MC.]{
	\begin{minipage}[t]{0.45\columnwidth}
	\begin{tikzpicture}[
		->,                    %
		>=Stealth,             %
		shorten >=1pt,         %
		auto,                  %
		node distance=2.5cm,   %
		semithick,              %
		every state/.style={
		font=\small,
		minimum size=6mm
		},scale=0.90,every node/.style={initial text={},transform shape}
		]
		\node[state, initial, accepting] (q0) at (0, 0) {$\state_0, \position_0$};
		\node[state] (q1) at (2, 0) {$\state_1, \position_1$};
		\node[state, accepting] (q2) at (0, -2) {$\state_1, \position_0$};
		\node[state] (q3) at (2, -2) {$\state_0, \position_1$};

		\path
		(q0) edge[bend left] node {$\{a\},1$} (q1)
		(q2) edge node {$\{b\},1$} (q0)
		(q3) edge node[right] {$\{a\},1$} (q1)
		(q1) edge[bend left] node {$\{b\}, 1$} (q0);
	\end{tikzpicture}
	\end{minipage}
	}
	\hfill
	\subfloat[t][The product MC of the replaced MC.]{%
	\begin{minipage}[t]{0.45\columnwidth}
	\begin{tikzpicture}[
		->,                    %
		>=Stealth,             %
		shorten >=1pt,         %
		auto,                  %
		node distance=2.5cm,   %
		semithick,              %
		every state/.style={
		font=\small,
		minimum size=6mm
		},scale=0.90,every node/.style={initial text={},transform shape}
		]
		\node[state, initial, accepting] (q0) at (0, 0) {$\state_0, \position_0$};
		\node[state] (q1) at (2, 0) {$\state_1, \position_1$};
		\node[state, accepting] (q2) at (0, -2) {$\state_1, \position_0$};
		\node[state] (q3) at (2, -2) {$\state_0, \position_1$};

		\path
		(q2) edge node {$\{b\}, 1$} (q0)
		(q3) edge node[pos=0.3,above right] {$\{b\}, 1$} (q0)
		(q1) edge node[above] {$\{b\}, 1$} (q0)
		(q0)[loop above] edge node {$\{b\}, 1$} ();
	\end{tikzpicture}
	\end{minipage}
	}
\caption{
	The product MCs of~\cref{fig:counterEGMC} with the automaton~\cref{fig:motivatingDRA}.
}
	\label{fig:counterEGProd}
\end{figure}

We conclude this section by proving the theorem with the help of the above two lemmata.
\begin{proof}[Proof of \cref{theorem:correctness-ofreplacing}]
	Let $X$ be a BSCC in $\replacedMC{\MC}{\edge} \otimes \automaton$.
	By~\cref{lem:BSCCpreserve}, we have the BSCC $Y$ that includes $X$ in $\MC\otimes \automaton$.
	It suffices to show the following inequality:
	\begin{gather}
		\label{eq:keyEQ}
		 \rpr_{\replacedMC{\MC}{\edge}\otimes \automaton}(X) \leq \rpr_{\MC\otimes \automaton}(Y).
	\end{gather}
	To see this, let $\widehat{X}$ be the union $\cup_j X_j $ of the accepting BSCCs $X_j$ in $\replacedMC{\MC}{\edge} \otimes \automaton$,
	 and let $\widehat{Y}$ be the union $\cup_k Y_k $ of the accepting BSCCs $Y_k$ in $\MC \otimes \automaton$.
	Similarly, let $\widetilde{X}$ and  $\widetilde{Y}$ be the union of non-accepting BSCCs in $\replacedMC{\MC}{\edge} \otimes \automaton$ and $\MC \otimes \automaton$, respectively.
	We have the following inequalities:
	\begin{align*}
		\rpr_{\replacedMC{\MC}{\edge}\otimes \automaton}(\widehat{X}) &= \sum_{j} \rpr_{\replacedMC{\MC}{\edge}\otimes \automaton}(X_j)\\
		 &\leq \sum_{k} \rpr_{\MC\otimes \automaton}(Y_k)
		= \rpr_{\MC\otimes \automaton}(\widehat{Y}),\\
		\rpr_{\replacedMC{\MC}{\edge}\otimes \automaton}(\widehat{X}) &= 1 - \rpr_{\replacedMC{\MC}{\edge}\otimes \automaton}(\widetilde{X}) \\
		&\geq 1 - \rpr_{\MC\otimes \automaton}(\widetilde{Y}) = \rpr_{\MC\otimes \automaton}(\widehat{Y}).
	\end{align*}
	The second inequality holds due to the almost-sure reachability of BSCCs (see~\cref{lem:asBSCC}).
	The above two inequalities imply that $\rpr_{\replacedMC{\MC}{\edge}\otimes \automaton}(\widehat{X}) = \rpr_{\MC\otimes \automaton}(\widehat{Y}) $.

	We conclude the proof by showing~\cref{eq:keyEQ}.
	For a finite path $\pathvar$ in $\MC$ and a position $\position$ in $\automaton$, we write $\pathvar(\position)$ for the unique path in $\MC\otimes \automaton$ from $(\initpath(\pathvar), \position)$ induced by $\pathvar$.
	This is well-defined because the product construction keeps the original labels on transitions.

	Take a path
	\[
		\pathvar = \big((\state_0, \position_0), \ap_1, (\state_1, \position_1)\big)\cdots \big((\state_{m-1}, \position_{m-1}), \ap_m, (\state_m, \position_m)\big)
	\]
	from $(\initialState, \initialPosition) = (\state_0, \position_0)$ to $(\state_m, \position_m) \in X$ in $\replacedMC{\MC}{\edge}\otimes \automaton$.
	Let $\ddot{\pathvar} \coloneqq \edge_1\cdots \edge_m$ be the unique path in $\replacedMC{\MC}{\edge}$ that induces $\pathvar$, where $\edge_i \coloneqq (\state_{i-1}, \ap_i, \state_i)$ for each $i\in \{1,\dots,m\}$.
	We construct a set $P(\ddot{\pathvar})$ of paths from $(\initialState, \initialPosition)$ to $X(\subseteq Y)$ in $\MC\otimes \automaton$.
	Specifically, we let $P(\ddot{\pathvar}) \coloneqq \{\pathvar_1(\initialPosition)\cdot \pathvar_2(\position_1)\cdots \pathvar_m(\position_{m-1}) \mid \pathvar_i\in T_{\edge_i} \text{ for all }i=1,\dots, m\}$, 
	where $\position_i$ is the last position in $\pathvar_i(\position_{i-1})$.
	Note that these positions $\position_i$ do not depend on the choice of $\pathvar_i\in T_{\edge_i}$.
	This is because for any $\pathvar_i\in T_{\edge_i}$, we have $\oalpha{\pathvar_i}\sim_{\automaton} \ap_j$ by~\cref{lem:setPath}, and thus, $\pathvar_i(\position_{i-1})$ ends in $(\state_i, \position_i)$.

	Moreover, distinct paths $\pathvar_i$ in $T_{\edge_i}$ from  $i= 1$ to $i = m$ yield the paths in $P(\ddot{\pathvar})$ whose cylinder sets are disjoint. 
	To see this, 
	let $\pathvar_i$ and $ \pathvar'_i$ in $T_{\edge_i}$ from $i = 1$ to $i = m$ and consider the two paths $\pathvar_1\cdots\pathvar_m$ and $\pathvar'_1\cdots\pathvar'_m$. 
	Assume that there is $j$ such that $\pathvar_j\not = \pathvar'_j$; take the least $j$ among them.  
	Then these two segments are distinct members of $T_{\edge_j}$.
	By the construction of $T_{\edge_j}$ in~\cref{lem:setPath}, distinct paths in $T_{\edge_j}$ have disjoint cylinder sets.
	Thus, the cylinder sets of $\pathvar_1\cdots\pathvar_m$ and $\pathvar'_1\cdots\pathvar'_m$ are disjoint.

	By the construction, we have
	\begin{align*}
		\prob_{\MC\otimes \automaton}\big(P(\ddot{\pathvar})\big)
		&= \sum_{\pathvar_1\in T_{\edge_1},\dots,\pathvar_m\in T_{\edge_m}} \prod^{m}_{i=1}\prob_{\MC}(\pathvar_i)\\
		&= \prod^{m}_{i=1}\sum_{\pathvar_i\in T_{\edge_i}}\prob_{\MC}(\pathvar_i)\\
		&= \prod^{m}_{i=1}\transition_{\replacedMC{\MC}{\edge}}(\edge_i)\\
		&= \prob_{\replacedMC{\MC}{\edge}\otimes \automaton}(\pathvar),
	\end{align*}
	where the third equality follows from~\cref{lem:setPath}.

	Let $Z$ be the set of paths from $(\initialState, \initialPosition)$ to $X$ in $\replacedMC{\MC}{\edge}\otimes \automaton$.
	For distinct $\pathvar_1, \pathvar_2 \in Z$, $\ddot{\pathvar_1}$ and $\ddot{\pathvar_2}$ are distinct, and thus, $P(\ddot{\pathvar_1})$ and $P(\ddot{\pathvar_2})$ are also disjoint.
	Therefore, we have
	\begin{align*}
		\rpr_{\replacedMC{\MC}{\edge}\otimes \automaton}(X)
		&= \sum_{\pathvar\in Z}\prob_{\replacedMC{\MC}{\edge}\otimes \automaton}(\pathvar)\\
		&= \sum_{\pathvar\in Z}\prob_{\MC\otimes \automaton}\big(P(\ddot{\pathvar})\big)\\
		&= \prob_{\MC\otimes \automaton}\Big(\bigcup_{\pathvar\in Z}P(\ddot{\pathvar})\Big)\\
		&\leq \rpr_{\MC\otimes \automaton}(Y).
	\end{align*}
	This implies~\cref{eq:keyEQ}.

\end{proof}

\section{Erasability of States}
\label{sec:eraseStates}

We continue to fix $\MC$ as an MC and $\automaton$ as a DRA.
We introduce the notion of \emph{erasability of states}, which forms the foundation of our specification-guided \revision{path shortcutting}.
\revision{In a nutshell, we eliminate a state if all its incoming edges are replaceable. In fact, our algorithm, which is formally introduced in~\cref{section:algorithm}, attempts to eliminate each state one by one by checking whether all its incoming edges are replaceable.}

\begin{definition}[erasability of state]
    \label{definition:erasability}
	A state $\state$ is \emph{erasable} if $\state$ is not the initial state and all incoming edges $\InEdges{\state}$ of $\state$ are replaceable.
\end{definition}

\begin{example}
	In the MC shown in~\cref{fig:motivatingMC}, the states $\state_1$ and $\state_4$ are erasable, while no other states are erasable.
\end{example}

As expected, given an erasable state $\state$,  we can construct an equivalent MC $\erase{\MC}{\state}$  that does not contain the erasable state $\state$ from $\MC$.
\begin{definition}[erased MC]
	Let $\state$ be an erasable state.
	The \emph{erased MC} $\erase{\MC}{\state}$ for $\MC$ and $\state$ is given by
	the MC $\replacedMC{\MC}{\InEdges{\state}}$ that excludes the state $\state$,
	where  $\replacedMC{\MC}{\InEdges{\state}}$ is obtained by iteratively applying the edge replacement of \cref{definition:replacedMC} to each $\edge \in \InEdges{\state}$.
\end{definition}

\begin{lemma}
	\label{lem:wd-erased}
	The \emph{erased MC} $\erase{\MC}{\state}$ is well-defined.
\end{lemma}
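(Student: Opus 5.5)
We must show three things: the iterated replacement $\replacedMC{\MC}{\InEdges{\state}}$ is well-defined, removing $\state$ from it yields a valid MC, and the result does not depend on the order of replacements. The key facts are that replacing an incoming edge of $\state$ never creates a new incoming edge of $\state$, and never breaks replaceability of the other incoming edges.

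\textbf{Step 1 (no new incoming edges).} Let $\edge = (\state_{\edge}, \ap_{\edge}, \state)\in \InEdges{\state}$ be replaceable. By \cref{definition:replacedMC}, the only transitions whose probability increases are $(\state_{\edge}, \ap, \state'')$ with $\replaceC{T}{\ap, \state''}\neq\emptyset$. Since every path in $\coverSuffixes{\edge}{\automaton}$ contains $\state$ only once, namely as the target of its first edge, we have $\replaceC{T}{\ap, \state} = \emptyset$, as already remarked. Hence no transition into $\state$ is created or increased. The edge $\edge$ itself is removed, so $\InEdges{\state}$ strictly shrinks.

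\textbf{Step 2 (replaceability is preserved).} Let $\edge'=(\state_{1},\ap_1,\state)$ be another incoming edge of $\state$, and consider it in $\replacedMC{\MC}{\edge}$. I would take as its new MCCS the set $T'$ obtained from $\coverSuffixes{\edge'}{\automaton}$ by rewriting each path: every occurrence of $\edge$ followed by a segment $\pathvar\in\coverSuffixes{\edge}{\automaton}$ is replaced by the single new edge $(\state_{\edge},\ap,\last(\pathvar))$ with $\oalpha{\pathvar}\sim_{\automaton}\ap$.

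There is a simplification to check first. A path in $\coverSuffixes{\edge'}{\automaton}$ visits $\state$ only once, at position one. Hence after its first edge it never traverses $\edge$, whose target is $\state$. So the paths of $T'$ consist of the first edge $\edge'$ followed by edges of $\MC$ that are untouched, except that edges leaving $\state_{\edge}$ may now include new ones.

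The new edges leaving $\state_{\edge}$ create new paths, and these must still be covered by prefixes satisfying the compatibility conditions. To handle this, I would use the decomposition from the proof of \cref{theorem:correctness-ofreplacing}. Infinite paths of $\replacedMC{\MC}{\edge}$ starting with $\edge'$ correspond to disjoint families of paths of $\MC$ via the sets $T_{(\cdot)}$ of \cref{lem:setPath}. A covering prefix for a path of $\MC$ then maps to a covering prefix in the replaced MC, using that $\sim_{\automaton}$ is a congruence with respect to concatenation, since it is defined via $\DRATransition(\position,\cdot)$.

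Acceptance preservation also needs care. A new edge stands for an acceptance-preserving segment, so any intermediate position sets reached inside the segment are also reached at its end. This keeps the acceptance-preserving property of the rewritten traces.

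\textbf{This is the main obstacle:} the minimality condition (the third bullet of \cref{definition:replaceability}) may fail after contraction, because a shortened prefix could become $\sim_{\automaton}$-equivalent to a label earlier. If so, I would instead re-take $T'$ as the set of minimal compatible prefixes, which still covers all infinite paths starting with $\edge'$. All remaining conditions are inherited, since they are closed under taking such minimal prefixes.

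\textbf{Step 3 (order independence and removal).} By Steps 1–2, the edges of $\InEdges{\state}$ can be replaced one at a time in any order, and after $|\InEdges{\state}|$ steps $\state$ has no incoming edges. Order independence follows because each replacement only changes outgoing distributions of the source of the edge, by the contributions of its own MCCS. The resulting distributions coincide, computed as sums of $\prob(\pathvar)$ over the same decomposed path sets.

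Finally, delete $\state$ and its outgoing edges. Because $\state$ is not the initial state and has no incoming edges, no remaining state has positive probability to $\state$. Therefore every remaining row still sums to $1$ by the well-definedness lemma for replaced MCs, and $\erase{\MC}{\state}$ is an MC.
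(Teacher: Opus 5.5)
Your Step 1 and the final deletion of $\state$ are fine. Steps 2 and 3, however, miss the fact the paper's short proof rests on. For a second incoming edge $\edge'$ of $\state$, the set $\coverSuffixes{\edge'}{\automaton}$ is \emph{literally the same} in $\replacedMC{\MC}{\edge}$ as in $\MC$, and each of its paths has the same probability in both. You did not use the reason for this: no path of $\coverSuffixes{\edge'}{\automaton}$ passes through the source $\state_{\edge}$ of $\edge$ at an interior position. To see this, suppose $\pathvar=\pathvar_a\pathvar_b$ with $\last(\pathvar_a)=\state_{\edge}$ and $\pathvar_b$ non-empty, and extend $\pathvar_a\cdot\edge$ to an infinite path. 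Its covering prefix cannot contain $\edge$, since that would revisit $\state$. So the covering prefix is a prefix of $\pathvar_a$, hence a strict prefix of $\pathvar$, different from $\edge'$, whose trace is compatible with a label. This contradicts the minimality condition. The same argument shows that every infinite path of $\replacedMC{\MC}{\edge}$ starting with $\edge'$ coincides with a path of $\MC$ up to its first return to $\state_{\edge}$, and that its covering prefix has already ended there. Rows other than that of $\state_{\edge}$ are untouched, and $\replaceC{T}{\ap,\state}=\emptyset$, so even when $\state_{\edge}=\state_{\edge'}$ the first edge $\edge'$ keeps its probability. Hence $\coverSuffixes{\edge'}{\automaton}$ is again the MCCS, with unchanged path probabilities.

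Without this fact, Step 2 chases an obstacle that never arises. Covering prefixes never contain $\edge$, so nothing is ever contracted. Your unexplained step, `a covering prefix for a path of $\MC$ maps to a covering prefix in the replaced MC', is justified only by this observation. More seriously, Step 3 is unsupported. Order independence needs that replacing $\edge'$ inside $\replacedMC{\MC}{\edge}$ adds exactly the amounts $\sum_{\pathvar\in\replaceC{T}{\ap,x}}\prob(\pathvar)$ that it adds inside $\MC$, i.e.\ the same MCCS with the same path probabilities. Your Step 2 explicitly allows $T'$ to be a re-taken, different set whose paths might run through the rewritten row of $\state_{\edge}$; in that case `sums over the same decomposed path sets' has no basis. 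With the invariance in hand, Step 2 becomes immediate ($T'=\coverSuffixes{\edge'}{\automaton}$). Step 3 is then a direct computation: in either order, the rows of $\state_{\edge}$ and $\state_{\edge'}$ receive the same contributions, and $\edge,\edge'$ are set to $0$. This is the paper's argument.
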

\begin{proof}
	Suppose that there are two distinct replaceable edges $\edge_1$ and $\edge_2$ that are in $\InEdges{\state}$.
	We first see that the edge $\edge_2$ is still replaceable in $\replacedMC{\MC}{\edge_1}$.
	This is in fact trivial since the MCCS  $\coverSuffixes{\edge_2}{\automaton}$ does not change between $\MC$ and $\replacedMC{\MC}{\edge_1}$.
	It is also straightforward to see that  $\replacedMC{\replacedMC{\MC}{\edge_2}}{\edge_1} = \replacedMC{\replacedMC{\MC}{\edge_1}}{\edge_2}$.
\end{proof}

We finally present our main theorem: the correctness of erased MCs.

\begin{theorem}[correctness of erasing]
    \label{theorem:correctness_of_erasing}
	Given an erasable state $\state$ of $\MC$ \wrt $\automaton$,
	we have
	\[
		\prob(\MC\vDash \automaton) = \prob(\erase{\MC}{\state}\vDash \automaton).
	\]
\end{theorem}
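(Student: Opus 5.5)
The plan is to split the erasure into two steps: a finite sequence of edge replacements, followed by the deletion of a state that has become unreachable. Enumerate $\InEdges{\state} = \{\edge_1,\dots,\edge_k\}$ and set $\MC_0 \coloneqq \MC$ and $\MC_j \coloneqq \replacedMC{\MC_{j-1}}{\edge_j}$.

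The first step is to establish that each $\edge_j$ is still replaceable in $\MC_{j-1}$. This is the argument from the proof of \cref{lem:wd-erased}, applied inductively: the MCCS $\coverSuffixes{\edge_j}{\automaton}$ consists of paths of the original $\MC$. One must check that these paths, and the covering property, survive the earlier replacements. The earlier replaced edges $\edge_i$ ($i<j$) all end in $\state$, while the paths in the MCCS of $\edge_j$ visit $\state$ only at their second state. I would make this precise by noting that replacing $\edge_i$ only modifies outgoing edges of $\initpath(\edge_i)$, and the modified probabilities still decompose into paths of $\MC$ as in \cref{lem:setPath}. Granting this, \cref{theorem:correctness-ofreplacing} applies $k$ times and gives $\prob(\MC\vDash\automaton) = \prob(\replacedMC{\MC}{\InEdges{\state}}\vDash\automaton)$.

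The second step is to show that in $\MC_k = \replacedMC{\MC}{\InEdges{\state}}$ the state $\state$ has no incoming edges at all. Each replacement sets $\transition(\edge_j)=0$. Any new edge added from $\state_{\edge_j}$ goes to $\last(\pathvar)$ for some $\pathvar\in \coverSuffixes{\edge_j}{\automaton}$. By the last condition of \cref{definition:replaceability}, $\state$ appears only once in such a $\pathvar$, namely as the target of the first edge. Hence $\last(\pathvar)\neq\state$ because $\pathvar\neq\edge_j$, and indeed $\replaceC{T}{\ap,\state'_{\edge}}=\emptyset$ as remarked earlier. Edges of $\MC_{j-1}$ that are not outgoing from $\state_{\edge_j}$ are unchanged, so no new incoming edge to $\state$ ever appears. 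Self-loops on $\state$ are excluded because replaceable edges satisfy $\state\neq\state'$. If $\state$ had a self-loop, that loop would itself be in $\InEdges{\state}$ and thus would have to be replaceable, which is impossible; so $\state$ has no self-loop to begin with.

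The third step uses $\state\neq\initialState$: since $\state$ has no incoming edges in $\MC_k$, no infinite path from $\initialState$ ever visits $\state$. Consequently the measure on $\eotrace{}$-paths from $\initialState$ in $\MC_k$ coincides with that of $\erase{\MC}{\state}$, the restriction of $\MC_k$ to $\States\setminus\{\state\}$. That restriction is a valid MC, because no remaining state has positive probability into $\state$. Paths and cylinder probabilities are literally identical, so $\prob(\MC_k\vDash\automaton)=\prob(\erase{\MC}{\state}\vDash\automaton)$.

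I expect the main obstacle to be the first step: justifying rigorously that the MCCS of a later edge remains valid after earlier replacements, since the paper only calls this ``trivial.'' If the direct argument is messy, I would instead observe that replacement only alters the outgoing distribution at $\state_{\edge_i}$ by rerouting the mass of $\edge_i$. Any path of $\MC$ through $\edge_i$ then corresponds to a family of paths in $\MC_i$ of equal total probability and $\sim_\automaton$-equivalent traces. I would check the covering and minimality conditions for $\edge_j$ through this correspondence.
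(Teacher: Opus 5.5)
Your proposal is correct and follows the paper's route. The paper's proof just combines \cref{lem:wd-erased} (the remaining incoming edges of $\state$ stay replaceable with unchanged MCCS) with repeated applications of \cref{theorem:correctness-ofreplacing}. It leaves implicit your second and third steps (no edge into $\state$ survives, and deleting the then-unreachable $\state$ changes nothing), and you handle both correctly.

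For the step you flag as the main obstacle, the missing observation is the following. Every path that starts with $\edge_j$ and ends in $\initpath(\edge_i)$ already has a prefix in $\coverSuffixes{\edge_j}{\automaton}$; to see this, extend it by $\edge_i$ and use covering plus the single-visit condition for $\state$. Hence neither the MCCS paths of $\edge_j$ nor the covering argument ever use the rerouted outgoing edges of $\initpath(\edge_i)$. Meanwhile $\edge_j$ itself keeps its probability, because no path in $\coverSuffixes{\edge_i}{\automaton}$ ends in $\state$.
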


\begin{proof}
	This is an immediate consequence of~\cref{lem:wd-erased} and~\cref{theorem:correctness-ofreplacing}.
\end{proof}

We conclude this section by presenting an observation on the erasability of states.

\begin{proposition}
	\label{proposition:erasability_single}
	Suppose that for any $\ap_1$ and $\ap_2\in \Sigma$, the word $\ap_1\cdot \ap_2$ is acceptance-preserving and there is $\ap\in \Sigma$ such that $\ap_1\cdot \ap_2 \sim_{\automaton}\ap$.

	If there are no erasable states in $\MC$ \wrt $\automaton$, then every state $\state$ except the initial state in $\MC$ has a self-loop.
\end{proposition}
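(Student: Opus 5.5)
We argue by contraposition. Let $\state$ be a non-initial state of $\MC$ without a self-loop. We show that $\state$ is erasable in the sense of \cref{definition:erasability}, contradicting the hypothesis that $\MC$ has no erasable states. Since $\state$ is not initial, it suffices to show that every incoming edge $\edge = (\state_1, \ap_1, \state) \in \InEdges{\state}$ is replaceable in the sense of \cref{definition:replaceability}. If $\InEdges{\state}$ is empty, this holds vacuously.

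Fix such an edge $\edge$. First, $\state_1 \neq \state$, since otherwise $\edge$ would be a self-loop on $\state$. As the candidate MCCS we take the set of two-edge extensions
\[
T \coloneqq \{\edge \cdot \edge' \mid \edge' \in \OutEdges{\state}\}.
\]
We then check the five conditions of \cref{definition:replaceability} in order.

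\emph{Covering.} Any infinite path in $\eotrace{\edge}$ continues from $\state$ by some edge $\edge' \in \OutEdges{\state}$. The set $\OutEdges{\state}$ is non-empty because $\MCTransition$ is a probability distribution at $\state$. Hence $\edge\cdot\edge'$ is a prefix of the infinite path and lies in $T$.

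\emph{Acceptance-preservation and compatibility.} Each $\pathvar = \edge\cdot\edge' \in T$ with $\edge' = (\state, \ap_2, \state_2)$ has trace $\ap_1\cdot\ap_2$. By the assumption of the proposition, this trace is acceptance-preserving and satisfies $\ap_1\cdot\ap_2 \sim_{\automaton} \ap$ for some $\ap \in \Sigma$.

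\emph{Minimality.} The strict prefixes of $\edge\cdot\edge'$ are the empty path and $\edge$ itself. The prefix $\edge$ is excluded explicitly by the condition. The empty path has the empty trace, which is not a non-empty word and hence not in the domain of $\sim_{\automaton}$. So this condition holds vacuously.

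\emph{$\edge \notin T$.} Every path in $T$ has length two, so $\edge \notin T$.

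\emph{Single occurrence of $\state$.} The state sequence of $\edge\cdot\edge'$ is $\state_1, \state, \state_2$. We have $\state_1 \neq \state$ as shown above, and $\state_2 \neq \state$ because $\state$ has no self-loop. Thus $\state$ appears exactly once.

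The only delicate step is the last one, together with the minimality condition. Both depend on reading "appears only once" as referring to the sequence of visited states, and on the empty prefix not being subject to $\sim_{\automaton}$. Under these readings, the absence of a self-loop is exactly what makes the construction work: a self-loop $(\state, \ap_2, \state)$ would produce a path visiting $\state$ twice. With all five conditions verified, every incoming edge of $\state$ is replaceable, so $\state$ is erasable. This contradiction proves the proposition.
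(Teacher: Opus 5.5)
Your proposal is correct and takes the same route as the paper. You fix a non-initial state without a self-loop and obtain a contradiction by showing that every incoming edge is replaceable, using the hypothesis on two-letter words. The paper leaves the witnessing set implicit; you make it explicit as the two-edge extensions $\edge\cdot\edge'$ with $\edge'\in\OutEdges{\state}$ and check each condition of \cref{definition:replaceability}, which is exactly the intended argument.
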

\begin{proof}
	Assume that there is a state $\state$ that is not the initial state and does not have any self-loops.
	Since there are no erasable states in $\MC$, there is an incoming edge $\edge\in \InEdges{\state}$.
	This immediately leads to a contradiction since such incoming edges are all replaceable due to the assumption.
\end{proof}

The assumption in \cref{proposition:erasability_single} holds, for instance, for any LTL formula of the form $\Eventually \psi$, where $\psi$ is a Boolean combination of atomic propositions.
One concrete example is the abstraction of the MC shown in \cref{figure:mc_haddad} to \cref{figure:product_mc_haddad}.

\section{Specification-guided \revision{Path Shortcutting}}\label{section:algorithm}

\begin{algorithm}[tbp]
 \ShortVersion{\footnotesize}\LongVersion{\small}
 \caption{Specification-guided \revision{path shortcutting} of MCs.}%
 \label{algorithm:abstraction}
 \DontPrintSemicolon{}
 \SetKwFunction{FConstructMCCS}{constructMCCS}
 \SetKwFunction{FErase}{Erase}

 \Input{An MC $\MC = (\States, \AP, \MCTransition, \initialState)$, a DRA $\automaton$, $\MCCSBound \in \setN$}
 \Output{An MC $\reducedMC$ satisfying $\prob(\MC \vDash \automaton) = \prob(\reducedMC \vDash \automaton)$}

 $\reducedMC \gets \MC$\;
 \For{$\state \in \States \setminus \{ \initialState \}$}{
   \If{$\exists \ap \in \Sigma.\, (\state, \ap, \state) \in \InEdges{\state}$} {
     \Continue{}\;
   }
   \If{$\exists \edge \in \InEdges{\state}.\, \FConstructMCCS(\revision{\MC'}, \edge, \automaton, \MCCSBound) = \bot$ \nllabel{algorithm:abstraction:check_erasable}} {
     \Continue{}
     \tcp*[r]{MCCS construction failed}
   }
   $\reducedMC \gets \erase{\reducedMC}{\state}$\;\nllabel{algorithm:abstraction:erase}
 }
 \Return{} $\reducedMC$\;
\end{algorithm}

\begin{algorithm}[tbp]
  \ShortVersion{\footnotesize}\LongVersion{\small}
  \caption{Bounded MCCS construction.}%
  \label{algorithm:MCCS_construction}
  \DontPrintSemicolon{}
  \SetKwFunction{FConstructMCCS}{constructMCCS}
  \Fn{\FConstructMCCS{$\MC, \edge, \automaton, \MCCSBound$}}{
    \Input{An MC $\MC$, an edge $\edge = (\state, \ap, \state')$ of $\MC$, a DRA $\automaton$, and a bound $\MCCSBound \in \setN$}
    \Output{The MCCS $\MCCS$ of $\edge$ if it is found; otherwise $\bot$}

    $\MCCSCandidate \gets \{ \edge \}$;\,\,$\MCCS \gets \emptyset$\;
    \While{$\MCCSCandidate \neq \emptyset$}{
      \KwPop{} $\pathvar$ \KwFrom{} $\MCCSCandidate$\;
      $\tilde{\state} \gets \last(\pathvar)$\;
      \For{$\tilde{\edge} = (\tilde{\state}, \tilde{\ap}, \tilde{\state}') \in \OutEdges{\tilde\state}$}{
        \If{$\tilde{\state}' = \state'$} {
          \Return{} $\bot$\;
          \nllabel{algorithm:MCCS_construction:cycle}
        } \ElseIf{$|\pathvar| > \MCCSBound$} {
          \Return{} $\bot$ \;
          \nllabel{algorithm:MCCS_construction:too_long}
        } \ElseIf{$\exists \ap' \in \Sigma.\, \oalpha{\pathvar \cdot \tilde{\edge}} \sim_{\automaton} \ap'$} {
          \KwPush{} $\pathvar \cdot \tilde{\edge}$ \KwTo{} $\MCCS$\;
          \nllabel{algorithm:MCCS_construction:compatible}
        } \Else {
          \KwPush{} $\pathvar \cdot \tilde{\edge}$ \KwTo{} $\MCCSCandidate$\;
          \nllabel{algorithm:MCCS_construction:extend}
        }
      }
    }
    \Return{} $\MCCS$\;
  }
\end{algorithm}

\cref{algorithm:abstraction} outlines our algorithm for abstracting an MC $\MC$ with respect to a DRA $\automaton$, where the definition of $\FConstructMCCS$ is given in \cref{algorithm:MCCS_construction}.
In \cref{algorithm:abstraction}, we try to remove each non-initial state $\state$ of $\MC$.
We perform this removal trial only once for each state throughout the entire loop due to the following monotonicity in replaceability.

\begin{proposition}
 Let $\MC$ be an MC, and let $\edge$ and $\edge'$ be distinct edges of $\MC$.
 Suppose $\edge$ is replaceable.
 The edge $\edge'$ is replaceable in $\replacedMC{\MC}{\edge}$ only if it is replaceable in $\MC$.
\end{proposition}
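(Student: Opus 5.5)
We argue by contraposition: assuming $\edge'=(\state_1,\ap_1,\state_1')$ is \emph{not} replaceable in $\MC$, we show that it is not replaceable in $\replacedMC{\MC}{\edge}$ either. If $\edge'$ is not even an edge of $\replacedMC{\MC}{\edge}$ there is nothing to prove, so assume it is one.

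First we unfold \cref{definition:replaceability} into a witness-based characterisation. Any candidate set $T$ is forced to be the set of \emph{shortest} prefixes $\pathvar \neq \edge'$ of paths in $\eotrace{\edge'}$ whose trace is $\sim_{\automaton}$ to some label. Hence $\edge'$ is non-replaceable iff $\state_1=\state_1'$, or some infinite path $\infpathvar\in\eotrace{\edge'}$ is a \emph{bad witness}. A bad witness is one of the following: (a) no prefix of $\infpathvar$ other than $\edge'$ is compatible with a label; (b) the shortest such prefix is not acceptance-preserving; or (c) $\state_1'$ reappears within that prefix or before it. Fix such a bad witness $\infpathvar$ in $\MC$; the case $\state_1=\state_1'$ is trivially shared by both chains.

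Next we transport $\infpathvar$ to $\replacedMC{\MC}{\edge}$. Every edge of $\MC$ other than $\edge$ is still an edge of $\replacedMC{\MC}{\edge}$, since the probabilities at $\state_{\edge}$ only increase. Whenever $\infpathvar$ traverses $\edge$, the suffix starting there lies in $\eotrace{\edge}$, so it has a unique prefix $\pathvar\in\coverSuffixes{\edge}{\automaton}$ with $\oalpha{\pathvar}\sim_{\automaton}\ap$. We contract that segment into the single edge $(\state_{\edge},\ap,\last(\pathvar))$, which exists in $\replacedMC{\MC}{\edge}$ by \cref{definition:replacedMC}. Doing this left to right gives an infinite path $\widehat{\infpathvar}$ of $\replacedMC{\MC}{\edge}$ starting with $\edge'$ (as $\edge'\neq\edge$). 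There is one subtlety: the contraction may be infinite if $\infpathvar$ enters an MCCS segment it never leaves. By the first condition of replaceability this cannot happen, since every infinite continuation of $\edge$ has a prefix in the MCCS.

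The main obstacle is to show that $\widehat{\infpathvar}$ is again a bad witness. The tool we would use is that $\sim_{\automaton}$ is a congruence for the transition function: $\DRATransition(\position,u w v)=\DRATransition(\position,u\ap v)$ whenever $w\sim_{\automaton}\ap$.

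For compatibility, every prefix of $\widehat{\infpathvar}$ has a trace equivalent to that of a prefix of $\infpathvar$ ending at a segment boundary. So if case (a) holds for $\infpathvar$, it holds for $\widehat{\infpathvar}$. If instead the shortest compatible prefix of $\infpathvar$ ends inside a contracted segment, we must rule out that contraction creates an earlier compatible prefix. We would handle this using the third MCCS condition, which says no strict prefix of a segment is compatible with a label, together with an argument that the first compatible boundary prefix of $\widehat{\infpathvar}$ is still bad.

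For acceptance preservation, let $u w v$ have $w$ acceptance-preserving and $w\sim_{\automaton}\ap$. Then for every $\position$ and every $X_i$, the positions reached at intermediate points of $w$ can only lie in sets $X_i$ that also contain the position reached at the end of $w$. Hence the family of sets $X_i$ ``hit before the end'' is the same for $uwv$ and $u\ap v$, so the violation of acceptance preservation in case (b) survives the contraction.

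For case (c), revisits of $\state_1'$ at segment boundaries survive, because the boundary states are kept. A revisit strictly inside a contracted segment needs separate care. For this we would exploit the fifth condition of replaceability, which constrains where the target state of $\edge$ may occur, and choose the witness $\infpathvar$ minimal so that its offending revisit lies at a boundary.

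We expect the interplay of cases (b) and (c) with contraction to be the delicate step. If it resists the contrapositive, we would instead construct the MCCS of $\edge'$ in $\MC$ directly. We would expand each edge of the MCCS of $\edge'$ in $\replacedMC{\MC}{\edge}$ via the path sets $T_{(\state,\ap,\state')}$ of \cref{lem:setPath}, and verify the five conditions of \cref{definition:replaceability} using the same congruence facts.
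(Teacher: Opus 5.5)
There is a genuine gap, and it cannot be closed, because the proposition is false as stated. It breaks exactly where you flag difficulty: case (c), when the revisit of the target $s'_1$ of $\edge'$ lies strictly inside a contracted segment. The fifth condition of \cref{definition:replaceability} for $\edge$ only forbids revisits of the target of $\edge$, not of $s'_1$. So nothing stops a path of $\coverSuffixes{\edge}{\automaton}$ from passing through $s'_1$. Contraction deletes that revisit, and the first compatible boundary prefix of the contracted path can then be perfectly good.

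\emph{Counterexample.} Let $\AP=\{z\}$, $x\coloneqq\emptyset$, $y\coloneqq\{z\}$. Let $\automaton$ have positions $\Positions=\{0,\dots,5\}$, $\DRATransition(\position,x)=\position+1$, $\DRATransition(\position,y)=\position+3$ (mod $6$), and acceptance condition $\{(\Positions,\emptyset)\}$. Then every word is acceptance-preserving, $x$ and $y$ induce distinct maps, and a word is compatible with a label iff its total increment is $1$ or $3$ mod $6$. Let $\MC$ have the probability-$1$ edges $s_0\xrightarrow{y}s_1\xrightarrow{x}s_2\xrightarrow{x}s_3\xrightarrow{x}s_1$; below, paths are written as state sequences.
\begin{itemize}
\item For $\edge=(s_2,x,s_3)$, the unique continuation has increments $1,2,3$. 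Hence $\coverSuffixes{\edge}{\automaton}=\{s_2s_3s_1s_2\}$, with trace $\sim_{\automaton}y$ and $s_3$ occurring once. So $\edge$ is replaceable, and $\replacedMC{\MC}{\edge}$ turns it into a self-loop $(s_2,y,s_2)$ of probability $1$.
\item For $\edge'=(s_0,y,s_1)$, the unique infinite path of $\MC$ has prefix increments $3,4,5,0,1$. The only prefix that could belong to an MCCS is therefore $s_0s_1s_2s_3s_1s_2$, which visits $s_1$ twice, so $\edge'$ is not replaceable in $\MC$.
\item In $\replacedMC{\MC}{\edge}$, the unique path $s_0s_1s_2s_2\cdots$ has increments $3,4,1$, and $\{s_0s_1s_2s_2\}$ satisfies all five conditions. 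So $\edge'$ is replaceable there.
\end{itemize}
Your bad witness $s_0s_1s_2s_3s_1s_2s_3\cdots$ is the only one, and its contraction $s_0s_1s_2s_2\cdots$ is not bad. No minimality choice helps.

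Your fallback is precisely the paper's proof: expand the MCCS of $\edge'$ in $\replacedMC{\MC}{\edge}$ via the sets $T_{(\state,\ap,\state')}$ of \cref{lem:setPath}, then truncate to shortest compatible prefixes. The paper asserts that the truncated set is the MCCS of $\edge'$ in $\MC$ but never verifies the fifth condition. In the example above the truncated set is $\{s_0s_1s_2s_3s_1s_2\}$, which violates it. So the paper's argument has the same hole, and the problem is not one of proof technique.

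What is missing is a hypothesis controlling occurrences of the target of $\edge'$ strictly inside paths of $\coverSuffixes{\edge}{\automaton}$. If such occurrences are excluded, your expand-and-truncate argument does go through. The boundary states of an expanded path are those of a good path in $\replacedMC{\MC}{\edge}$, and by assumption the interior states avoid the target of $\edge'$. Without such a hypothesis, the claim is simply false.
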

\begin{proof}
 Suppose $\edge'$ is replaceable in $\replacedMC{\MC}{\edge}$ and let $\MCCS$ be the MCCS of $\edge'$ in $\replacedMC{\MC}{\edge}$.
 If $\MCCS$ does not contain edges introduced in replacing $\edge$, $\MCCS$ is also the MCCS of $\edge'$, and thus, $\edge'$ is replaceable in $\MC$.

 Otherwise, let $\MCCSCandidate$ be the set of paths obtained by replacing each edge in $\MCCS$ introduced by replacing $\edge$ with the corresponding paths in $\MC$.
 Namely, each edge $(\state, \ap, \state')$ of $\replacedMC{\MC}{\edge}$ in $\MCCS$ is replaced with a set $T_{(\state, \ap, \state')}$ of paths of $\MC$ defined in \cref{lem:setPath}.
 From the definition of $\replacedMC{\MC}{\edge}$,
 $\MCCSCandidate$ satisfies all the conditions in \cref{definition:replaceability} other than the last condition.
 Let $\MCCS'$ be the set of (non-strict) prefixes $\pathvar$ of paths in $\MCCSCandidate$ such that
 \begin{ienumeration}
   \item $|\pathvar| > 1$,
   \item $\oalpha{\pathvar}$ is acceptance-preserving,
   \item there is $\ap \in \Sigma$ such that $\oalpha{\pathvar} \sim_{\automaton} \ap$, and
   \item for any strict prefix $\pathvar'$ of $\pathvar$, $\oalpha{\pathvar'}$ is not acceptance-preserving or there is no $\ap' \in \Sigma$ satisfying $\oalpha{\pathvar'} \sim_{\automaton} \ap'$.
 \end{ienumeration}
 For each $\pathvar \in \MCCSCandidate$, exactly one prefix of $\pathvar$ is in $\MCCS'$, and $\MCCS'$ is the MCCS of $\edge'$ in $\MC$.
 Therefore, $\edge'$ is replaceable in $\MC$.
\end{proof}

For each $\state \in \States \setminus \{\initialState\}$,
we check if $\state$ is erasable by trying to construct the MCCS for each incoming edge of $\state$ (\cref{algorithm:abstraction:check_erasable} of \cref{algorithm:abstraction}).
We use the bounded MCCS construction in \cref{algorithm:MCCS_construction} for practical efficiency.
Namely, we try to construct the MCCS only using the paths of length at most $\MCCSBound$, and if the construction fails, we deem $\edge$ not replaceable.

In \cref{algorithm:MCCS_construction}, we maintain a set $\MCCSCandidate$ of candidate paths $\pathvar$ and gradually extend them by appending an edge $\tilde\edge$.
For each such extension, we first check whether the target state $\state'$ of the root edge $\edge$ reappears in the path; if it does, we conclude that $\edge$ is not replaceable and return $\bot$ (\cref{algorithm:MCCS_construction:cycle}).
Otherwise, if the current candidate path $\pathvar$ already exceeds the bound $\MCCSBound$, we conclude that the bounded MCCS construction fails and return $\bot$ (\cref{algorithm:MCCS_construction:too_long}).
If neither of the above cases applies and $\oalpha{\pathvar \cdot \tilde\edge}$ is compatible with some $\ap \in \Sigma$, we add $\pathvar \cdot \tilde\edge$ to $\MCCS$ (\cref{algorithm:MCCS_construction:compatible}).
Otherwise, we add $\pathvar \cdot \tilde\edge$ to $\MCCSCandidate$ for further exploration (\cref{algorithm:MCCS_construction:extend}).
We return $\MCCS$ if $\MCCSCandidate$ becomes empty.
Such $\MCCS$ is the MCCS of $\edge$ intuitively because we cover all the infinite paths starting with $\edge$ and truncated at the first acceptance-preserving edge.

If we find such $\MCCS$ for each incoming edge of $\state$, $\state$ is erasable (\cref{definition:erasability}), and we remove $\state$ from $\reducedMC$ (\cref{algorithm:abstraction:erase} of \cref{algorithm:abstraction}).
\cref{theorem:correctness_of_erasing} guarantees that such removal does not change the satisfaction probability of $\automaton$.
We let $\reducedMCFinal$ be the abstraction of an MC $\MC$ with respect to a DRA $\automaton$.

\section{Experimental Evaluation}\label{section:experiments}

We implemented \revision{a prototype tool, \ourTool{},} in C++ using Spot~\cite{Duret-LutzRCRAS22} version \revision{2.15.1}\footnote{\ourTool{} is publicly available on \url{https://github.com/SoftwareFoundationGroupAtKyotoU/specification-guided-path-shortcutting}.}.
	Given a state-labeled MC $\tilde\MC$ and an LTL formula $\formula$,
	\revision{\ourTool{}}
	\begin{ienumeration}
	    \item translates $\tilde\MC$ into a transition-labeled MC $\MC$;
	    \item constructs a DRA $\automaton_{\formula}$ from $\formula$;
		\item abstracts $\MC$ into $\reducedMCFinal[\automaton_{\formula}]$ with respect to $\automaton_{\formula}$ using \cref{algorithm:abstraction};
		\item eliminates the labels on the edges of the product MC $\reducedMCFinal[\automaton_{\formula}] \otimes \automaton_{\formula}$ to obtain an unlabeled MC; and
	    \item invokes \storm{}~\cite{HenselJKQV22} on the product MC together with an LTL formula $\formula'$ encoding the acceptance condition of $\automaton_{\formula}$.
	\end{ienumeration}
	Concretely, if an acceptance condition is $\AccCondition = (L_i, U_i)_{i \in I}$, we use $\formula' \coloneqq \bigvee_{i \in I} \bigl((\Globally \Eventually L_i)  \land (\Eventually \Globally \lnot U_i)\bigr)$, where each state in this DRN encoding is labeled according to whether its DRA component belongs to $L_i$ and $U_i$.

We conducted experiments to answer the following research questions.
\begin{description}
	\item[RQ1.] \revision{Does \ourTool{} outperform \storm{} in terms of the efficiency of probabilistic model checking?}
    \item[\revision{RQ2.}] \revision{What is the isolated contribution of specification-guided path shortcutting, compared with the same \ourTool{} workflow without shortcutting?}
    \item[\revision{RQ3.}] \revision{Does the combination of specification-guided \revision{path shortcutting} with bisimulation minimization~\cite{KatoenKZJ07} further improve the performance of probabilistic model checking?}
	\item[\revision{RQ4.}] How sensitive is specification-guided \revision{path shortcutting} to the MCCS bound $\MCCSBound$?
\end{description}

\subsection{Benchmarks}\label{section:experimens:benchmarks}

\begin{table}[tbp]
 \centering
 \caption{Summary of the MC in each benchmark. The ``\# of states'' column shows the number of states before abstraction.}%
 \label{table:benchmark_summary}
 \LongVersion{\small}\ShortVersion{\footnotesize}
 \begin{tabular}{lrr}
  \toprule
  Benchmark & \# of states & Parameters in the model\\
  \midrule
  \brp{} & 5,192 & $N=64, \mathit{MAX}=5$ \\
  \crowds{} & 359,622 & $\mathit{TotalRuns}=6, \mathit{CrowdSize}=10$ \\
  \egl{} & 115,710 & $N=5,L=6$\\
  \leader{} & 4,244 & $N=5, K=4$\\
  \nand{} & 18,826,082 & $N=60,K=4$\\
  \haddad{} & 101 & $N=50,p=0.7$ \\
  \bottomrule
 \end{tabular}
\end{table}

For the evaluation, we used six benchmarks: \brp{}, \crowds{}, \egl{}, \leader{}, \nand{}, and \haddad{}.
Each benchmark consists of an MC and multiple LTL formulas used as verified properties.
\cref{table:benchmark_summary} summarizes the MCs and \cref{table:benchmark_properties} summarizes the LTL formulas for each benchmark.
We used the PRISM files available from \url{https://qcomp.org}~\cite{HartmannsKPQR19}.
Most of the LTL formulas are our original ones, designed to be more complex than the properties in~\cite{HartmannsKPQR19}.

\begin{table*}[tbp]
 \centering
 \caption{LTL formulas in our benchmarks and the corresponding results of the experiments conducted to answer RQ1. The ``Total Time'' columns show the total time taken for the entire workflow. The ``Reduced States'' column shows the number of states eliminated by our specification-guided \revision{path shortcutting}. \revision{The ``Trans. Diff.'' column shows the difference between the number of transitions of the product MCs with and without specification-guided path shortcutting, where a negative value indicates that our abstraction reduced the number of transitions.} The ``Prep.\ Time'' and ``Storm Time'' columns show the times taken for the preprocessing \revision{(\eg{} abstraction and product construction)} and model checking with \storm{}, respectively. All time columns are measured in seconds and report the mean over 30 runs. The better total time between the two approaches is highlighted.}%
 \label{table:benchmark_properties}
 \label{table:properties_brp}
 \label{table:properties_crowds}
 \label{table:properties_egl}
 \label{table:properties_haddad}
 \label{table:properties_leader}
 \label{table:properties_nand}
 \footnotesize
 \revision{\begin{tabular}{llr|rrrrrrr}
  \toprule
  & \multirow{3}{*}{LTL formula} & \multicolumn{1}{c|}{\storm{}} & \multicolumn{5}{c}{Ours} \\
  & & Total & Total & Reduced & Trans.\ & Prep.\ & Storm \\
  & & Time  & Time  & States  & Diff.\  & Time  & Time \\
  \midrule
  $\formula^1_{\brp}$ & $\Eventually \mathtt{error}$ & 0.38 & \tbcolor{0.079} & 1723 & -1467 & 0.0099 & 0.070 \\
  $\formula^2_{\brp}$ & $\Eventually (\mathtt{error} \land \mathtt{uncertain})$ & 0.35 & \tbcolor{0.054} & 1723 & -1467 & 0.0096 & 0.046 \\
  $\formula^3_{\brp}$ & $\Eventually (\lnot \mathtt{idle} \land \lnot \mathtt{received})$ & 0.33 & \tbcolor{0.25} & 1723 & -1467 & 0.0092 & 0.25 \\
  $\formula^4_{\brp}$ & $\Globally \bigl(\mathtt{retransmit} \implies ((\mathtt{wait\_ack} \lor \mathtt{retransmit}) \Until \mathtt{success})\bigr)$ & 0.35 & \tbcolor{0.094} & 1083 & -1083 & 0.0091 & 0.087 \\
  $\formula^5_{\brp}$ & $\Eventually \bigl(\mathtt{next\_frame} \land \Next (\lnot \mathtt{retransmit} \Until \mathtt{next\_frame})\bigr)$ & 0.35 & \tbcolor{0.26} & 1723 & -2786 & 0.0097 & 0.25 \\
  $\formula^6_{\brp}$ & $\lnot \mathtt{wait\_ack} \Until (\mathtt{msg\_lost} \land \Eventually \mathtt{resync})$ & 0.39 & \tbcolor{0.077} & 1723 & -2920 & 0.011 & 0.068 \\
  $\formula^7_{\brp}$ & $\lnot \mathtt{msg\_lost} \Until (\mathtt{resync} \land \Eventually \mathtt{ack\_lost})$ & 0.36 & \tbcolor{0.049} & 1723 & -2545 & 0.010 & 0.040 \\
  $\formula^8_{\brp}$ & $\lnot \mathtt{resync} \Until (\mathtt{ack\_lost} \land \Eventually \mathtt{success})$ & 0.36 & \tbcolor{0.26} & 1723 & -2915 & 0.010 & 0.25 \\
  $\formula^9_{\brp}$ & $\lnot \mathtt{ack\_lost} \Until (\mathtt{success} \land \Eventually \mathtt{failure})$ & 0.37 & \tbcolor{0.084} & 1723 & -2915 & 0.010 & 0.076 \\
  $\formula^{10}_{\brp}$ & $\lnot \mathtt{success} \Until (\mathtt{failure} \land \Eventually \mathtt{uncertain})$ & 0.36 & \tbcolor{0.038} & 1723 & -1467 & 0.0096 & 0.030 \\
  $\formula^{11}_{\brp}$ & $\formula^6_{\brp} \land \formula^7_{\brp} \land \formula^8_{\brp} \land \formula^9_{\brp} \land \formula^{10}_{\brp}$ & 0.48 & \tbcolor{0.057} & 1723 & -1467 & 0.030 & 0.029 \\
  \midrule
  $\formula^{1}_{\crowds}$ & $\Eventually \mathtt{positive}_0$ & 16.07 & \tbcolor{5.69} & 87360 & 780640 & 1.36 & 4.33 \\
  $\formula^{2}_{\crowds}$ & $\bigvee_{i=1}^{9} \Eventually \mathtt{positive}_i$ & 16.09 & \tbcolor{14.18} & 87360 & 780640 & 1.34 & 12.83 \\
  $\formula^{3}_{\crowds}$ & $\Eventually \mathtt{positive}_0 \land \bigwedge_{i=1}^{9} \Globally \lnot \mathtt{positive}_i$ & 15.99 & \tbcolor{3.54} & 30480 & 278512 & 0.90 & 2.63 \\
  $\formula^{4}_{\crowds}$ & $\Eventually \left( \mathtt{new\_run} \land \lnot \Next (\mathtt{latest\_observe}_0 \Until \mathtt{new\_run}) \right)$ & \tbcolor{15.79} & 20.75 & 0 & 0 & 0.52 & 20.23 \\
  $\formula^{5}_{\crowds}$ & $\Globally \left( \lnot \mathtt{new\_run} \lor \Next (\mathtt{latest\_observe}_0 \Until \mathtt{new\_run}) \right)$ & 15.81 & \tbcolor{1.60} & 0 & 0 & 0.54 & 1.06 \\
  \midrule
  $\formula^1_{\egl}$ & $\Eventually (\lnot \mathtt{knowA} \land \mathtt{knowB})$ & 6.70 & \tbcolor{1.95} & 1023 & -1023 & 0.086 & 1.87 \\
  $\formula^2_{\egl}$ & $\Eventually (\mathtt{knowA} \land \lnot \mathtt{knowB})$ & 6.68 & \tbcolor{1.81} & 1023 & -1023 & 0.086 & 1.72 \\
  $\formula^3_{\egl}$ & $\Eventually \mathtt{knowA}$ & 6.70 & \tbcolor{3.33} & 1023 & -1023 & 0.084 & 3.25 \\
  $\formula^4_{\egl}$ & $\Eventually \mathtt{knowB}$ & 6.71 & \tbcolor{3.26} & 1023 & -1023 & 0.084 & 3.18 \\
  $\formula^5_{\egl}$ & $\Eventually \mathtt{knowA} \land \Eventually \mathtt{knowB}$ & 6.74 & \tbcolor{3.19} & 1023 & -1023 & 0.087 & 3.10 \\
  $\formula^6_{\egl}$ & $\Eventually \mathtt{knowA} \land \Globally \lnot \mathtt{knowB}$ & 6.78 & \tbcolor{0.44} & 1023 & -1023 & 0.090 & 0.35 \\
  $\formula^7_{\egl}$ & $\Eventually \mathtt{knowB} \land \Globally \lnot \mathtt{knowA}$ & 6.74 & \tbcolor{0.37} & 1023 & -1023 & 0.090 & 0.28 \\
  $\formula^8_{\egl}$ & $\lnot \mathtt{knowB} \Until \mathtt{knowA}$ & 6.71 & \tbcolor{1.81} & 1023 & -1023 & 0.086 & 1.72 \\
  \midrule
  $\formula^1_{\leader}$ & $\Eventually \mathtt{elected}$ & \tbcolor{0.27} & 0.44 & 17 & 126835 & 0.20 & 0.24 \\
  $\formula^2_{\leader}$ & $\Eventually_{\leq 5} \mathtt{elected}$ & 0.28 & \tbcolor{0.27} & 0 & 0 & 0.0088 & 0.26 \\
  $\formula^3_{\leader}$ & $\Eventually_{\leq 10} \mathtt{elected}$ & \tbcolor{0.28} & 0.45 & 0 & 0 & 0.010 & 0.44 \\
  \midrule
  $\formula^1_{\nand}$ & $\Eventually \mathtt{reliable}$ & 1055.41 & \tbcolor{493.50} & 10946130 & -10957110 & 44.50 & 448.86 \\
  $\formula^2_{\nand}$ & $\Eventually \mathtt{perfect}$ & 1063.07 & \tbcolor{473.29} & 10946130 & -20510955 & 45.41 & 427.69 \\
  $\formula^3_{\nand}$ & $\lnot \mathtt{done} \Until \mathtt{boundary}$ & 1055.90 & \tbcolor{492.91} & 10946130 & -10957110 & 44.58 & 448.19 \\
  \midrule
  $\formula^1_{\haddad}$ & $\Eventually \mathtt{target}$ & \tbcolor{0.032} & 0.041 & 97 & -191 & 0.0053 & 0.037 \\
  $\formula^2_{\haddad}$ & $\Eventually \mathtt{target} \land \Globally \lnot \mathtt{done}$ & 0.041 & \tbcolor{0.029} & 95 & -185 & 0.0052 & 0.024 \\
  $\formula^3_{\haddad}$ & $\Eventually \mathtt{target} \land \Eventually \mathtt{done}$ & \tbcolor{0.041} & 0.041 & 97 & -191 & 0.0053 & 0.038 \\
  $\formula^4_{\haddad}$ & $\lnot \mathtt{bad} \Until (\mathtt{target} \land \Eventually \mathtt{done})$ & 0.041 & \tbcolor{0.040} & 97 & -191 & 0.0052 & 0.036 \\
  $\formula^5_{\haddad}$ & $\lnot \mathtt{done} \Until (\mathtt{target} \land \Eventually \mathtt{center})$ & 0.040 & \tbcolor{0.030} & 97 & -191 & 0.0053 & 0.024 \\
  $\formula^6_{\haddad}$ & $\lnot \mathtt{target} \Until (\mathtt{done} \land \Eventually \mathtt{bad})$ & 0.040 & \tbcolor{0.040} & 97 & -191 & 0.0053 & 0.036 \\
  $\formula^7_{\haddad}$ & $\lnot \mathtt{center} \Until (\mathtt{not\_bad} \land \Eventually \mathtt{bad})$ & 0.041 & \tbcolor{0.040} & 97 & -191 & 0.0053 & 0.037 \\
  $\formula^8_{\haddad}$ & $\lnot \mathtt{center} \Until (\mathtt{left\_or\_center} \land \Eventually \mathtt{target})$ & 0.041 & \tbcolor{0.040} & 97 & -191 & 0.0055 & 0.037 \\
  $\formula^9_{\haddad}$ & $\formula^3_{\haddad} \land \formula^4_{\haddad} \land \formula^5_{\haddad} \land \formula^6_{\haddad}$ & 0.045 & \tbcolor{0.030} & 97 & -191 & 0.0071 & 0.023 \\
  $\formula^{10}_{\haddad}$ & $\formula^3_{\haddad} \land \formula^4_{\haddad} \land \formula^5_{\haddad} \land \formula^6_{\haddad} \land \formula^7_{\haddad}$ & 0.076 & \tbcolor{0.031} & 97 & -191 & 0.012 & 0.022 \\
  \bottomrule
 \end{tabular}}
\end{table*}

\brp{} is a model of the bounded retransmission protocol~\cite{DBLP:conf/types/HelminkSV93}, which is a communication protocol for sending files over a lossy channel with a bounded number of retransmissions.
$\formula^{1}_{\brp}$--$\formula^{3}_{\brp}$ are taken from~\cite{DBLP:conf/papm/DArgenioJJL01}, while $\formula^{4}_{\brp}$--$\formula^{11}_{\brp}$ are our original ones.

\crowds{} is a model of the Crowds protocol~\cite{DBLP:journals/tissec/ReiterR98}, an anonymity protocol for web browsing.
$\formula^1_{\crowds}$--$\formula^3_{\crowds}$ are taken from~\cite{KwiatkowskaNP12}, while $\formula^4_{\crowds}$ and $\formula^{5}_{\crowds}$ are our original ones.

\egl{} is a model of a probabilistic contract signing protocol~\cite{DBLP:journals/cacm/EvenGL85}. %
$\formula^1_{\egl}$ and $\formula^2_{\egl}$ are taken from~\cite{KwiatkowskaNP12}, while $\formula^3_{\egl}$--$\formula^8_{\egl}$ are our original ones.

\leader{} is a model of a synchronous leader election protocol~\cite{DBLP:journals/iandc/ItaiR90}.
All the formulas are taken from~\cite{KwiatkowskaNP12}.

\nand{} is a model of NAND multiplexing, a technique for constructing reliable circuits from unreliable components~\cite{DBLP:journals/tcad/NormanPKS05}.
$\formula^1_{\nand}$ is taken from~\cite{KwiatkowskaNP12}, while $\formula^2_{\nand}$ and $\formula^{3}_{\nand}$ are our original ones.

\haddad{} is an MC used in~\cite{HaddadM18} to motivate their model checking algorithm.
$\formula^1_{\haddad}$ is taken from~\cite{HaddadM18}, while $\formula^2_{\haddad}$--$\formula^{10}_{\haddad}$ are our original ones.

\subsection{Experiments}\label{section:experimens:setting}

We used \storm{}~\cite{HenselJKQV22} version \revision{1.13.0} as the baseline model checker.
When measuring the execution time of \storm{}, we directly executed \storm{} for the original MC, while in measuring the execution time of our approach,
we executed \ourTool{}, which invokes \storm{} as a child process.
\revision{We invoked \storm{}'s exact verifier for \haddad{}, and we invoked \storm{} with the default configuration for the other benchmarks.}

\revision{\ourTool{} currently} accepts only MCs in the DRN format, which is an explicit format specific to \storm{}.
We converted the original MCs in the PRISM format to the DRN format using \storm{} beforehand, and the resulting DRN file was used as the input for both \storm{} and \ourTool{}.
We constructed a DRN file for each MC such that the DRN file contains the labels used in the LTL formulas.

\revision{All experiments were conducted on a computational server with Intel Xeon Platinum 8592V and 1007 GiB of RAM running Ubuntu 24.04.4 LTS.\@
In the experiments, we limited CPU usage to at most four cores and memory usage to 16 GiB.}\@
When measuring the \revision{preprocessing time}, we included the time required to construct the DRA and the product MC.\@

\subsection{RQ1: \revision{Comparison with \storm{}}}\label{section:experimens:model_checking_performance}

To answer RQ1, we \revision{compared the performance of \ourTool{} and \storm{} on the benchmarks described in \cref{section:experimens:benchmarks}.}
\revision{For both \ourTool{} and \storm{},} we executed each benchmark 30 times and measured the mean execution time.
In the bounded MCCS construction, we used the bound $\MCCSBound = 3$.
\cref{table:properties_brp,table:properties_crowds,table:properties_egl,table:properties_haddad,table:properties_leader,table:properties_nand} summarizes the results.

In \cref{table:properties_brp,table:properties_crowds,table:properties_egl,table:properties_haddad,table:properties_leader,table:properties_nand}, we observe that \revision{\ourTool{} is faster than \storm{} in most cases}.
The improvement is particularly evident \revision{when our abstraction significantly reduces the state space of the MC (\eg{} $\nand$) or the verified property has} multiple temporal operators \revision{(\eg{} $\formula^{11}_{\brp}$, $\formula^{5}_{\crowds}$\LongVersion{, $\formula^{7}_{\egl}$}, and $\formula^{10}_{\haddad}$).}
\revision{Since the cost of probabilistic model checking is sensitive to the size of the verified MC,}
\revision{reducing} the state space can \revision{significantly} decrease the execution time of model checking. %

Another reason \revision{for this improvement} is that even if the abstraction does not reduce the state space, %
\revision{\ourTool{}} reduces the number of labels in the MC given to \storm{}, which often improves the efficiency.
Namely, the resulting state-labeled product MC only has labels for encoding the Rabin acceptance condition; the number of labels is at most two in our benchmarks.

However, a speedup in the model-checking phase does not always reduce the end-to-end runtime due to the abstraction's overhead.
For instance, in \revision{$\formula^1_{\leader}$}, the execution time of model checking decreased from \revision{0.27} seconds to \revision{0.24} seconds, but the total execution time increased to \revision{0.44} seconds due to the overhead of specification-guided \revision{path shortcutting}.
Still, the execution time of the preprocessing is less than \revision{1.4} seconds in all the benchmarks except for \nand{}, where the MC is huge.
We believe that this overhead is acceptable in practice.

We also observe that there are some exceptional cases where \revision{the preprocessing in \ourTool{}} increased the execution time of model checking, such as $\formula^{4}_{\crowds}$ and \revision{$\formula^{3}_{\leader}$}.
This is partly because of the complexity of the LTL formula encoding the Rabin acceptance condition.
For instance, in $\formula^{4}_{\crowds}$, the Rabin acceptance condition is encoded by an LTL formula $(\Globally \Eventually L)  \land (\Eventually \Globally \lnot U)$, which is likely more challenging than the original formula for \storm{} to verify.
Since the number of states is not reduced in $\formula^{4}_{\crowds}$, the model checking algorithm took more time.
In contrast, although the number of states is also not reduced in $\formula^{5}_{\crowds}$, the Rabin acceptance condition is encoded by a simpler LTL formula $\Globally \Eventually L$, and thus, the model checking algorithm took less time.

Another reason for the increase in execution time is the state-space blow-up in the product MC.
For $\formula^3_{\leader}$, the time bound in the LTL formulas (\ie{} $\leq 10$) is encoded by the state space of the DRAs, and thus, the product MC has a larger state space than the original MC, which likely caused the increase in execution time.

Overall, we answer RQ1 as follows:

\rqanswer{RQ1}{%
\revision{\ourTool{} usually reduces end-to-end execution time, particularly when specification-guided path shortcutting substantially reduces the state space.}
}

\subsection{\revision{RQ2: Isolating the Contribution of Path Shortcutting}}\label{section:experimens:ablation_study}

\reviewer{(meta)}{Ablation: Isolate the contribution of specification-guided abstraction from confounding factors (DRA product encoding, label reduction). The current experiments do not cleanly attribute performance gains.}

\begin{table*}[tbp]
 \centering
 \caption{Results of the experiments conducted to answer RQ2 and RQ3. The columns have the same meaning as in \cref{table:properties_brp}. For ``No Abstraction'', the total time shorter than ``Ours'' in \cref{table:properties_brp} is highlighted. For ``$\ast$+Bism'', the total time shorter than ``$\ast$'' in \cref{table:properties_brp} or this table is highlighted, where ${\ast} \in \{\storm{}, \text{``Ours''}, \text{``No Abstraction''}\}$.}%
 \label{table:ablation_study_results}
 \LongVersion{\small}\ShortVersion{\scriptsize}
 \revision{\begin{tabular}{lrrr|r|rrr|rrr}
  \toprule
  & \multicolumn{3}{c|}{No Abstraction} & \multicolumn{1}{c|}{\storm{}+Bisim} & \multicolumn{3}{c|}{Ours+Bisim} & \multicolumn{3}{c}{No Abstraction+Bisim} \\
  & Total & Prep. & Storm & Total & Total & Prep. & Storm & Total & Prep. & Storm \\
  & Time  & Time  & Time  & Time  & Time  & Time  & Time  & Time  & Time  & Time \\
  \midrule
$\formula^{1}_{\brp}$ & \tbcolor 0.076 & 0.0080 & 0.069 & \tbcolor 0.37 & 0.079 & 0.0096 & 0.071 & 0.077 & 0.0081 & 0.070 \\
$\formula^{2}_{\brp}$ & \tbcolor 0.051 & 0.0078 & 0.046 & 0.36 & \tbcolor 0.053 & 0.0095 & 0.045 & 0.053 & 0.0079 & 0.047 \\
$\formula^{3}_{\brp}$ & 0.35 & 0.0076 & 0.34 & 0.34 & \tbcolor 0.25 & 0.0091 & 0.24 & \tbcolor 0.34 & 0.0078 & 0.34 \\
$\formula^{4}_{\brp}$ & 0.12 & 0.0079 & 0.12 & 0.36 & 0.096 & 0.0091 & 0.090 & 0.12 & 0.0078 & 0.12 \\
$\formula^{5}_{\brp}$ & 0.35 & 0.0086 & 0.34 & 0.35 & 0.26 & 0.0099 & 0.25 & \tbcolor 0.35 & 0.0087 & 0.34 \\
$\formula^{6}_{\brp}$ & \tbcolor 0.076 & 0.0094 & 0.068 & 0.40 & 0.086 & 0.011 & 0.076 & 0.087 & 0.0093 & 0.080 \\
$\formula^{7}_{\brp}$ & \tbcolor 0.048 & 0.0084 & 0.040 & 0.38 & \tbcolor 0.048 & 0.010 & 0.039 & 0.048 & 0.0083 & 0.042 \\
$\formula^{8}_{\brp}$ & 0.35 & 0.0091 & 0.34 & 0.36 & 0.27 & 0.010 & 0.26 & 0.35 & 0.0091 & 0.35 \\
$\formula^{9}_{\brp}$ & \tbcolor 0.082 & 0.0087 & 0.075 & \tbcolor 0.37 & 0.092 & 0.010 & 0.084 & 0.094 & 0.0087 & 0.088 \\
$\formula^{10}_{\brp}$ & \tbcolor 0.035 & 0.0080 & 0.030 & 0.37 & \tbcolor 0.036 & 0.0095 & 0.030 & 0.036 & 0.0079 & 0.031 \\
$\formula^{11}_{\brp}$ & 0.057 & 0.028 & 0.030 & 0.48 & 0.059 & 0.030 & 0.031 & \tbcolor 0.057 & 0.028 & 0.031 \\\hline
$\formula^{1}_{\crowds}$ & \tbcolor 3.52 & 0.40 & 3.12 & \tbcolor 15.61 & \tbcolor 5.56 & 1.37 & 4.19 & \tbcolor 3.23 & 0.40 & 2.83 \\
$\formula^{2}_{\crowds}$ & 14.61 & 0.40 & 14.21 & 16.14 & \tbcolor 14.17 & 1.36 & 12.80 & \tbcolor 14.44 & 0.40 & 14.03 \\
$\formula^{3}_{\crowds}$ & \tbcolor 2.75 & 0.41 & 2.33 & 16.45 & \tbcolor 3.46 & 0.90 & 2.55 & \tbcolor 2.60 & 0.41 & 2.19 \\
$\formula^{4}_{\crowds}$ & \tbcolor 20.56 & 0.40 & 20.16 & \tbcolor 15.77 & \tbcolor 20.67 & 0.52 & 20.14 & 20.79 & 0.41 & 20.37 \\
$\formula^{5}_{\crowds}$ & \tbcolor 1.48 & 0.41 & 1.06 & \tbcolor 15.74 & \tbcolor 1.55 & 0.54 & 1.01 & \tbcolor 1.43 & 0.41 & 1.02 \\\hline
$\formula^{1}_{\egl}$ & \tbcolor 1.93 & 0.069 & 1.86 & 6.71 & \tbcolor 1.93 & 0.085 & 1.85 & \tbcolor 1.92 & 0.070 & 1.85 \\
$\formula^{2}_{\egl}$ & \tbcolor 1.79 & 0.070 & 1.72 & 6.71 & \tbcolor 1.79 & 0.084 & 1.71 & \tbcolor 1.78 & 0.070 & 1.70 \\
$\formula^{3}_{\egl}$ & \tbcolor 3.31 & 0.067 & 3.25 & \tbcolor 6.69 & 3.47 & 0.082 & 3.39 & 3.46 & 0.067 & 3.39 \\
$\formula^{4}_{\egl}$ & \tbcolor 3.24 & 0.068 & 3.17 & \tbcolor 6.69 & \tbcolor 3.25 & 0.082 & 3.17 & \tbcolor 3.22 & 0.067 & 3.16 \\
$\formula^{5}_{\egl}$ & \tbcolor 3.10 & 0.071 & 3.03 & 6.79 & \tbcolor 3.11 & 0.087 & 3.02 & \tbcolor 3.09 & 0.071 & 3.02 \\
$\formula^{6}_{\egl}$ & \tbcolor 0.41 & 0.070 & 0.35 & \tbcolor 6.76 & \tbcolor 0.43 & 0.086 & 0.35 & 0.42 & 0.071 & 0.35 \\
$\formula^{7}_{\egl}$ & \tbcolor 0.34 & 0.071 & 0.27 & 6.92 & \tbcolor 0.35 & 0.086 & 0.27 & \tbcolor 0.34 & 0.072 & 0.27 \\
$\formula^{8}_{\egl}$ & \tbcolor 1.80 & 0.075 & 1.72 & 6.73 & \tbcolor 1.79 & 0.084 & 1.71 & \tbcolor 1.78 & 0.075 & 1.70 \\\hline
$\formula^{1}_{\leader}$ & \tbcolor 0.042 & 0.0066 & 0.038 & \tbcolor 0.26 & \tbcolor 0.43 & 0.20 & 0.23 & 0.042 & 0.0068 & 0.038 \\
$\formula^{2}_{\leader}$ & \tbcolor 0.27 & 0.0083 & 0.26 & \tbcolor 0.27 & \tbcolor 0.27 & 0.0093 & 0.26 & 0.27 & 0.0082 & 0.26 \\
$\formula^{3}_{\leader}$ & \tbcolor 0.44 & 0.0095 & 0.44 & \tbcolor 0.27 & \tbcolor 0.44 & 0.010 & 0.43 & 0.44 & 0.0094 & 0.43 \\\hline
$\formula^{1}_{\nand}$ & 1064.97 & 20.35 & 1044.42 & 1055.98 & \tbcolor 493.48 & 44.63 & 448.68 & \tbcolor 1058.30 & 20.33 & 1037.75 \\
$\formula^{2}_{\nand}$ & 988.00 & 22.57 & 965.22 & 1183.87 & 645.97 & 45.08 & 600.68 & 1106.76 & 22.48 & 1084.05 \\
$\formula^{3}_{\nand}$ & 1062.92 & 20.44 & 1042.28 & 1058.56 & 494.15 & 45.82 & 448.15 & \tbcolor 1059.54 & 20.57 & 1038.75 \\\hline
$\formula^{1}_{\haddad}$ & 0.041 & 0.0048 & 0.038 & 0.033 & \tbcolor 0.040 & 0.0054 & 0.036 & \tbcolor 0.040 & 0.0049 & 0.038 \\
$\formula^{2}_{\haddad}$ & \tbcolor 0.028 & 0.0046 & 0.024 & 0.041 & \tbcolor 0.029 & 0.0051 & 0.023 & 0.028 & 0.0046 & 0.024 \\
$\formula^{3}_{\haddad}$ & 0.042 & 0.0046 & 0.039 & 0.041 & \tbcolor 0.041 & 0.0051 & 0.037 & 0.042 & 0.0047 & 0.039 \\
$\formula^{4}_{\haddad}$ & 0.041 & 0.0048 & 0.038 & \tbcolor 0.040 & \tbcolor 0.040 & 0.0052 & 0.036 & \tbcolor 0.040 & 0.0048 & 0.038 \\
$\formula^{5}_{\haddad}$ & \tbcolor 0.029 & 0.0048 & 0.024 & 0.040 & 0.030 & 0.0053 & 0.024 & 0.030 & 0.0048 & 0.024 \\
$\formula^{6}_{\haddad}$ & 0.040 & 0.0047 & 0.038 & 0.040 & 0.040 & 0.0051 & 0.036 & 0.040 & 0.0047 & 0.038 \\
$\formula^{7}_{\haddad}$ & 0.041 & 0.0048 & 0.038 & \tbcolor 0.040 & 0.041 & 0.0053 & 0.037 & 0.041 & 0.0048 & 0.039 \\
$\formula^{8}_{\haddad}$ & 0.040 & 0.0049 & 0.038 & 0.041 & 0.040 & 0.0055 & 0.037 & 0.041 & 0.0050 & 0.038 \\
$\formula^{9}_{\haddad}$ & \tbcolor 0.028 & 0.0062 & 0.023 & \tbcolor 0.044 & \tbcolor 0.028 & 0.0068 & 0.023 & 0.028 & 0.0062 & 0.023 \\
$\formula^{10}_{\haddad}$ & 0.031 & 0.011 & 0.022 & 0.076 & 0.032 & 0.012 & 0.022 & \tbcolor 0.031 & 0.011 & 0.022 \\
  \bottomrule
 \end{tabular}}
\end{table*}

\revision{%
To answer RQ2, we measured the execution time of \ourTool{} with specification-guided path shortcutting \emph{disabled}.
\LongVersion{Namely, we measured the execution time of a variant of \ourTool{} that
\begin{ienumeration}
 \item translates $\tilde\MC$ into a transition-labeled MC $\MC$;
 \item constructs a DRA $\automaton_{\formula}$ from $\formula$;
 \item eliminates the labels on the edges of the product MC $\MC \otimes \automaton_{\formula}$ to obtain an unlabeled MC; and
 \item invokes \storm{}~\cite{HenselJKQV22} on the product MC together with an LTL formula $\formula'$ encoding the acceptance condition of $\automaton_{\formula}$.
\end{ienumeration}}
We executed this workflow 30 times for each benchmark and measured the mean execution time.
The columns ``No Abstraction'' in \cref{table:ablation_study_results} summarize the results.

In \cref{table:properties_brp,table:ablation_study_results}, we observe that ``No Abstraction'' is typically faster than \storm{}, which suggests that the DRA product construction and the label reduction contribute to improving the efficiency of model checking.
Moreover, ``No Abstraction'' is often even faster than ``Ours''.
This is partly because of the overhead of specification-guided path shortcutting, as observed in, \eg{} $\formula^7_{\brp}$.

Interestingly, specification-guided \revision{path shortcutting} can increase the execution time of subsequent model checking even when it reduces the state space.
This is particularly evident in $\formula^1_{\leader}$, where shortcutting removed 17 states, but the execution time of model checking increased from 0.038 seconds to 0.24 seconds.
This is likely because the abstraction increased the number of edges in the product MC from 5,268 to 132,103.
The number of edges can increase when we remove a state with many incoming and outgoing edges.
Nevertheless, the overhead of specification-guided \revision{path shortcutting} is typically much smaller than the execution time of model checking, and the overall execution times of ``Ours'' and ``No Abstraction'' are usually similar even if the abstraction increases the execution time of model checking.

In contrast, \ourTool{} tends to be significantly faster than ``No Abstraction'' when the state space is significantly reduced.
For instance, in $\formula^{1}_{\nand}$, the number of states is reduced from 18,826,082 to 7,879,952, and the end-to-end execution time is reduced from 1064.97 seconds to 493.50 seconds.
For the other benchmarks, the state space reduction is not significant, but the end-to-end execution time can still be reduced.
For instance, in $\formula^{3}_{\brp}$, the number of states is reduced from 5,192 to 3,469, and the end-to-end execution time is reduced from 0.35 seconds to 0.25 seconds.

Overall, we answer RQ2 as follows:

\rqanswer{RQ2}{%
\revision{Although DRA product construction and label reduction can already improve the efficiency of model checking, specification-guided path shortcutting can further improve the efficiency in some cases, especially when it significantly reduces the state space.}}
}

\subsection{\revision{RQ3: Combination with Bisimulation Minimization}}\label{section:experimens:bisimulation_minimization}

\reviewer{meta}{Preprocessing: Add empirical comparison against preprocessing baselines to assess the relative contribution of specification-guided path shortcutting. The authors committed to this in their response.}

\revision{%
To answer RQ3, we measured the execution time of \storm{}, \ourTool{}, and ``No Abstraction'' with \storm{}'s bisimulation minimization~\cite{KatoenKZJ07} enabled.
We executed this workflow 30 times for each benchmark and measured the mean execution time.
The columns with ``+Bisim'' in \cref{table:ablation_study_results} summarize the results.

In \cref{table:properties_brp,table:ablation_study_results}, we observe that bisimulation minimization seems to improve the efficiency of \ourTool{} more often (25 properties) than \storm{} (14 properties) and ``No Abstraction'' (18 properties).
This can be partly because specification-guided path shortcutting abstracts away differences between states that are irrelevant to the verified property, and more states can be merged by bisimulation minimization.

We also observe that the effect of bisimulation minimization on the efficiency of model checking is usually not as significant as that of specification-guided \revision{path shortcutting}.
For instance, in $\formula^{1}_{\nand}$, the end-to-end execution time of ``No Abstraction'' is reduced from 1064.97 seconds to 1058.30 seconds by bisimulation minimization, while it is reduced to 493.48 seconds by combining specification-guided path shortcutting.
This is likely because many states of an MC are not bisimilar to each other, and thus, bisimulation minimization cannot significantly reduce the state space in many cases.

Overall, we answer RQ3 as follows:

\rqanswer{RQ3}{%
\revision{The use of bisimulation minimization can further improve the efficiency of \ourTool{}, while its influence is usually not as significant as that of specification-guided path shortcutting.}}
}

\subsection{\revision{RQ4}: Sensitivity to the MCCS Bound $\MCCSBound$}\label{section:experimens:parameter_sensitivity}

\reviewer{meta}{Choice of $\MCCSBound$: Provide practical guidance for selecting the MCCS bound. The current justification is insufficient. Extend the sensitivity evaluation to a broader range of $\MCCSBound$ values.}

\begin{table}[tb]
 \centering
 \caption{Mean execution time [sec.] and the number of states eliminated by specification-guided \revision{path shortcutting} for $\formula^{11}_{\brp}$, $\formula^{2}_{\crowds}$, $\formula^{5}_{\crowds}$, $\formula^{3}_{\egl}$, $\formula^{1}_{\leader}$, $\formula^{1}_{\nand}$, and $\formula^{2}_{\haddad}$ with different values of the MCCS bound $\MCCSBound$. Each entry reports the mean over 30 runs. The best time for each formula (including $\MCCSBound = 3$) is highlighted.}%
 \label{table:results_parameter_sensitivity}
 \ShortVersion{\scriptsize}\LongVersion{\small}
 \revision{
\begin{tabular}{llrrrr}
\toprule
  & \multirow{2}{*}{$\MCCSBound$} & Total & Reduced & Abstraction & Storm \\
  & & Time & States & Time & Time \\
\midrule
\multirow{6}{*}{$\formula^{11}_{\brp}$} & 2 & 0.084 & 1723 & 0.052 & 0.034 \\
 & 3 & 0.092 & 1723 & 0.051 & 0.042 \\
 & 4 & 0.073 & 1723 & 0.041 & 0.034 \\
 & 5 & \tbcolor{0.064} & 1723 & 0.033 & 0.033 \\
 & 8 & 0.065 & 1723 & 0.035 & 0.032 \\
 & 13 & 0.065 & 1723 & 0.037 & 0.030 \\
\midrule
\multirow{6}{*}{$\formula^{2}_{\crowds}$} & 2 & 14.54 & 87360 & 1.67 & 12.86 \\
 & 3 & \tbcolor{14.48} & 87360 & 1.62 & 12.85 \\
 & 4 & 14.56 & 87360 & 1.69 & 12.86 \\
 & 5 & 14.58 & 87360 & 1.70 & 12.88 \\
 & 8 & 14.61 & 87360 & 1.39 & 13.22 \\
 & 13 & 14.58 & 87360 & 1.55 & 13.02 \\
\midrule
\multirow{6}{*}{$\formula^{5}_{\crowds}$} & 2 & 1.70 & 0 & 0.62 & 1.08 \\
 & 3 & \tbcolor{1.67} & 0 & 0.58 & 1.09 \\
 & 4 & 1.82 & 0 & 0.72 & 1.10 \\
 & 5 & 1.81 & 0 & 0.72 & 1.08 \\
 & 8 & 2.15 & 0 & 1.07 & 1.07 \\
 & 13 & 29.25 & 0 & 28.19 & 1.06 \\
\midrule
\multirow{6}{*}{$\formula^{3}_{\egl}$} & 2 & \tbcolor{3.34} & 1023 & 0.086 & 3.25 \\
 & 3 & 3.36 & 1023 & 0.096 & 3.26 \\
 & 4 & 3.36 & 1023 & 0.087 & 3.27 \\
 & 5 & \tbcolor{3.34} & 1023 & 0.086 & 3.26 \\
 & 8 & 3.35 & 1023 & 0.088 & 3.26 \\
 & 13 & 3.35 & 1023 & 0.091 & 3.26 \\
\midrule
\multirow{6}{*}{$\formula^{1}_{\leader}$} & 2 & 0.47 & 17 & 0.22 & 0.25 \\
 & 3 & \tbcolor{0.46} & 17 & 0.21 & 0.25 \\
 & 4 & 0.46 & 17 & 0.21 & 0.25 \\
 & 5 & 0.48 & 17 & 0.22 & 0.26 \\
 & 8 & 0.47 & 17 & 0.22 & 0.26 \\
 & 13 & 0.47 & 17 & 0.21 & 0.25 \\
\midrule
\multirow{6}{*}{$\formula^{1}_{\nand}$} & 2 & \tbcolor{493.68} & 10946130 & 45.05 & 448.45 \\
 & 3 & 493.79 & 10946130 & 45.25 & 448.37 \\
 & 4 & 494.68 & 10946130 & 45.82 & 448.66 \\
 & 5 & 494.64 & 10946130 & 45.81 & 448.63 \\
 & 8 & 495.05 & 10946130 & 45.62 & 449.24 \\
 & 13 & 493.97 & 10946130 & 45.30 & 448.48 \\
\midrule
\multirow{6}{*}{$\formula^{2}_{\haddad}$} & 2 & 0.041 & 95 & 0.0088 & 0.033 \\
 & 3 & \tbcolor{0.029} & 95 & 0.0067 & 0.025 \\
 & 4 & 0.034 & 95 & 0.0071 & 0.027 \\
 & 5 & 0.038 & 95 & 0.0084 & 0.031 \\
 & 8 & 0.041 & 95 & 0.0086 & 0.033 \\
 & 13 & 0.037 & 95 & 0.0079 & 0.028 \\
\bottomrule
\end{tabular}}
\end{table}

To answer \revision{RQ4}, we \revision{evaluated $\MCCSBound \in \{2,3,4,5,8,13\}$ on seven} formulas with varying levels of complexity and measured the mean execution time over 30 runs.
\cref{table:results_parameter_sensitivity} summarizes the results.
For $\MCCSBound = 3$, we reran the experiments instead of reusing the measurements from RQ1 so that \cref{table:results_parameter_sensitivity} is based on an independent set of runs.

In \cref{table:results_parameter_sensitivity}, changing $\MCCSBound$ from 2 to \revision{13} does not affect the number of reduced states for these \revision{seven} formulas.
\revision{For most of the formulas, the choice of $\MCCSBound$ has little effect on runtime.
In contrast, for $\formula^{5}_{\crowds}$, the runtime increased substantially when $\MCCSBound$ was increased, which is likely because the algorithm must perform deeper exploration when trying to find MCCSs.}
This suggests that, at least for these benchmarks, the MCCSs needed for effective abstraction are typically short, and a moderate bound is sufficient.
\revision{Therefore, trying a small to moderate bound (\eg{} $\MCCSBound = 2$ or $\MCCSBound = 3$) is likely a good choice in practice.}

Overall, we answer \revision{RQ4} as follows:

\rqanswer{\revision{RQ4}}{%
On the selected benchmarks, varying the MCCS bound $\MCCSBound$ did not change the number of reduced states, suggesting that a small to moderate bound \revision{(\eg{} $\MCCSBound = 2$ or $\MCCSBound = 3$)} is sufficient in practice.}

\subsection{Potential Usage Beyond Efficient Model Checking}\label{section:experimens:interpretability}

\begin{figure*}[tb]
 \centering
 \begin{tikzpicture}[auto, node distance=0.9cm,semithick,\ShortVersion{scale=0.7,}every node/.style={initial text={},transform shape}]
  \node[state, initial] (init) {$\state_{101}$};
  \node[state] (qN) [node distance=0.3cm, below=of init]{$\state_{0}$};
  \node[state] (qNp1) [right=of qN]{$\state_{2}$};
  \node[state] (qNp2) [right=of qNp1]{$\state_{4}$};
  \node (qNp3) [right=of qNp2]{$\cdots$};
  \node[state] (q2N) [right=of qNp3]{$\state_{100}$};
  \node[state] (qNm1) [left=of qN]{$\state_{1}$};
  \node[state] (qNm2) [left=of qNm1]{$\state_{3}$};
  \node (qNm3) [left=of qNm2]{$\cdots$};
  \node[state] (q1) [left=of qNm3]{$\state_{97}$};
  \node[state] (q0) [node distance=1.5cm,left=of q1]{$\state_{99}$};

  \path[->]
  (init) edge node[right] {$1.0$} (qN)
  (qN) edge[bend right] node[below] {$0.3$} (qNp1)
  (qN) edge[bend right] node[above] {$0.7$} (qNm1)
  (qNp1) edge[bend right] node[below] {$0.5$} (qNp2)
  (qNp1) edge[bend right] node[below] {$0.5$} (qN)
  (qNp2) edge[bend right] node[above,pos=0.1] {$0.5$} (qN)
  (qNp2) edge[bend right] node[below] {$0.5$} (qNp3)
  (qNp3) edge[bend right] node[above,pos=0.1] {$0.5$} (qN)
  (qNm1) edge[bend right] node[above] {$0.5$} (qNm2)
  (qNm1) edge[bend right] node[above] {$0.5$} (qN)
  (qNm2) edge[bend right] node[below,pos=0.1] {$0.5$} (qN)
  (qNm2) edge[bend right] node[above] {$0.5$} (qNm3)
  (qNm3) edge[bend right] node[below,pos=0.1] {$0.5$} (qN)
  (qNm3) edge[bend right] node[above] {$0.5$} (q1)
  (q1) edge[bend right] node[below,pos=0.1] {$0.5$} (qN)
  (qNp3) edge[bend right] node[below] {$0.5$} (q2N)
  (q1) edge[bend right] node[above,pos=0.3] {$\{\mathtt{target}\},0.5$} (q0)
  (q0) edge[loop above] node[xshift=0.5cm] {$\{\mathtt{target}\},1.0$} (q0)
  (q2N) edge[loop above] node {$1.0$} (q2N)
  ;
 \end{tikzpicture}
 \caption{The transition-labeled MC in $\haddad$. The initial state $\state_{101}$ was added during the construction from a state-labeled MC, and the number of states is one greater than the number shown in \cref{table:benchmark_summary}. Labels other than $\{\mathtt{target}\}$ are omitted.}%
 \label{figure:mc_haddad}
\end{figure*}

\begin{figure}[tbp]
 \centering
 \begin{tikzpicture}[auto, semithick,node distance=1.5cm,scale=0.80,every node/.style={initial text={},transform shape}]
  \node[state, initial] (q0) at (-3.75,0) {$\state_{101}$};
  \node[state] (q1) at (0,1.5) {$\state_{1}$};
  \node[state] (q2) at (0,-1.5) {$\state_{2}$};
  \node[state] (q3) [node distance=1.7cm,right=of q1] {$\state_{99}$};
  \node[state] (q4) [node distance=1.7cm,right=of q2] {$\state_{100}$};

  \path[->]
  (q0) edge node[above left] {$0.7$} (q1)
  (q0) edge node[below left] {$0.3$} (q2)
  (q1) edge[loop left] node {$0.7 \times \bigl(1 - (1/2)^{49}\bigr)$} (q1)
  (q1) edge[bend left=10] node {$0.3 \times \bigl(1 - (1/2)^{49}\bigr)$} (q2)
  (q1) edge node[above=0.5cm] {$\{\mathtt{target}\}, (1/2)^{49}$} (q3)
  (q2) edge[bend left=10] node {$0.7 \times \bigl(1 - (1/2)^{49}\bigr)$} (q1)
  (q2) edge[loop left] node {$0.3 \times \bigl(1 - (1/2)^{49}\bigr)$} (q2)
  (q2) edge node {$(1/2)^{49}$} (q4)
  (q3) edge[loop right] node[above=0.4cm] {$\{\mathtt{target}\}, 1$} (q3)
  (q4) edge[loop right] node {$1$} (q4)
  ;
 \end{tikzpicture}
 \caption{The MC abstracted with respect to $\formula^{1}_{\haddad}$ with $\MCCSBound = 3$. Labels other than $\{\mathtt{target}\}$ are omitted.}%
 \label{figure:product_mc_haddad}
\end{figure}

\revision{Beyond improving model-checking efficiency}, we observe that specification-guided \revision{path shortcutting} can provide a concise MC focusing on the behaviors relevant to the given specification.
For instance, \cref{figure:mc_haddad} shows the transition-labeled MC for $\haddad$ and \cref{figure:product_mc_haddad} shows the MC abstracted with respect to $\formula^{1}_{\haddad}$ with $\MCCSBound = 3$.
The abstracted MC has only five states and is much easier for humans to interpret than the original MC with 102 states.

The abstracted MC preserves the satisfaction probability of $\formula^{1}_{\haddad}$ and can be used for understanding the behaviors relevant to $\formula^{1}_{\haddad}$.
For instance, we can easily see that once $\mathtt{target}$ holds, the system will stay in the state where $\mathtt{target}$ holds with probability 1.
This abstracted MC also preserves \revision{other aspects of the behavior}.
For instance, from the non-terminal left-hand side states (\ie{} $\state_{1}, \state_3, \dots, \state_{97}$) in \cref{figure:mc_haddad}, which correspond to the state $\state_{1}$ in \cref{figure:product_mc_haddad}, we have a high probability to stay in the same group of non-terminal states, a small probability to move to the other group of non-terminal states, and a very small probability to move to the terminal state.
Although abstracted MCs are not always as small as this illustration,
we believe such abstracted MCs are potentially useful, for example, for understanding and debugging the system.

\section{Related work}\label{section:related}
The work most closely related to ours is that of Matsumoto et al.~\cite{Matsumoto0SW25}, which proposes a black-box checking method based on a specification-guided abstraction for (deterministic) Mealy machines.
Their specification-guided abstraction with respect to an LTL formula constructs quotient Mealy machines tailored to equivalence relations over output \emph{characters} by a generalized $\mathrm{L}^\ast$-algorithm~\cite{Angluin87}.
In contrast, we develop a novel specification-guided \revision{path shortcutting method} for probabilistic systems, namely MCs, based on equivalence relations over \emph{acceptance-preserving words}.

Bisimulation minimization is a well-known preprocessing technique for speeding up probabilistic model checking of Markov chains~\cite{KatoenKZJ07},
and \storm{} supports \emph{symbolic bisimulation minimization}~\cite{Wimmer2010,DijkP18} as an efficient preprocessing~\cite{Hensel18,HenselJKQV22}.
Such \emph{symbolic methods} with decision diagrams for our specification-guided \revision{path shortcutting} would be highly beneficial to speed up probabilistic model checkers; we leave this as future work.

\reviewer{meta}{Related work: Explicitly distinguish the proposed approach from prior specification-guided abstraction work (Dutreix et al; Cleaveland et al.).}

\revision{Cleaveland et al.~\cite{CleavelandRSL22} propose specification-guided abstraction for statistical model checking. Specifically, they introduce a novel sound technique for eliminating some nondeterministic branching from probabilistic automata that abstract target systems, which is beneficial for achieving high performance. Their technique is orthogonal to ours: we aim to reduce the size of the state space by inspecting MCCSs, whereas they aim to remove redundant nondeterminism under their monotonicity assumption. }

\revision{Another well-studied technique for abstracting models with respect to a given specification is \emph{abstraction-refinement} (or \emph{CEGAR}) for probabilistic model checking, as in~\cite{ChadhaV10,DutreixC21}. Our abstraction is a one-shot preprocessing step, whereas these refinement approaches iteratively refine models based on counterexamples found during the model checking phase.  }  

\emph{Compositional probabilistic model checking}~\cite{KwiatkowskaNPQ13,WatanabeEAH23,WatanabeVHRJ24,DelahayeCL10} aims at providing an efficient compositional algorithm by exploiting given compositional structures of stochastic systems such as parallel compositions and sequential compositions.
This exploitation of compositionality is an orthogonal approach to avoid the state-space explosion problem.
One interesting future direction is to lift our specification-guided \revision{path shortcutting} techniques to a compositional specification-guided abstraction that is sound with respect to compositions.

\finalversion{
\emph{Path abstraction}~\cite{AbrahamJWKB10,HartmannsM25} for MCs is an abstraction technique that collapses certain paths into one-step transitions and assigns each resulting transition a probability equal to the sum of the probabilities of the corresponding paths.
Although path abstraction can also reduce the number of states, our novel abstraction is applied before product construction and is guided by a given LTL specification.
}

\section{Conclusion and perspectives}\label{section:conclusion}

We presented a specification-guided \revision{path shortcutting} method for probabilistic model checking of MCs.
Given an LTL formula $\formula$, our abstraction exactly preserves the satisfaction probability of $\formula$, thus serving as a new preprocessing technique prior to running probabilistic model checking algorithms.
We implemented a prototype using \storm{} and Spot, and evaluated its empirical performance against \storm{}.
We demonstrate that our prototype often outperforms the baseline, with especially strong gains on several LTL formulas with multiple temporal operators, as is often the case in the verification of embedded systems that require multiple constraints expressed as conjunctions of specifications.

In addition to the future work mentioned in~\cref{section:related}, we plan to generalize our \revision{abstraction} to other types of $\omega$-automata commonly used in probabilistic model checking (\eg{} limit-deterministic Büchi automata~\cite{SickertEJK16,HahnLST015} and unambiguous Büchi automata~\cite{BaierK00023}).
For instance, the product construction of Markov chains and unambiguous Büchi automata~\cite{BaierK00023} yields weighted systems that may not be stochastic; thus, our current proof, which exploits the properties of BSCCs, is not directly applicable.

It would also be exciting to provide a systematic and uniform framework for generalized specification-guided \revision{path shortcutting} across a variety of systems, including nondeterministic and probabilistic systems \revision{(\eg{} Markov decision processes)}.
Developing a coalgebraic framework for our\LongVersion{ specification-guided \revision{path shortcutting}}\ShortVersion{ \revision{abstraction}}, building on existing coalgebraic product constructions such as~\cite{CirsteaK23,WatanabeJRH25}, would be a promising direction.

\section*{Acknowledgment}

The authors disclose that OpenAI ChatGPT was used for language polishing of selected parts of the manuscript.
OpenAI Codex was used for generating an early version of parts of the implementation used in the experiments.
The generated code was reviewed and verified by the authors.
All scientific content, technical claims, experimental results, and conclusions were verified and finalized by the authors.

\ifdefined\VersionLong
	\newcommand{\CCIS}{Communications in Computer and Information Science}
	\newcommand{\ENTCS}{Electronic Notes in Theoretical Computer Science}
	\newcommand{\FAC}{Formal Aspects of Computing}
	\newcommand{\FundInf}{Fundamenta Informaticae}
	\newcommand{\FMSD}{Formal Methods in System Design}
	\newcommand{\IJFCS}{International Journal of Foundations of Computer Science}
	\newcommand{\IJSSE}{International Journal of Secure Software Engineering}
	\newcommand{\IPL}{Information Processing Letters}
	\newcommand{\JAIR}{Journal of Artificial Intelligence Research}
	\newcommand{\JLAP}{Journal of Logic and Algebraic Programming}
	\newcommand{\JLAMP}{Journal of Logical and Algebraic Methods in Programming} %
	\newcommand{\JLC}{Journal of Logic and Computation}
	\newcommand{\LMCS}{Logical Methods in Computer Science}
	\newcommand{\LNCS}{Lecture Notes in Computer Science}
	\newcommand{\RESS}{Reliability Engineering \& System Safety}
	\newcommand{\RTS}{Real-Time Systems}
	\newcommand{\SCP}{Science of Computer Programming}
	\newcommand{\SOSYM}{Software and Systems Modeling} %
	\newcommand{\STTT}{International Journal on Software Tools for Technology Transfer}
	\newcommand{\TCS}{Theoretical Computer Science}
	\newcommand{\TOPLAS}{{ACM} Transactions on Programming Languages and Systems} %
	\newcommand{\ToPNoC}{Transactions on {P}etri Nets and Other Models of Concurrency}
	\newcommand{\TOSEM}{{ACM} Transactions on Software Engineering and Methodology} %
	\newcommand{\TSE}{{IEEE} Transactions on Software Engineering}
\else
	\newcommand{\CCIS}{CCIS}
	\newcommand{\ENTCS}{ENTCS}
	\newcommand{\FAC}{FAC}
	\newcommand{\FundInf}{FI}
	\newcommand{\FMSD}{FMSD}
	\newcommand{\IJFCS}{IJFCS}
	\newcommand{\IJSSE}{IJSSE}
	\newcommand{\IPL}{IPL}
	\newcommand{\JAIR}{JAIR}
	\newcommand{\JLAP}{JLAP}
	\newcommand{\JLAMP}{JLAMP}
	\newcommand{\JLC}{JLC}
	\newcommand{\LMCS}{LMCS}
	\newcommand{\LNCS}{LNCS}
	\newcommand{\RESS}{RESS}
	\newcommand{\RTS}{RTS}
	\newcommand{\SCP}{SCP}
	\newcommand{\SOSYM}{{SoSyM}}
	\newcommand{\STTT}{STTT}
	\newcommand{\TCS}{TCS}
	\newcommand{\TOPLAS}{ToPLAS}
	\newcommand{\ToPNoC}{ToPNOC}
	\newcommand{\TOSEM}{ToSEM}
	\newcommand{\TSE}{TSE}
\fi

\bibliography{ref}
\bibliographystyle{IEEEtran}

\begin{ShortVersionBlock}
 \begin{IEEEbiography}[{\includegraphics[width=1in,height=1.25in,clip,keepaspectratio]{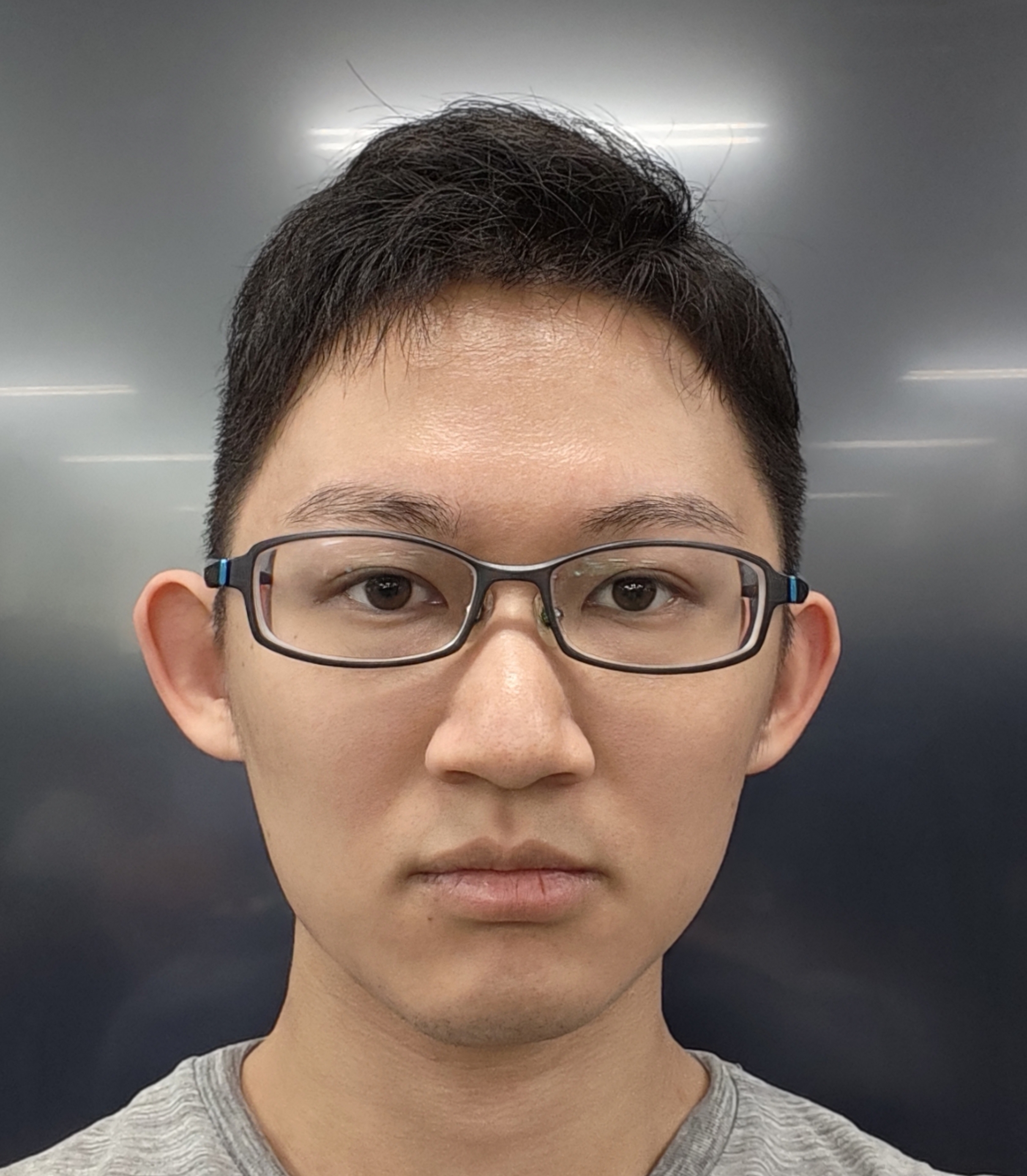}}]{Tsubasa Matsumoto} Tsubasa Matsumoto received the M.S. degree in informatics from Kyoto University, Kyoto, Japan, in 2026. He is currently working as a software engineer in the industry. \end{IEEEbiography}

 \begin{IEEEbiography}[{\includegraphics[width=1in,height=1.25in,clip,keepaspectratio]{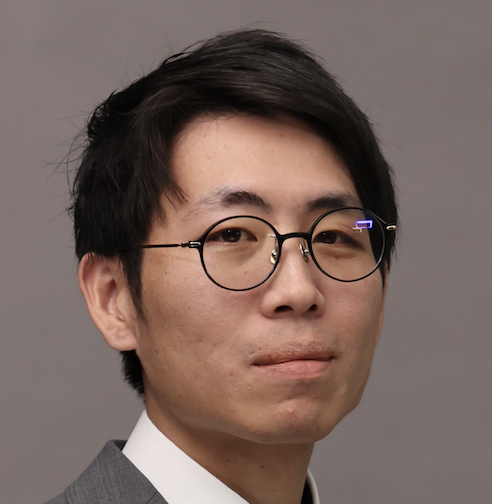}}]{Kazuki Watanabe}  Kazuki Watanabe, Ph.D., is an Assistant Professor at the National Institute of Informatics, Japan.
 He received his Ph.D.\ in Informatics from the Graduate University for Advanced Studies (SOKENDAI), Tokyo.
 His research focuses on model checking, program verification, and applied category theory. \end{IEEEbiography}

 \begin{IEEEbiography}[{\includegraphics[width=1in,height=1.25in,clip,trim=2.7cm 0cm 2.7cm 0cm,keepaspectratio]{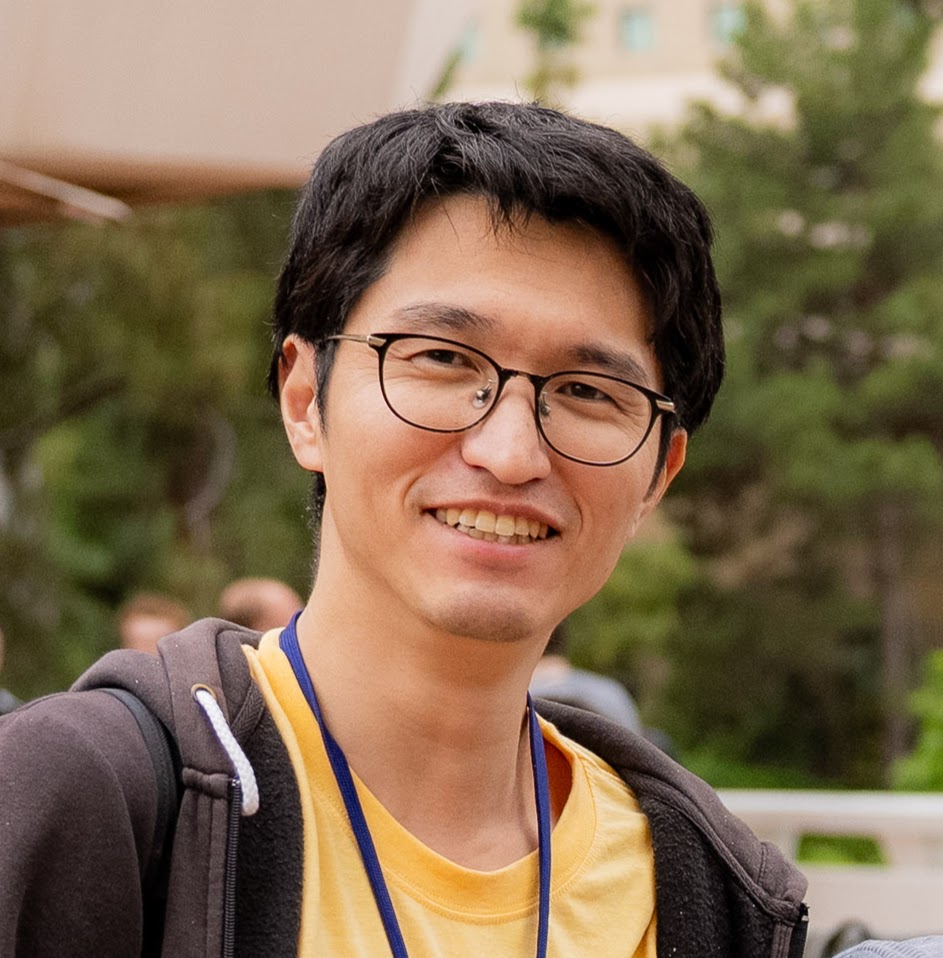}}]{Masaki Waga}
 Masaki Waga, Ph.D., is an Assistant Professor at the Graduate School of Informatics, Kyoto University, Japan.
 He received his Ph.D.\ in Informatics from the Graduate University for Advanced Studies (SOKENDAI), Tokyo, earning the Dean's Award in 2020 for his doctoral research.
 His research focuses on formal methods for cyber-physical and AI systems, spanning automata theory, automata learning, runtime verification, testing, and model checking.
 \end{IEEEbiography}
\end{ShortVersionBlock}

\begin{LongVersionBlock}

\appendix

\subsection{Rationale Behind Design Choices in \ourTool{}}

\subsubsection{Product Construction within \ourTool{}}

Although it is technically possible to export the reduced MC produced by our abstraction and give it to \storm{} with the verified LTL formula,
\ourTool{} exports the product of the reduced MC and the DRA corresponding to the given LTL formula.
This choice avoids the translation from \emph{transition-labeled} MCs to \emph{state-labeled} MCs, which increases the number of states by a factor of $|\Sigma|$ in the worst case, where $\Sigma$ is the alphabet of the DRA.
Note that \ourTool{} uses \storm{}'s DRN format as the format of MCs, which only supports state-labeled MCs.

In \revision{\ourTool{}}, we take a product of the reduced \emph{transition-labeled} MC and the DRA.
Since both have labels in $\Sigma = 2^{\AP}$ on transitions, this product construction \revision{avoids the additional $|\Sigma|$-factor state increase that would arise from translating transition labels to state labels}.
We then export a state-labeled DRN encoding of this product MC.
Moreover, we use DRAs with state-based acceptance conditions, and the labels corresponding to the acceptance condition can be naturally attached to the states of the product MC.

\subsubsection{Translation of State-labeled to Transition-labeled MCs}

Before conducting the abstraction,
we internally translate the given state-labeled MC to a transition-labeled MC.\@
There are two natural conventions for this translation: moving the label of a state to its outgoing or incoming edges.
We adopt the latter convention, because erasability is defined by the replaceability of incoming edges, and aligning the outputs on incoming edges usually makes more states erasable.

\end{LongVersionBlock}

\ifdefined\VersionWithComments
\subsection{Additional information for us, the authors}

\instructions{(EMSOFT'26-revision) You may use up to 14 pages in the IEEE TCAD format for the revised submission. Everything counts toward the page limit, including references and appendices (if any). The firm deadline for the revised version is JUNE 19, 2026, AoE.}
\todo{submit the revised paper without highlighting and provide a ``latexdiff'' version between the original submission and the revised manuscript.}
\mw{Let's use American English for spelling.}
\mw{We call the ``nodes'' of MCs ``states'', while we call those of DRAs ``positions''.}
\mw{Let's write ``$\trace_1$ and $\trace_2$ are compatible with respect to $\automaton$'' if we have $\trace_1 \sim_{\automaton} \trace_2$.}

\setcounter{tocdepth}{1}
\listoftodos{}
\fi

\ifdefined\WithReply%
	\clearpage
	\newpage

\newcounter{reviewer}
\newcounter{comment}

\setcounter{reviewer}{0}

\newcounter{questionctr}

\newenvironment{myquestion}{%
	\noindent%
	\refstepcounter{comment}%
	\textbf{Comment~\thereviewer.\thecomment}%
	\newline%
}{\par}  %

\newcommand{\questionResponse}[4]{
    \tikzstyle{reviewerQuestion} = [
     draw=black,fill=#3,text=black,
     line width=0.5pt,
     text width = \linewidth - 1.0 ex - 1pt,
     inner sep = 0.8 ex,
     rounded corners=3pt]
     {\centering
     \begin{tikzpicture}%
      \footnotesize
      \draw node[reviewerQuestion]{#4: \em #1};%
     \end{tikzpicture}}
			\noindent \textbf{Response: }
			#2\xspace\hfill$\blacksquare$

		\smallskip
}

\definecolor{boxMeta}{rgb}{.74, 0.6, 1}
\newcommand{\questionResponseMeta}[2]{\questionResponse{#1}{#2}{boxMeta}{Meta-Review}}

\definecolor{boxOne}{rgb}{0.74, 1, .74}
\newcommand{\questionResponseOne}[2]{\questionResponse{#1}{#2}{boxOne}{Reviewer 1}}

\definecolor{boxTwo}{rgb}{1, .74, .74}
\newcommand{\questionResponseTwo}[2]{\questionResponse{#1}{#2}{boxTwo}{Reviewer 2}}

\definecolor{boxThree}{rgb}{1, 1, 0.64}
\newcommand{\questionResponseThree}[2]{\questionResponse{#1}{#2}{boxThree}{Reviewer 3}}

\newcommand{\commentTitle}[1]{%
	\addtocounter{reviewer}{1}
	\setcounter{comment}{0}

    \section*{#1}

}

\newcommand{\crefpageref}[1]{\cref{#1}, page~\pageref{#1}}

\newcommand{\citeNewTextGen}[2]{
	\begin{center}

		\begin{tabular}{|p{.8\textwidth}} %
			\small{

			``#1''

			\hfill[#2]
			}
		\end{tabular}
	\end{center}
}

\newcommand{\citeNewText}[2]{\citeNewTextGen{#1}{\crefpageref{#2}}}

\normalsize

\newpage
\instructions{Prepare a response document (up to 2 pages) explaining how you have addressed the issues raised during the first review round in your revision. The response must use the same double-column IEEE TCAD format as the paper.}
\begin{center}
	{\large \textbf{Response Document}}
	\addcontentsline{toc}{chapter}{Response Document}
    \smallskip
\end{center}
\noindent
We are grateful to the reviewers for their constructive and insightful feedback.
Following their suggestions, we have substantially revised the manuscript, particularly by adding empirical evaluations that clarify the performance impact of specification-guided path shortcutting.
In the revised manuscript, major changes are highlighted in \revision{red}.

\section*{Comments from the Meta-Review}

We first address the comments from the meta-review.

\smallskip

\questionResponseMeta{%
Title: Revise to be more specific about the solution mechanism.%
}{%
We revised the title to ``Specification-Guided Path Shortcutting for Efficient Probabilistic Model Checking'' to more clearly reflect the main idea of the proposed abstraction.
We also revised the paper to use this terminology (\ie{} specification-guided path shortcutting) throughout the manuscript.%
}

\questionResponseMeta{%
Ablation: Isolate the contribution of specification-guided abstraction from confounding factors (DRA product encoding, label reduction). The current experiments do not cleanly attribute performance gains.}{%
To isolate the contribution of specification-guided path shortcutting, we added a new research question (RQ2) and conducted an ablation study.
The results are shown in \cref{table:ablation_study_results}.
In these new experiments, we introduced a ``No Abstraction'' configuration, which uses the same workflow as \ourTool{} but disables specification-guided path shortcutting.
Thus, this configuration still performs the DRA product construction and label reduction, but does not apply our proposed abstraction.

The results show that the DRA product construction and label reduction already improve performance in many cases, as ``No Abstraction'' is often faster than the \storm{} baseline.
We also found that specification-guided path shortcutting is not always beneficial: in some cases, its overhead, or the increase in the number of product transitions, can outweigh the benefit of reducing states.

At the same time, the ablation study shows that specification-guided path shortcutting can provide additional speedups when it substantially reduces the state space.
For example, on the \nand{} benchmark, ``Ours'' is significantly faster than ``No Abstraction'', reducing the end-to-end runtime from about 1,065 seconds to about 494 seconds.

In the revision, we discuss these limitations and benefits of specification-guided path shortcutting in \cref{section:experimens:ablation_study}.
}

\questionResponseMeta{%
Preprocessing: Add empirical comparison against preprocessing baselines to assess the relative contribution of specification-guided abstraction. The authors committed to this in their response.%
}{%
We thank the reviewers for pointing out the need for an empirical comparison with preprocessing baselines.
In the revision, we added a new research question (RQ3) along with experiments to evaluate the interaction between our abstraction and bisimulation minimization.
Specifically, in \cref{section:experimens:bisimulation_minimization,table:ablation_study_results}, we now report the performance of three workflows with bisimulation minimization as a preprocessing step: ``\storm{}+Bisim'', ``Ours+Bisim'', and ``No Abstraction+Bisim''.

The new results indicate that bisimulation minimization can further improve the performance of \ourTool{} in several cases, which suggests that it is complementary to specification-guided path shortcutting.
At the same time, the effect of bisimulation minimization alone is often smaller than that of our abstraction, as shown in \cref{table:benchmark_properties,table:ablation_study_results}.
For example, on $\formula^{1}_{\nand}$, bisimulation minimization does not substantially improve the \storm{} baseline, whereas specification-guided path shortcutting reduces the end-to-end time to about 493 seconds.
This reflects our observation in \cref{section:introduction}: bisimulation is property-independent and often too strong for many realistic MCs, and thus, bisimulation minimization cannot significantly reduce the state space in many cases; in contrast, specification-guided path shortcutting can remove states that are irrelevant to the particular property, which can lead to more substantial reductions.

We added \cref{section:experimens:bisimulation_minimization} accordingly and answer RQ3 as follows: bisimulation minimization can further improve the efficiency of \ourTool{}, but its influence is usually not as significant as that of specification-guided path shortcutting.
We also clarified in the discussion that the two techniques are not competing alternatives; rather, they can be combined, and our experiments indicate that applying bisimulation after specification-guided path shortcutting can be beneficial.
}

\questionResponseMeta{%
Choice of K: Provide practical guidance for selecting the MCCS bound. The current justification is insufficient. Extend the sensitivity evaluation to a broader range of K values.%
}{%
To provide more practical guidance for selecting the MCCS bound $\MCCSBound$, we extended the experiments by increasing both the number of tested properties and the range of $\MCCSBound$ values.
Since $\MCCSBound$ controls the maximum exploration depth during MCCS construction, a larger $\MCCSBound$ can, in principle, discover longer shortcuts, but it can also increase preprocessing time.
On the benchmarks we tested, increasing $\MCCSBound$ did not change the number of reduced states, while it sometimes substantially increased preprocessing time.
These results suggest that a small to moderate value, such as $\MCCSBound = 2$ or $\MCCSBound = 3$, seems to be a practical default, at least for these benchmarks.
We revised \cref{section:experimens:parameter_sensitivity} to clarify the above discussion and provide a suggestion on the choice of $\MCCSBound$.
}

\questionResponseMeta{%
Related work: Explicitly distinguish the proposed approach from prior specification-guided abstraction work (Dutreix et al; Cleaveland et al.).%
}{%
We revised \cref{section:related} to clarify how our approach differs from prior specification-dependent abstraction techniques, in particular those of Cleaveland et al.~\cite{CleavelandRSL22} and Dutreix and Coogan~\cite{DutreixC21}.
Specifically, we now clarify that the work by Cleaveland et al.\ abstracts probabilistic automata by eliminating nondeterministic branching based on the specification under a monotonicity assumption, which is orthogonal to our approach.

We also clarify that the technique developed by Dutreix and Coogan is an abstraction-refinement approach for probabilistic verification~\cite{DutreixC21}: their approach iteratively refines an abstraction based on counterexamples found during model checking, whereas our approach is a one-shot path-shortcutting transformation applied before model checking.

In the revision, we position our contribution as orthogonal to these prior methods by comparing our approach with theirs in \cref{section:related}.
}

\section*{Questions and comments by each reviewer}

Below, we provide detailed responses and clarifications to address the questions and concerns raised by the reviewers, focusing on issues not addressed during the Q\&A Week or in the meta-review response above.

\commentTitle{comments by Reviewer 2}

\questionResponseTwo{%
Line 196, I don't know how to pronounce that character and that made it a little difficult to read for me. Either use a more standard character, or tell the reader how to pronounce it.
}{%
We have revised the notation to use the character $\sigmaAlg$ for $\sigma$-algebras.
}

\questionResponseTwo{%
Definition 5: I could have used a little more intuition for BSCC: essentially (if I understand correctly) it's a subset of nodes that is strongly connected and from which you can't get out.
}{%
We agree, and we have included an intuition just before Definition 5. 
}

\questionResponseTwo{%
Lemma 6: could add more intuition: ``almost surely you end up in a BSCC''
}{%
We have included an explanation above Lemma 6. 
}

\questionResponseTwo{%
Definition 23: could add more intuition: essentially you can erase a state if you can erase all its incoming edges. Similarly Algorithm 1 just tries to erase every state one by one.
}{%
We agree, and we have added an intuition before Definition 23. There, we also explain how this notion is directly used in our algorithm.
}

\commentTitle{Comments by Reviewer 3}

\questionResponseThree{%
[page 8, left-line 574-578]: The scalability of DRA construction and its interaction with abstraction deserves more discussion. In particular, how does the approach behave for complex LTL formulas that induce large automata?
}{%
While a systematic study of DRA size is beyond the scope of this revision, the ``Prep. Time'' column for the ``No Abstraction'' configuration in \cref{table:ablation_study_results} partially reflects the cost of DRA construction and product construction.
We note that this does not isolate the execution time of these constructions because it also includes the time for serialization (\ie{} reading and writing MCs in the DRN format).
It appears that preprocessing indeed takes longer for larger LTL formulas (\eg{} $\formula^{11}_{\brp}$ and $\formula^{10}_{\haddad}$).
A systematic evaluation on larger LTL formulas would be an interesting direction for future work.
}

\questionResponseThree{%
[page 9, experiments]: From a systems perspective, it would be interesting to see integration with symbolic representations (e.g., BDD-based MCs) or extensions beyond Markov chains (e.g., MDPs).
}{%
Thank you for these suggestions. We agree that symbolic path shortcutting, \eg{} using BDD-based representations, and extensions beyond Markov chains, especially to MDPs, are important future directions.
We clarified these directions in \cref{section:related,section:conclusion}.
}

\questionResponseThree{%
[page 12]: Why not add the appendix to the main text?
}{%
In the appendix, we describe several engineering design choices made in the implementation of \ourTool{}, such as the convention used in constructing a transition-labeled MC from a state-labeled MC.\@
These details are more implementation-specific than the main technical contribution, but they may be useful for gaining a better understanding of our implementation.
For this reason, we believe the appendix is the appropriate place for these implementation details, while the main text remains focused on the core method and evaluation.
}

\fi

\end{document}